\numberwithin{equation}{section}
\newtheorem{theorem}[equation]{Theorem}
\newtheorem*{theorem*}{Theorem (informal)}
\newtheorem*{theorem*thm:evalabp}{\ref{thm:evalabp}~Theorem (informal)}
\newtheorem*{theorem*thm:oursaxenaduality}{\ref{thm:oursaxenaduality}~Theorem (informal)}
\newtheorem*{theorem*thm:evaltreewidth}{\ref{thm:evaltreewidth}~Theorem (informal)}
\newtheorem*{theorem*thm:evaluationhardness}{\ref{thm:evaluationhardness}~Theorem (informal)}
\newtheorem*{theorem*thm:evaluationsharpphardness}{\ref{thm:evaluationsharpphardness}~Theorem (informal)}
\newtheorem*{theorem*thm:semistdhardness}{\ref{thm:semistdhardness}~Theorem (informal)}
\newtheorem{corollary}[equation]{Corollary}
\newtheorem{lemma}[equation]{Lemma}
\newtheorem{proposition}[equation]{Proposition}
\newtheorem{question}[equation]{Question}
\theoremstyle{definition}
\newtheorem{remark}[equation]{Remark}
\newtheorem{definition}[equation]{Definition}
\title{On the complexity of evaluating highest weight vectors}
\author{Markus Bläser\footnote{mblaeser@cs.uni-saarland.de (Saarland University, Germany)}, Julian Dörfler\footnote{jdoerfler@cs.uni-saarland.de (Saarbrücken Graduate School of Computer Science, Germany)}, Christian Ikenmeyer\footnote{christian.ikenmeyer$@$liverpool.ac.uk (University of Liverpool, United Kingdom. Part of this research was done when CI was at the Max Planck Institute for Software Systems, Germany, and the Simons Institute for the Theory of Computing, United States. CI was supported by DFG grant IK 116/2-1.)}}
\newcommand{\Sym}{\textup{Sym}}
\newcommand{\GL}{\mathsf{GL}}
\newcommand{\HWV}{\mathsf{HWV}}
\newcommand{\mult}{\mathsf{mult}}
\newcommand{\IC}{\mathbb{C}}
\newcommand{\IN}{\mathbb{N}}
\newcommand{\IR}{\mathbb{R}}
\newcommand{\CT}{\mathcal{T}}
\newcommand{\CF}{\mathcal{F}}
\newcommand{\CW}{\mathcal{W}}
\newcommand{\aS}{\mathfrak{S}}
\newcommand{\ncw}{\textup{\textsf{ncw}}}
\newcommand{\la}{\lambda}
\newcommand{\per}{\text{per}}
\newcommand{\PP}{\ensuremath{\mathsf{P}}}
\newcommand{\NP}{\ensuremath{\mathsf{NP}}}
\newcommand{\VBP}{\ensuremath{\mathsf{VBP}}}
\newcommand{\VWaring}{\ensuremath{\mathsf{VWaring}}}
\newcommand{\ETH}{\ensuremath{\textup{ETH}}}
\newcommand{\Ch}{\textup{\textsf{Ch}}}
\newcommand{\SAT}{\ensuremath{\mathsf{3SAT}}}
\newcommand{\Thor}{{\ensuremath{\hat{T}_{\leftrightarrow}}}}
\newcommand{\Tvert}{{\ensuremath{\hat{T}_{\updownarrow}}}}
\newcommand{\Ehor}{{\ensuremath{E_{\leftrightarrow}}}}
\newcommand{\Evert}{{\ensuremath{E_{\updownarrow}}}}
\newcommand{\WR}{{\mathsf{WR}}}
\newcommand{\wW}{{\mathsf{w}}}
\newcommand{\WRbar}{\underline{\mathsf{WR}}}
\DeclareMathOperator{\rank}{rank}
\DeclareMathOperator{\diag}{diag}
\DeclareMathOperator{\poly}{poly}
\DeclareMathOperator{\spn}{span}
\begin{document}
\sloppy

\maketitle
\thispagestyle{empty}\setcounter{page}{0}
\begin{abstract}
Geometric complexity theory (GCT) is an approach towards separating algebraic complexity classes through algebraic geometry and representation theory.
Originally Mulmuley and Sohoni proposed (SIAM J Comput 2001, 2008) to use occurrence obstructions to prove Valiant's determinant vs permament conjecture, but recently B\"urgisser, Ikenmeyer, and Panova (Journal of the AMS 2019) proved this impossible.
However, fundamental theorems of algebraic geometry and representation theory grant that every lower bound in GCT can be proved by the use of so-called highest weight vectors (HWVs).
In the setting of interest in GCT (namely in the setting of polynomials) we prove the NP-hardness of the evaluation of HWVs in general, and we give efficient algorithms if the treewidth of the corresponding Young-tableau is small, where the point of evaluation is concisely encoded as a noncommutative algebraic branching program! In particular, this gives a large new class of separating functions that can be efficiently evaluated at points with low (border) Waring rank.
As a structural side result we prove that border Waring rank is bounded from above by the ABP width complexity.
\end{abstract}

\medskip

\noindent\footnotesize
\begin{minipage}{2cm}
{\textbf{Keywords: }}\\\mbox{~}
\end{minipage}
\begin{minipage}{13cm}
Algebraic complexity theory, geometric complexity theory, algebraic branching program, Waring rank, border Waring rank, representation theory, highest weight vector, treewidth
\end{minipage}

\newpage
\pagestyle{plain}

\section{Introduction}
\label{sec:intro}
Geometric complexity theory (GCT) is an approach towards the separation of algebraic complexity classes using algebraic geometry and representation theory \cite{gct1, gct2, BLMW:11}.
Let $\per_i := \sum_{\pi\in\aS_i} \prod_{j=1}^i x_{j,\pi(j)}$ be the permanent polynomial. Valiant 
asked for the smallest size of a matrix $A$ whose entries are affine linear polynomials such that $\det(A)=\per_i$ and his famous $\text{VBP}\neq\text{VNP}$ conjecture (also known as the ``determinant vs permanent conjecture'') states that this size is not polynomially bounded.
Mulmuley and Sohoni strengthened the conjecture by allowing $\per_i$ to be approximated arbitrarily closely, i.e., $\text{VNP}\not\subseteq\overline{\text{VBP}}$.
This question can be attacked with GCT.

In the GCT approach, we set $m:=d^2$ and let the group $\GL_{m}:=\GL(\IC^m)$ act on a the space of homogeneous degree $d$ polynomials in $m$ variables by linear transformation of the variables. The Mulmuley--Sohoni conjecture can
be rephrased as ``eventually $x_{11}^{d-i} \per_i \notin \overline{\GL_m \det_d}$'' if $d$ grows polynomially in~$i$. Now we try to attack this problem by representation theoretic methods, so-called \emph{obstructions}.
A first crucial insight is that $x_{11}^{d-i} \per_i \in \overline{\GL_m \det_d}$
iff
$\overline{\GL_m (x_{11}^{d-i} \per_i)} \subseteq \overline{\GL_m \det_d}$. Thus, we compare two varieties and we want to disprove that
the orbit closure of the padded permanent is contained in the orbit 
closure of the determinant for polynomially large $d$.
To to so, an important object to study are so-called \emph{highest weight vectors} (HWVs) of weight $\la \in \IN^m$.
They are homogeneous degree $n$ polynomials in the coefficients of homogeneous degree $d$ polynomials in $m$ variables, satisfying two properties (see Sec.~\ref{sec:prel}). Their dimension is called the plethysm coefficient. The dimension of their restriction to a $\GL_m$-variety $X$ is called the \emph{multiplicity $\mult_\la \IC[X]$ of $\la$ in the coordinate ring $\IC[X]$}. They are important, because if $\mult_\la \IC[X] > \mult_\la \IC[Y]$, then Schur's lemma implies that $X \not\subseteq Y$. In this case, $\la$ is called a \emph{multiplicity obstruction}.
If additionally $\mult_\la \IC[X] > 0 = \mult_\la \IC[Y]$, then $\la$ is called an \emph{occurrence obstruction}.
Even more fundamentally, the properties of the representation theory of $\GL_m$ imply that if $x_{11}^{d-i} \per_i \notin \overline{\GL_m \det_d}$, then there exists a HWV $f$ such that
$f(\overline{\GL_m \det_d})=\{0\}$
and for a random $g \in \GL_m$ we have $f(g(x_{11}^{d-i} \per_i)) \neq 0$.
So this separation is always provable by HWVs.
This follows from the fact that HWVs uniquely classify the irreducible representations of $\GL_m$.

B\"urgisser et al.~\cite{BIP:19} proved that occurrence obstructions are not sufficient to prove Mulmuley and Sohoni's conjecture.
Hence, multiplicity obstructions are a focus of recent research \cite{DBLP:conf/icalp/DorflerIP19, IK:19}. To compute multiplicities,
it is import to understand the complexity of the evaluation of highest weight vectors.  

To calculate a multiplicity $\mult_\la \IC[X]$, a common approach is to generate a basis of all HWVs of weight $\la$ and evaluate them at enough points from $X$ (points from all $\GL_m$-varieties in GCT are efficiently samplable) and observe the dimension of their linear span, which equals $\mult_\la \IC[X]$. For this to work, one needs an algorithm to evaluate HWVs at points.
An evaluation algorithm is even more important to make the following approach work:
We know that if $X \not\subseteq Y$, then there exists a HWV $f$ of some weight $\la$ such that $f(Y)=\{0\}$ and $f(x) \neq 0$ for almost all points $x \in X$ \cite[Cor.~11.4.2]{BI:17}.
This evaluation is a challenging problem in algebraic geometry that is related to deep combinatorics, see~\cite{LK:15, cheung2017symmetrizing, AIR:16}.

\section{Our contributions}

To our best knowledge, we systematically study the complexity of evaluating highest weight vectors for the first time.
In Section~\ref{sec:base} we first present a known combinatorial method of exactly evaluating HWVs without expanding all the monomials explicitly which has been used to to evaluate HWVs at points of small Waring rank as in \cite{AIR:16, BI:17b}.
Additionally there have been attempts to improve the running time for evaluating at products of linear forms -- the so called Chow variety -- via dynamic programming \cite{DBLP:conf/icalp/DorflerIP19}.
We generalize both approaches in Section~\ref{sec:abp} to allow evaluation on all points with partial derivative spaces of small dimension, i.e., small noncommutative algebraic branching program width complexity.
\begin{theorem*thm:evalabp}
The evaluation of a degree $n$ highest weight vector $f_{\hat T}$ (given by a Young tableau ${\hat T}$ with $r$ rows) at a homogeneous degree $d$ polynomial $p$ in $m$ variables whose noncommutative ABP width complexity is at most $w$ can be computed in time $O(w^{n + r}\poly(n,d,m))$.
\end{theorem*thm:evalabp}
In particular, by Theorem~\ref{thm:oursaxenaduality} this includes for the first time all points of small border Waring rank:
\begin{theorem*thm:oursaxenaduality}
For all polynomials $f$ the noncommutative ABP width of $f$ is less or equal to the border Waring rank of $f$. This also holds for commutative ABP width complexity.
\end{theorem*thm:oursaxenaduality}
Theorem~\ref{thm:oursaxenaduality} is proved using the noncommutative algebraic branching program width complexity as a tool, which shows that it is not just a notion useful for algorithmic purposes, but a natural notion of independent interest.
Note that our algorithms are particularly useful, because the noncommutative algebraic branching program width complexity can be determined in polynomial time, whereas determining the Waring rank of a polynomial is NP-hard, even when it is given explicitly as a list of coefficients, see \cite{Shi:16}.

A HWV can be encoded as a linear combination of Young tableaux, see e.g.~\cite[\S3.9]{Ott:13} or \cite[Sec.~4.3]{ike:12b}.
All current evaluation algorithms have a running time exponentially dependent on the size of the Young tableau.
We improve this in Section~\ref{sec:treewidth} and establish an algorithm that only depends exponentially on the treewidth of the Young tableau:
\begin{theorem*thm:evaltreewidth}
The evaluation of a degree $n$ highest weight vector $f_{\hat T}$ given by a Young tableau $\hat{T}$ 
at a homogeneous degree $d$ polynomial $p$ in $m$ variables with noncommutative ABP width complexity $w$ can be computed in time $w^{\omega(\tau+1)}\poly(n,d,m,|\CT|)$,
where $\CT$ is a tree decomposition of $\hat T$ of width $\tau$ and size $|\CT|$ and $\omega$ is the matrix multiplication exponent.
\end{theorem*thm:evaltreewidth}

Our paper is the first that formally connects the running time of algorithms in representation theory with a graph parameter. An implementation of the algorithm in Theorem~\ref{thm:evaltreewidth} might make it possible to compute the multiplicities for examples that were out of reach before, which is potentially useful for implementing the geometric complexity theory approach.

Lastly we show in Section~\ref{sec:hardness} that this dependency is basically optimal as we show two lower bounds under the exponential time hypothesis.
A lower bound of $2^{o(n)}$ for the vanishing evaluation decision problem when the HWV $f \in \Sym^n\Sym^d V$ is given by an arbitrary two row Young tableau and a lower bound of $2^{o(\sqrt{n})}$ when it is given by a semistandard Young tableau.
Additionally we show $\NP$-hardness for both versions of the decision problem and even $\#\PP$-hardness for exact evaluations.

\begin{theorem*thm:evaluationhardness}[HWVs from two-row tableaux]
    Deciding whether a degree $n$ highest weight vector $f_{\hat{T}}$ (given by a two-row Young tableau $\hat{T}$) evaluates to zero at a point of constant degree at least 8 and of Waring rank $3$ is $\NP$-hard.
    Assuming $\ETH$ no $2^{o(n)}$ algorithm for this evaluation can exist.
\end{theorem*thm:evaluationhardness}

\begin{theorem*thm:semistdhardness}[HWVs from semistandard tableaux]
    Deciding whether or not the evaluation of a degree $n$ highest weight vector $f_{\hat{T}}$ (given by a 5-row semistandard Young tableau $\hat{T}$) vanishes at a point of constant degree $d \geq 16$ with $16~|~d$ and of Waring rank 5 is $\NP$-hard.
    Additionally this evaluation can not be computed in time $2^{o(\sqrt{n})}$ unless ETH fails.
\end{theorem*thm:semistdhardness}

\begin{theorem*thm:evaluationsharpphardness}[$\#\PP$-hardness]
    Evaluating a highest weight vector $f_{\hat{T}}$ (given by a two-row Young tableau $\hat{T}$) at a point of Waring rank $3$ and degree $d \geq 18$ is $\#\PP$-hard.
\end{theorem*thm:evaluationsharpphardness}
We remark that it is quite surprising that these results can be obtained using points of small constant Waring rank.

\section{Related work}
The approach to lower bounds via evaluating HWVs was used in \cite{BI:11, BI:13} in the tensor setting to obtain lower bounds on the border rank of matrix multiplication. This also led to multiplicity obstructions (even occurrence obstructions). Our complexity results can be interpreted as limitations on how far such an approach via explicit evaluations can be pushed.

Combinatorics on tableaux for describing highest weight vectors has a rich history dating back to the early invariant theory.
This tableau calculus is equivalent to the classical \emph{Feynman diagram calculus} explained in \cite{Abd:02}, see also \cite{Ott:13}. Highest weight vectors of a $\GL_m$-representation $W$ are also called \emph{covariants}, since they correspond to the invariants of $W \otimes (S_\la \IC^{m})^*$, see e.g.~\cite[Def.~3.9]{OR:11}.
Recently, these methods have been applied in various areas, see \cite{Kum:11, BO:11, Rai:13, DHO:14, MM:14, AIR:16,BI:17b, CHILO:18}, to name a few.
If we restrict ourselves to two-row Young diagrams, then inheritance principles from representation theory \cite[Sec.~5.3]{ike:12b} let us replace $V$ with $\IC^2$. Then $\Sym^d \IC^2$ is the
Hilbert space corresponding to a system of $d$ indistinguishable photons distributed among two modes, which is used in the study of 2-mode linear optical circuits on $d$ indistinguishable particles.

Waring rank and border Waring rank are classical notions studied in algebraic geometry in the language of higher secant varieties \cite{Lan:15}.
More generally, border complexity is classically studied in algebraic geometry, see \cite{Lan:11}. Bini et al~\cite{bcrl:79} (see also \cite{Bini:80}) used it in their construction of fast matrix multiplication algorithms.
Studying border complexity in algebraic circuit complexity started with \cite{bue:01, gct1} and recently caught momentum \cite{GMQ:16, BIZ:18, Kumar2018}.

Kronecker coefficients and plethysm coefficients are the dimensions of specific highest weight vector spaces.
Algorithms for their computation or theorems about their positivity and value that depend heavily on the shape of the input Young tableau have a long history.
For example, if the number of rows of all parameters is constant, then the Kronecker coefficient can be computed in polynomial time \cite{CDW:12}.
A similar statement is true for plethysm coefficients, see \cite{FI:20}.
The software \texttt{LiE} \cite{LiE} performs all representation theoretic computations with a fixed number of rows.
In \cite{Ike:15}, positivity of Kronecker coefficients depends on comparing Young diagrams with respect to the dominance order, and in \cite{BB:04} the main parameter is the so-called \emph{Durfee size} of the Young diagram, which is the side length of largest square that can be embedded into the Young diagram, see also the very recent \cite{BBP:20}.
The shape of the Young diagram also plays a crucial role in the recent breakthrough proof of Stembridge's stability conjecture \cite{SS:16}.
For two-row Young diagrams much additional structure is known, for example Hermite's classical reciprocity law for plethysm coefficients \cite{Her:54}, which makes our lower bound for two-row Young tableaux quite surprising.

Treewidth has been intensely studied by Robertson and Seymour and has been applied numerous times to construct faster graph algorithms for cases where the treewidth is bounded by a function $o(n)$, most notably some algorithms for $\NP$-hard problems restricted to planar graphs, for example $3$-coloring.
See \cite{cygan2015parameterized} for an introduction to treewidth algorithms.

\section{Border Waring rank and Algebraic Branching Programs}\label{sec:prelude}
In this section we introduce noncommutative ABP width complexity for polynomials and use it to prove Theorem~\ref{thm:oursaxenaduality}.
Noncommutative ABP width complexity will play a central role in Sections~\ref{sec:abp} and~\ref{sec:treewidth}.

An algebraic branching program (ABP) is a layered directed acyclic graph (the vertex set is partitioned into numbered layers and edges only go from the $i$-th layer to the $(i+1)$-th layer) with two distinguished nodes, the \emph{source} and the \emph{sink}, and the edges are labeled with homogeneous linear polynomials.
The weight $w(P)$ of a path $P$ with edge labels $\ell_1, \ldots, \ell_d$ is defined as the product $w(P) := \ell_1 \cdots \ell_d.$
We say that the ABP \emph{computes} the sum $\sum_{\text{source-sink-path}\ P} w(P)$.
We can view the same ABP both over commuting variables or noncommuting variables.
If we interpret it over noncommuting variables, we call it an ncABP. If we want to stress that the variables commute, we call it a cABP.
The size of an ABP is the number of its vertices. The width of an ABP is the largest number of vertices in any layer.
For a homogeneous degree $d$ polynomial let the \emph{ABP width complexity} $\wW(f)$ be defined as the smallest width of a cABP computing~$f$. 
A sequence $(f_n)$ of polynomials is called a \emph{p-family} if the number of variables
and the degree of each $f_n$ are polynomially bounded in $n$. p-families are the object of study
in Valiant's algebraic complexity framework.
Let $\VBP$ denote the set of all p-families $(f_i)$ with polynomially bounded ABP width complexity $\wW(f_i)$.

The Waring rank $\WR(f)$ of a homogeneous degree $d$ polynomial is the smallest $r$ such that $f$ can be written as a sum of $r$ powers of homogeneous linear polynomials.
Let $\VWaring$ be the set of all p-families with polynomially bounded Waring rank.

Clearly, $\wW(f) \leq \WR(f)$, because from a Waring rank $r$ decomposition we can construct a width $r$ cABP that computes $f$ in the straightforward way: The cABP contains exactly $r$ disjoint source-sink-paths (vertex-disjoint up to source and sink) so that on each path all edges have the same label. Therefore $\VWaring \subseteq \VBP$.

There is a natural way to associate to every algebraic complexity measure a corresponding border complexity measure:
We define the \emph{border Waring rank} $\underline{\WR}(f)$ as the smallest $r$ such that $f$ can be approximated arbitrarily closely (coefficient-wise) by polynomials with $\WR(f)\leq r$, or equivalently, the smallest $r$ such that $f$ lies in the closure (Zariski closure and Euclidean closure coincide) of the set $\{f \mid \WR(f) \leq r\}$.
Clearly $\underline{\WR}(f)\leq \WR(f)$.
Let $\overline{\VWaring}$ denote the set of sequences of polynomials with polynomially bounded border Waring rank. Clearly $\VWaring \subseteq \overline{\VWaring}$.

Analogously we can define the \emph{border ABP width complexity} $\underline{\wW}(f)$ from $\wW$. Clearly $\underline{\wW}(f)\leq \wW(f)$.
Let $\overline{\VBP}$ be the set of polynomials with polynomially bounded border ABP width complexity. Clearly $\VBP \subseteq \overline{\VBP}$.

For noncommutative polynomials we define the analogous versions $\ncw$ and $\underline{\ncw}$. It follows from Nisan's work \cite{nisan1991lower} that $\ncw(f)=\underline{\ncw}(f)$.

In general, it is unknown by how much an algebraic complexity class grows when applying the closure. In particular, it is open whether $\VWaring = \overline{\VWaring}$ or whether $\VBP = \overline{\VBP}$.
But the following result in this direction is known.

\begin{theorem}\label{thm:borderwaringinVBP}
$\overline{\VWaring} \subseteq \VBP$.
\end{theorem}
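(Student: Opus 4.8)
The statement $\overline{\VWaring} \subseteq \VBP$ says that if a $p$-family $(f_n)$ has polynomially bounded *border* Waring rank, then it has polynomially bounded (ordinary, non-border) ABP width complexity. The natural route is to combine two facts that are "morally already in the excerpt": first, Theorem~\ref{thm:oursaxenaduality}, which gives $\wW(f) \le \underline{\wW}(f) \le \WRbar(f)$ — wait, I need to be careful. Theorem~\ref{thm:oursaxenaduality} as stated gives $\ncw(f) \le \WRbar(f)$ and also the commutative ABP width $\wW(f) \le \WRbar(f)$. So actually the commutative statement of Theorem~\ref{thm:oursaxenaduality} directly says $\wW(f) \le \WRbar(f)$.

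**Execution.** Given a $p$-family $(f_n) \in \overline{\VWaring}$, by definition $\WRbar(f_n) \le q(n)$ for some polynomial $q$. Apply the commutative half of Theorem~\ref{thm:oursaxenaduality} to each $f_n$: this yields $\wW(f_n) \le \WRbar(f_n) \le q(n)$. Since the number of variables and degree of $f_n$ are polynomially bounded in $n$ (as $(f_n)$ is a $p$-family) and $\wW(f_n)$ is polynomially bounded, the family $(f_n)$ lies in $\VBP$ by definition. That is the entire argument.

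**The catch / where the work actually lives.** There is essentially no obstacle at the level of this theorem itself — it is a one-line corollary of Theorem~\ref{thm:oursaxenaduality}. The real content has been pushed into Theorem~\ref{thm:oursaxenaduality}, whose proof must construct, from an approximating sequence of Waring-rank-$r$ decompositions of $f$, a genuine width-$r$ ABP computing $f$ exactly (not merely in the limit). The point is that ABP width, unlike most algebraic complexity measures, is *closed*: the commutative border ABP width equals — or is at least controlled by — the ordinary ABP width here, because a width-$w$ ABP's output lives in a space (indexed by the $\le w^2$ edge-label slots per layer) whose relevant structure is finite-dimensional and the limit of width-$\le r$ ABPs can be realized by a width-$\le r$ ABP. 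Concretely, one takes a border decomposition $f = \lim_{\varepsilon \to 0} \sum_{i=1}^r \ell_i(\varepsilon)^d$ with the $\ell_i(\varepsilon)$ linear forms depending on $\varepsilon$ (possibly with Laurent-series blow-up), builds the associated width-$r$ cABP over the field of Laurent series, and then extracts the constant coefficient in $\varepsilon$ via a standard interpolation/derivative trick on a slightly widened ABP — arriving at an exact width-$O(r)$ (in fact width-$r$) cABP for $f$ over the base field. Once Theorem~\ref{thm:oursaxenaduality} is in hand, Theorem~\ref{thm:borderwaringinVBP} is immediate, so I would simply state it as a corollary and point to Theorem~\ref{thm:oursaxenaduality}.
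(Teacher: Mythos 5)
Your main argument is logically sound: Theorem~\ref{thm:oursaxenaduality} gives $\wW(f_n)\leq\WRbar(f_n)$, so a polynomial bound on border Waring rank transfers directly to ABP width, and since the proof of Theorem~\ref{thm:oursaxenaduality} nowhere uses Theorem~\ref{thm:borderwaringinVBP}, there is no circularity — indeed the paper itself points out that the stronger bound supersedes this theorem. But this is a genuinely different route from the paper's own proof, which is the ``standard'' argument and deliberately does not need the clean bound: it uses Saxena's duality trick to turn a Waring rank $s$ decomposition into a \emph{read-once oblivious} ABP of width $s\cdot(md+d+1)$, invokes the Nisan-style closure result for read-once oblivious ABPs to remove the approximation, and then unfolds and homogenizes to an ordinary ABP, yielding $\wW(f)\leq\WRbar(f)\cdot(md+d+1)\cdot(d+1)$ — which suffices because a p-family has $m$ and $d$ polynomially bounded. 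What your route buys is brevity and the sharper width bound; what the paper's route buys is a self-contained proof of Theorem~\ref{thm:borderwaringinVBP} from known results, before the stronger Theorem~\ref{thm:oursaxenaduality} is established.

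One substantive caveat about your ``where the work actually lives'' paragraph: your sketch of Theorem~\ref{thm:oursaxenaduality} claims that a limit of width-$r$ \emph{commutative} ABPs can be realized by a width-$r$ (or width-$O(r)$) commutative ABP via a Laurent-series interpolation trick. That assertion is not available: whether $\VBP=\overline{\VBP}$, i.e.\ whether commutative border ABP width is even polynomially controlled by ABP width, is open (the paper says so explicitly), and the usual interpolation trick blows up the width with the approximation degree rather than preserving it. The closure that actually powers Theorem~\ref{thm:oursaxenaduality} is the \emph{noncommutative} one: a border Waring decomposition gives a border ncABP of width $s$, Nisan's exact characterization of ncABP width by coefficient-matrix ranks gives $\underline{\ncw}=\ncw$, and an ncABP computing a symmetric tensor is then reinterpreted as a cABP. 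Since you only cite Theorem~\ref{thm:oursaxenaduality} rather than prove it, this does not invalidate your proof of Theorem~\ref{thm:borderwaringinVBP}, but the mechanism you attribute to it is not the right one.
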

We quickly sketch the standard proof.
We will need the following concept only for this proof.
A \emph{read-once oblivious ABP} is a layered ABP whose edge labels have univariate polynomials in $x_i$ on each edge in layer $i$.
The first step in the proof is Saxena's duality trick \cite[Lemma~1]{Sax:08}:
\begin{equation*}
\begin{minipage}{13cm}
If $f \in \IC[x_1,\ldots,x_m]_d$ has $\WR(f)\leq s$, then there is a read-once oblivious ABP computing $f$ with width at most $s\cdot(md+d+1)$.
\end{minipage}
\end{equation*}
The proof uses a power series argument.
The next crucial step is to use a variant of Nisan's result \cite{nisan1991lower} to see that the \emph{border} read-once oblivious ABP width equals the read-once oblivious ABP width, so approximations can be removed \cite[Sec.~4.5.2]{For:14}:
\begin{equation*}
\begin{minipage}{13cm}
If $f \in \IC[x_1,\ldots,x_m]_d$ has $\WRbar(f)\leq s$, then there is an read-once oblivious ABP computing $f$ with width at most $s\cdot(md+d+1)$.
\end{minipage}
\end{equation*}
We can unfold this read-once oblivious ABP, i.e., replace each edge (remember, each label is a univariate degree $\leq d$ polynomial) with a (non-layered) ABP computing it, where each edge has an affine linear label. If done properly, this requires $d-1$ additional vertices per edge.
Making the ABP layered and homogeneous blows up the ABP's width by a factor of $d+1$.
We conclude:
\begin{equation}\label{thm:theirsaxenaduality}
\begin{minipage}{13cm}
For all $f \in \IC[x_1,\ldots,x_m]_d$ we have $\wW(f) \leq \WRbar(f)\cdot(md+d+1)\cdot(d+1)$.
\end{minipage}
\end{equation}

Eq.~\eqref{thm:theirsaxenaduality} proves Theorem~\ref{thm:borderwaringinVBP} when we assume that $m$ and $d$ are polynomially bounded (which is usually assumed).
We now strengthen eq.~\eqref{thm:theirsaxenaduality} with the following clean statement that is independent of $m$ and $d$.
\begin{theorem}\label{thm:oursaxenaduality}
For all $f \in \IC[x_1,\ldots,x_m]_d$ we have $\wW(f) \leq \WRbar(f)$.
\end{theorem}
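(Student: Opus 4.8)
The plan is to route through the noncommutative world, where border and ordinary ABP width coincide (by Nisan's theorem, as recorded above), and then abelianize. Write $r:=\WRbar(f)$. First I would invoke the standard curve/arc reformulation of border Waring rank: there exist homogeneous linear forms $\ell_1(\epsilon),\dots,\ell_r(\epsilon)$ in $x_1,\dots,x_m$ with coefficients in $\IC[\epsilon]$, and an integer $k\ge 0$, with
\[
P(\epsilon):=\sum_{i=1}^r\ell_i(\epsilon)^d=\epsilon^kf+\epsilon^{k+1}r_1+\cdots+\epsilon^Kr_{K-k}\in\IC[\epsilon]\otimes\IC[x_1,\dots,x_m]_d
\]
for suitable $r_j\in\IC[x]_d$; this follows from the curve characterization of the Zariski closure of $\{g:\WR(g)\le r\}$ after clearing a common power of $\epsilon$ and truncating. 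Next I would lift to the free algebra: set $\widetilde P(\epsilon):=\sum_{i=1}^r\ell_i(\epsilon)^{\bullet d}$, where $\ell^{\bullet d}$ denotes the $d$-fold product in $\IC\langle x_1,\dots,x_m\rangle$. For every scalar value $\epsilon_0$, the polynomial $\widetilde P(\epsilon_0)$ is a sum of $r$ noncommutative $d$-th powers of linear forms, hence is computed by the width-$r$ ncABP consisting of $r$ vertex-disjoint source–sink paths, the $i$-th carrying $\ell_i(\epsilon_0)$ on all its edges (the noncommutative analogue of the construction in the sketch of Theorem~\ref{thm:borderwaringinVBP}); so $\ncw(\widetilde P(\epsilon_0))\le r$, and likewise after multiplying by any nonzero scalar.

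The step I expect to carry the real content is showing that the single noncommutative polynomial $F:=[\epsilon^k]\widetilde P(\epsilon)$ again has $\ncw\le r$. The idea is to realize $F$ as the limit of $\epsilon_0^{-k}\widetilde P(\epsilon_0)$ as $\epsilon_0\to0$ and then use that $\{G:\ncw(G)\le r\}$ is Zariski-closed, which is exactly the equality $\ncw=\underline{\ncw}$ noted above. For that limit to exist I need $\widetilde P(\epsilon)$ to be divisible by $\epsilon^k$ in $\IC[\epsilon]\otimes\IC\langle x\rangle_d$, which is not automatic since abelianization is not injective. Here the key observation is that $\widetilde P(\epsilon)$ is \emph{content-symmetric}: the coefficient of a word $w\in\{1,\dots,m\}^d$ in $\widetilde P(\epsilon)$ equals $\sum_{i=1}^r\prod_{t=1}^d\ell_i^{(w_t)}(\epsilon)$, where $\ell_i^{(s)}(\epsilon)\in\IC[\epsilon]$ is the coefficient of $x_s$ in $\ell_i(\epsilon)$; being a product of commuting scalars, it depends only on the multiset of letters of $w$. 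A content-symmetric noncommutative polynomial abelianizes to $0$ only if it is itself $0$, because abelianization multiplies the common coefficient of each content class by a nonzero multinomial. Applying this to each $\epsilon$-coefficient of $\widetilde P(\epsilon)$, whose abelianization $P(\epsilon)$ is divisible by $\epsilon^k$, yields the required divisibility, so $F=\lim_{\epsilon\to0}\epsilon^{-k}\widetilde P(\epsilon)$ is a limit of noncommutative polynomials of $\ncw\le r$, whence $\ncw(F)\le r$.

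Finally I would abelianize: $F$ maps to $[\epsilon^k]P(\epsilon)=f$, since abelianization commutes with extracting $\epsilon$-coefficients and sends $\ell_i(\epsilon)^{\bullet d}$ to $\ell_i(\epsilon)^d$; reading a width-$r$ ncABP for $F$ as a cABP then produces a width-$r$ cABP for $f$, giving $\wW(f)\le r=\WRbar(f)$, and stopping one line earlier gives the noncommutative statement $\ncw(f)\le\WRbar(f)$ via the lift $F$. The main obstacle is precisely the one flagged above: without the content-symmetry observation guaranteeing that $\widetilde P(\epsilon)$ is divisible by $\epsilon^k$, one would only recover $F$ as a bounded linear combination of $\ncw\le r$ polynomials (e.g.\ by interpolation in $\epsilon$), which gives no bound on the width of $F$ at all and collapses the argument back to the lossy estimate~\eqref{thm:theirsaxenaduality}; everything else is routine bookkeeping.
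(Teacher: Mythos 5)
Your proof is correct and follows essentially the same route as the paper: lift each Waring-rank-$\le r$ approximant to its symmetric tensor (a sum of $r$ noncommutative $d$-th powers, hence computable by the width-$r$ disjoint-paths ncABP), remove the approximation via Nisan's $\ncw=\underline{\ncw}$, and reinterpret the resulting width-$r$ ncABP as a cABP computing $f$. The only difference is bookkeeping: the paper works directly with a sequence $f_j\to f$ of rank-$\le r$ polynomials, whose symmetric-tensor lifts converge coefficient-wise to the symmetric tensor of $f$ automatically, so your $\epsilon$-curve formulation and the content-symmetry/divisibility-by-$\epsilon^k$ argument, while correct, are extra work the sequence phrasing avoids.
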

In fact, we prove
$\wW(f) \leq \ncw(f) \leq \WRbar(f)$, but we have not yet defined what we mean by an ncABP computing a polynomial.
The rest of Section~\ref{sec:prelude} is devoted to the proof of Theorem~\ref{thm:oursaxenaduality} and to this definition. We start with introducing several main multilinear algebra concepts of this paper. The actual proof of Theorem~\ref{thm:oursaxenaduality} is then very short and natural.

When talking about homogeneous multivariate noncommutative polynomials, we use the standard language of multilinear algebra: An order $d$ \emph{tensor} in $\otimes^d \IC^m$ is a $d$-dimensional $m \times m \times \cdots \times m$ array of numbers.
There is a canonical vector space isomorphism between the vector space of $m$-variate homogeneous degree $d$ noncommutative polynomials $\IC\langle x_1,\ldots,x_m\rangle_d$ and $\otimes^d \IC^m$, which is defined on monomials as
\[
x_{i_1} x_{i_2} \cdots x_{i_d} \stackrel{\sim}{\longrightarrow} E_{i_1,\ldots,i_d},
\]
where $E_{i_1,\ldots,i_d}$ is the tensor that is 0 everywhere, but has a single 1 at position $(i_1,\ldots,i_d)$.
Let $(e_i)$ be the standard basis of $\IC^m$. We use the notation $e_{i_1}\otimes e_{i_2}\otimes \cdots \otimes e_{i_d} := E_{i_1,\ldots,i_d}$.
More generally, for $v_1,\ldots,v_d \in \IC^m$, we write $v_1 \otimes v_2 \otimes \cdots \otimes v_m$ to be the tensor whose entry at position $(i_1,\ldots,i_d)$ is the product $(v_1)_{i_1}\cdot (v_2)_{i_2} \cdots (v_d)_{i_d}$.

A tensor $T$ is called \emph{symmetric} if $T_{i_1,\ldots,i_d} = T_{i_{\pi(1)},\ldots,i_{\pi(d)}}$ for all permutations $\pi \in \aS_d$. Let $\Sym^d \IC^m \subseteq \otimes^d \IC^m$ denote the linear subspace of symmetric tensors.
There is a canonical vector space isomorphism between the vector space of $m$-variate homogeneous degree $d$ \emph{commutative} polynomials $\IC[x_1,\ldots,x_m]_d$ and $\Sym^d \IC^m$, which is defined on monomials as
\[
x_{i_1} x_{i_2} \cdots x_{i_d} \stackrel{\sim}{\longrightarrow} \sum_{\pi \in \aS_d} \tfrac{1}{d!} E_{\pi(i_1),\ldots,\pi(i_d)},
\]
For example, the polynomial $x_1^2 x_2$ corresponds to the tensor $\frac 1 3 (e_1 \otimes e_1 \otimes e_2 + e_1 \otimes e_2 \otimes e_1 + e_2 \otimes e_1 \otimes e_1)$. \ \footnote{This tensor is called the W-state in quantum information theory.}
We use $e_i$ and $x_i$ interchangeably.

It is crucial to note that \emph{noncommutative ABPs can compute symmetric tensors}.
An example is given in Figure~\ref{fig:ncabp}, where we used $x:=x_1$ and $y:=x_2$.
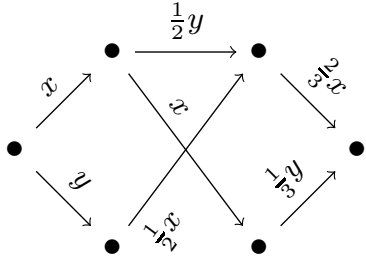
\begin{figure}
\centering
\scalebox{1.3}{%
\begin{tikzpicture}
\node (0) at (0,0) {$\bullet$};
\node (1up) at (1,1) {$\bullet$};
\node (1down) at (1,-1) {$\bullet$};
\node (2up) at (2.5,1) {$\bullet$};
\node (2down) at (2.5,-1) {$\bullet$};
\node (3) at (3.5,0) {$\bullet$};
\draw[->] (0) to[sloped, above] node {\footnotesize $x$} (1up);
\draw[->] (0) to[sloped, above] node {\footnotesize $y$} (1down);
\draw[->] (1up) to[sloped, above] node {\footnotesize $\tfrac 1 2 y$} (2up);
\draw[->] (1up) to[pos=0.3, sloped, above] node {\footnotesize $x$} (2down);
\draw[->] (1down) to[pos=0.1, sloped, below] node {\footnotesize $\tfrac 1 2 x$} (2up);
\draw[->] (2up) to[sloped, above] node {\footnotesize $\tfrac 2 3 x$} (3);
\draw[->] (2down) to[sloped, above] node {\footnotesize $\tfrac 1 3 y$} (3);
\end{tikzpicture}
}
\caption{An ncABP computing the symmetric tensor $\frac 1 3 (x\otimes x\otimes y + x\otimes y\otimes x + y\otimes x\otimes x)$, which corresponds to the polynomial $x^2 y$.
If we reinterpret the ncABP as a cABP, it computes $\frac 1 3 (xxy + xyx + yxx)= x^2 y$.
Such ncABPs can be efficiently constructed using Nisan's construction technique \cite{nisan1991lower}. Interestingly, in this example the width is only 2, while the Waring rank of $x^2 y$ is 3.}
\label{fig:ncabp}
\end{figure}
As before with cABPs, it is easy to see that every Waring rank $r$ decomposition of $f$ can be converted into a width $r$ ncABP computing $f$ in the straightforward way: The ncABP contains exactly $r$ disjoint source-sink-paths (vertex-disjoint up to source and sink) so that on each path all edges have the same label.
Every ncABP can be reinterpreted as a cABP by letting the variables commute. If the ncABP computes a symmetric tensor, then clearly this cABP computes the corresponding polynomial.
Now we can prove Theorem~\ref{thm:oursaxenaduality} in a very natural and short way as follows.

Given $f$ with a border Waring rank $s$ decomposition. We construct the corresponding border ncABP with $s$ many edge-disjoint source-sink-paths, so $\underline{\ncw}(f)\leq s$.
Using Nisan's result \cite{nisan1991lower} that $\underline{\ncw}=\ncw$, it follows $\ncw(f)\leq s$.
This gives a width $s$ ncABP that computes $f$.
Reinterpreting this ncABP as a cABP finishes our proof of Theorem~\ref{thm:oursaxenaduality}.

\section{Highest Weight Vectors and their combinatorial evaluation}
\label{sec:base}
\label{sec:prel}
Let $V = \IC^m$ be a finite dimensional complex vector space with standard basis $e_1, e_2, \ldots, e_m$.
There is a canonical action of $g \in \GL(V)$ on the tensor power $\otimes^d V$ via
$g(p_1 \otimes \dots \otimes p_d) := (gp_1) \otimes \dots \otimes (gp_{d})$ and linear continuation.
This action can be lifted to a linear action on $\Sym^n\otimes^d V$ via
\[
    (gf)(p) := f(g^{t}p)\;\text{for $f \in \Sym^n\otimes^d V$ and $p \in \otimes^d V$}
\]
Note that this makes $\Sym^n\otimes^d V$ a $\GL(V)$-representation.
We denote by $\Sym^d V \subseteq \otimes^d V$ the vector space of symmetric tensors over $V$ of order $d$ and by $p_1 \odot \dots \odot p_d := \sum_{\pi \in \aS_d}\frac{1}{d!} p_{\pi(1)} \otimes \dots \otimes p_{\pi(d)}$ the symmetric tensor product of $p_1, \ldots, p_d \in V$.
The linear subspace $\Sym^d V \subseteq \otimes^d V$ is closed under the action of $\GL(V)$.
This action can be lifted to a linear action on $\Sym^n\Sym^d V$ via
\[
    (gf)(p) = f(g^{t}p)\;\text{for $f \in \Sym^n\Sym^d V$ and $p \in \Sym^d V$}
\]
Note that this makes $\Sym^n\Sym^d V$ a $\GL(V)$-representation.

We call a sequence $\lambda = (\lambda_1, \lambda_2, \dots)$ a \emph{partition of $N \in \IN$} if $\lambda_1 \geq \lambda_2 \geq \lambda_3 \geq \ldots \geq 0$ and $\sum_{i \geq 1}\lambda_i = N$. In our case we will usually have $N=nd$.
We denote the transpose partition $\lambda^t$ by $\mu$ and define it as $\mu_i = |\{j \mid \lambda_j \geq i\}|$.
Note that $\mu$ is also a partition of $N$.
We will write partitions as finite sequences and omit all the trailing zeros.

For any $\GL_m$ representation $W$, a \emph{highest weight vector} $f \in W$ of type $\lambda$ is a vector that satisfies
\begin{enumerate}
    \item $f$ is invariant under the action of any $g \in \GL_m$ when $g$ is upper triangular with $1$s on the diagonal.
    \item $\diag(\alpha_1, \ldots, \alpha_m) f = \alpha_1^{\lambda_1} \cdot \cdots \cdot \alpha_m^{\lambda_m} f$ where $\diag(\alpha_1, \ldots, \alpha_m)$ is the diagonal matrix with $\alpha_1, \ldots, \alpha_m \in \IC$ on the diagonal.
\end{enumerate}
The highest weight vectors of type $\lambda$ form a vector space which we call $\HWV_\lambda(W)$.
We denote by $\HWV(W)$ the vector space of all HWVs in $W$ without any weight restriction.

The smallest example is the discriminant polynomial $b^2-4ac$ in $\Sym^2 \Sym^2 \IC^2$, see \cite[Exa.~9.1.4]{BI:17} for which we have $g(b^2-4ac) = \det(g)^2 (b^2-4ac)$.

We first derive a combinatorial description of the evaluation of highest weight vectors. We follow \cite{cheung2017symmetrizing, BI:17b}.

We can describe the highest weight vectors of $\Sym^n\Sym^d V$ in terms of so called Young tableaux (see also \cite[\S3.9]{Ott:13}).
\begin{definition}
    A \emph{Young tableau} $T$ of shape $\lambda = (\lambda_1, \ldots, \lambda_r)$ where $\lambda$ is a partition is a left justified array of boxes where row $i$ contains $\lambda_i$ boxes and each box contains a positive integer.
    If the tableau contains the numbers $1$ through $n$ each $d$ times it is said to have \emph{(rectangular) content} $n \times d$, for example {\Yvcentermath1\tiny\young(1231,23)} has content $3\times 2$.
    A Young tableaux is said to be \emph{semistandard} if the entries are strictly increasing in each column and non-decreasing in each row, for example {\Yvcentermath1\tiny\young(1123,23)} is semistandard, while {\Yvcentermath1\tiny\young(1231,23)} is not.
    A Young tableaux is said to be \emph{standard} if the entries are strictly increasing in each column and row and every entry occurs exactly once. For example, {\Yvcentermath1\tiny\young(134678,259)} is standard.
\end{definition}

Fix a tableau $T$ of shape $\la$ with content $(nd)\times 1$ and fix a tensor $p = \sum_{i=1}^r \ell_{i,1}\otimes\cdots\otimes\ell_{i,d} \in \otimes^d \IC^m$.
We use arithmetic modulo $d$ with the system of representatives $\{1,\ldots,d\}$, so $a \mod d \in \{1,\ldots,d\}$.
Each of the sets $\{1,\ldots,d\}, \{d+1,\ldots,2d\},\ldots$ is called a \emph{block}.
We define $k(a) := \lceil a / d\rceil$. We define $j(a) := a \mod d$, which gives the position of the element $a$ in its block.
A placement
\[
\vartheta : \{1, \ldots, nd\} \to \{\ell_{i,j} \mid 1 \leq i \leq r, 1 \leq j \leq d\}
\]
is called \emph{proper} if
there is a map $\varphi:\{1,\ldots,n\}\to\{1,\ldots,r\}$
such that
$\vartheta(a) = \ell_{\varphi(k(a)),j(a)}$.
We define the determinant of a matrix that has more rows than columns as the determinant of its largest top square submatrix.

We define the polynomial $f_T$ via its evaluation on $p$:
\begin{equation}\label{eq:sumpropertheta}
    f_T(p)
    := \sum_{\text{proper\ }\vartheta} \prod_{c = 1}^{\lambda_1}\det{}_{\vartheta, c} \ \text{ with } \ \det{}_{\vartheta, c} := \det\left(\vartheta(T(1, c)) \dots \vartheta(T(\mu_c, c))\right)
\end{equation}

Pictorially $\varphi$ chooses one of the rank $1$ tensors for each block of $d$ numbers and places those onto $T$.
Then we take the product of the columnwise determinants.
The evaluation $f(p)$ is now the sum over all possible choices.

It is a classical result from multilinear algebra that this construction yields a well-defined polynomial of weight $\la$ on $\otimes^d \IC^m$.
If $T$ is the column-standard tableau, then $f_T \in \HWV_\la(\Sym^n\otimes^d \IC^m)$ is not hard to verify.
Schur-Weyl duality states that
$\otimes^n \otimes^d \IC^m = \bigoplus_{\la} S_\la(V) \otimes [\la]$,
where the sum goes over all partitions $\la$ of $nd$ into at most $m$ parts,
and where $S_\la(V)$ is the irreducible $\GL_m$-representation of type $\la$ (called the Schur module) and $[\la]$ is the irreducible $\aS_{dn}$-representation of type $\la$ (called the Specht module).
Since a basis of $[\la]$ is given by the standard tableaux of shape $\la$,
this immediately implies that
\begin{equation}\label{eq:span}
\text{$\HWV_\la(\Sym^n\otimes^d \IC^m)$ is the linear span of the $f_T$, where $T$ is standard of shape $\la$.}
\end{equation} See for example \cite{Ott:13} or \cite[Ch.~19]{BI:17} for a detailed exposition.

The following Lemmas~\ref{lem:orderedentries} and~\ref{lem:nodoubleentry} follow from eq.~\eqref{eq:sumpropertheta}.

\begin{lemma}\label{lem:orderedentries}
Let $T$ and $T'$ be Young tableaux of the same shape with content $(nd)\times 1$ such that $T'$ can be obtained from $T$ by performing permutations within the blocks.
The functions $f_T$ and $f_{T'}$ coincide after restricting their domains of definition from $\otimes^d\IC^m$ to $\Sym^d\IC^m$.
\end{lemma}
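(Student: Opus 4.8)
The plan is to show that the key identity \eqref{eq:sumpropertheta}, when the point $p$ is restricted to a \emph{symmetric} tensor, is insensitive to the particular representatives $\ell_{i,1},\ldots,\ell_{i,d}$ chosen within each rank-one summand of $p$, and hence that only the ``position in the block'' $j(a)$ matters mod permutations. Concretely, suppose $p = \sum_{i=1}^r \ell_{i,1}\odot\cdots\odot\ell_{i,d} \in \Sym^d\IC^m$ is written as a sum of symmetric products; then each summand, as a tensor in $\otimes^d\IC^m$, equals $\frac{1}{d!}\sum_{\sigma\in\aS_d}\ell_{i,\sigma(1)}\otimes\cdots\otimes\ell_{i,\sigma(d)}$, and this expression is manifestly invariant under replacing $(\ell_{i,1},\ldots,\ell_{i,d})$ by $(\ell_{i,\pi(1)},\ldots,\ell_{i,\pi(d)})$ for any fixed $\pi\in\aS_d$. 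So on $\Sym^d\IC^m$ the value $f_T(p)$ only depends on the multiset structure of the block, not on the labelling within the block.

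First I would make precise what ``$T'$ is obtained from $T$ by permuting within blocks'' means for the combinatorial sum: there is a permutation $\rho\in\aS_{nd}$ that maps each block $\{(k-1)d+1,\ldots,kd\}$ to itself and satisfies $T'(\cdot) = T(\rho(\cdot))$ as functions on box positions. Next I would observe that the set of proper placements for $T$ and for $T'$ are in bijection: given a proper $\vartheta$ for $T$ with witness $\varphi$, the map $\vartheta' := \vartheta\circ\rho$ is proper for $T'$ with the \emph{same} $\varphi$, because $\rho$ preserves $k(\cdot)$ (it fixes blocks setwise) — only $j(\cdot)$ gets permuted within each block, which is exactly the freedom allowed by properness once $p$ is symmetric. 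Summing over all proper placements, the two sums in \eqref{eq:sumpropertheta} therefore range over the same terms up to this relabelling; I then need to check the product of columnwise determinants is matched term by term.

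The main technical step — and the place I expect the real work to be — is verifying that the columnwise determinant products agree after the relabelling, \emph{using symmetry of $p$}. The subtlety is that permuting within a block moves the box labels $\{1,\ldots,nd\}$ around, hence moves which determinant column each placed vector lands in; so the product $\prod_c \det_{\vartheta,c}$ for $T$ and $\prod_c \det_{\vartheta',c}$ for $T'$ are \emph{not} literally the same product of the same matrices. To handle this I would instead argue at the level of the whole polynomial: expand $f_T(p)$ multilinearly in the $r$ summands of $p$, reducing to the case $p = \ell_1\odot\cdots\odot\ell_d$ (a single symmetric product, after also expanding that $\odot$ into its $d!$ tensor terms). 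On such a $p$, a proper placement is just a choice, for each of the $n$ blocks of boxes, of a bijection from that block to $\{\ell_1,\ldots,\ell_d\}$ (one of $d!$ choices), and $f_T(p)$ becomes $\frac{1}{(d!)^n}$ times a sum over all these $(d!)^n$ bijections of $\prod_c\det_{\vartheta,c}$. Permuting the \emph{box labels} within blocks (passing from $T$ to $T'$) simply reindexes this sum over bijections — it is a bijection of the index set — so the total is unchanged. I would present this as the clean reason, and note that the general case follows by multilinearity; the bookkeeping with $k(\cdot)$, $j(\cdot)$, and the witness map $\varphi$ is routine once the single-symmetric-product case is settled.
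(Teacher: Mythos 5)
Your proof is correct, but it takes a genuinely different route through its choice of decomposition. The paper's proof is essentially a one-liner: since $p$ is symmetric it admits a \emph{Waring} decomposition $p=\sum_{i=1}^r p_i^{\otimes d}$, i.e.\ a rank-one decomposition in which all $d$ factors of each summand are equal; with this choice every proper placement assigns the same vector $p_{\varphi(k(a))}$ to all boxes of a block, independently of $j(a)$, so the summands of $f_T(p)$ and $f_{T'}(p)$ in \eqref{eq:sumpropertheta} coincide term by term and no reindexing is needed. You instead take a general decomposition of $p$ into symmetric products $\ell_{i,1}\odot\cdots\odot\ell_{i,d}$, expand each $\odot$ into its $d!$ rank-one terms, and show that passing from $T$ to $T'$ merely reindexes the resulting sum over the $(d!)^n$ block-wise bijections; this costs more bookkeeping (including tracking the factor $(d!)^{-n}$) but is equally valid, and it makes explicit that the invariance holds for any decomposition adapted to the symmetry of $p$, not just the Waring one. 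Two remarks: first, like the paper you implicitly rely on the (stated, classical) fact that $f_T(p)$ is independent of the chosen rank-one decomposition of $p$, which is fine; second, the claim in your second paragraph that $\vartheta\circ\rho$ is again proper is not correct for a general decomposition, since properness pins down the exact factor $\ell_{\varphi(k(a)),j(a)}$ at position $j(a)$ and leaves no ``freedom'' --- it only becomes true after the expansion you carry out in your third paragraph, where each block's set of rank-one summands is closed under permuting factors. So the argument that actually carries the proof is the one in your final paragraph, and that argument is sound.
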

\begin{proof}
If $p$ is symmetric, then $p$ has a Waring rank decomposition, i.e., there exists $r\in\IN$ and homogeneous linear forms $p_1,\ldots,p_r$ such that $p=\sum_{i=1}^r p_i^{\otimes d}$.
Using this decomposition for $p$, we see that the summands of $f_T(p)$ and $f_{T'}(p)$ in \eqref{eq:sumpropertheta} coincide.
\end{proof}

Lemma~\ref{lem:orderedentries} implies that in order to define the restriction of $f_T$ to symmetric tensors we only need to define the blocks in $T$, but not the internal structure of the blocks.
Thus for a tableau with content $(nd)\times 1$ we define the tableau $\hat T$ by replacing all entries $a \in\{1,\ldots,nd\}$ by $k(a)$.
The resulting tableau $\hat T$ has content $n \times d$.
For example, if $n=2$, $d=4$,  $T={\Yvcentermath1\tiny\young(134678,25)}$, then $\hat T = {\Yvcentermath1\tiny\young(111222,12)}$.
For a tableau $\hat T$ with content $n \times d$ we define $f_{\hat T} \in \Sym^d \Sym^n \IC^m$ as the restriction of $f_T$ to $\Sym^n\IC^m$.

\begin{lemma}\label{lem:nodoubleentry}
Let $T$ be a Young tableau that has a column in which there are two or more entries from the same block. Then $f_T=0$.
\end{lemma}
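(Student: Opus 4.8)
The plan is to read the statement straight off the defining sum \eqref{eq:sumpropertheta}, once the evaluation point is written in a convenient form. First I would write the point $p$ at which $f_T$ is evaluated through a Waring decomposition $p=\sum_{i=1}^{r}\ell_i^{\otimes d}$; concretely, this means using the decomposition $p=\sum_i \ell_{i,1}\otimes\cdots\otimes\ell_{i,d}$ with $\ell_{i,1}=\ell_{i,2}=\cdots=\ell_{i,d}=\ell_i$ for every $i$. This is permissible because $f_T(p)$ does not depend on the chosen decomposition (the classical well-definedness recalled above), and it is exactly the reduction that already powers the proof of Lemma~\ref{lem:orderedentries}.

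Then the key step is immediate. With such a decomposition, any proper placement $\vartheta$, witnessed by some $\varphi\colon\{1,\dots,n\}\to\{1,\dots,r\}$, satisfies $\vartheta(a)=\ell_{\varphi(k(a)),\,j(a)}=\ell_{\varphi(k(a))}$, so $\vartheta(a)$ depends on the cell $a$ only through its block index $k(a)$. Now let $a\neq b$ be two cells lying in a common column $c_0$ of $T$ and belonging to the same block, i.e.\ $k(a)=k(b)=:k_0$; being two cells of a single column, they sit in two distinct rows of $c_0$. Hence for every proper $\vartheta$ we get $\vartheta(a)=\ell_{\varphi(k_0)}=\vartheta(b)$, so the matrix $\bigl(\vartheta(T(1,c_0))\ \cdots\ \vartheta(T(\mu_{c_0},c_0))\bigr)$ has two identical columns and $\det_{\vartheta,c_0}=0$. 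Therefore every summand of \eqref{eq:sumpropertheta} vanishes, and $f_T(p)=0$.

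There is essentially no real obstacle here; the one thing to keep straight is that the whole argument rests on the single observation that a Waring decomposition makes all $d$ linear forms inside a block coincide, turning ``two entries from the same block in one column'' into ``two equal columns of a determinant''. The ``largest top square submatrix'' convention for $\det_{\vartheta,c_0}$ does not interfere, since it retains all columns of the matrix, so two equal columns still force the minor to vanish (and in the degenerate case $\mu_{c_0}>m$ one has $S_\lambda(\IC^m)=0$, so $f_T=0$ outright). No straightening, semistandardisation, or further tableau combinatorics is needed.
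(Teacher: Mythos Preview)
Your argument is correct and is essentially identical to the paper's own proof: both write the symmetric point $p$ in a Waring decomposition, observe that in such a decomposition a proper placement $\vartheta$ assigns the same linear form to all entries of a block, and conclude that the determinant attached to the offending column has two equal columns and hence vanishes, so every summand in \eqref{eq:sumpropertheta} is zero. Your additional remarks about the top-square-submatrix convention and the degenerate case $\mu_{c_0}>m$ are harmless clarifications that the paper omits.
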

\begin{proof}
Let $c$ be the column in $T$ in which there are two or more entries from the same block.
As in Lemma~\ref{lem:orderedentries}, consider the evaluation of $f_T$ at a point $p$ in its Waring rank decomposition. We observe that every summand in eq.~\eqref{eq:sumpropertheta} is zero, because the determinant corresponding to the column $c$ has a repeated column.
\end{proof}

In other words, Lemma~\ref{lem:nodoubleentry} says that $f_{\hat T}=0$ if $\hat T$ contains a column in which a number appears at least twice.
Combining this insight with eq.~\eqref{eq:span}, we conclude that
\begin{equation}\label{eq:spanssyt}
\begin{split}
\HWV_\la(\Sym^n\Sym^d \IC^m) & \text{ is the linear span of the $f_{\hat T}$,} \\
& \text{ where $\hat T$ is semistandard of shape $\la$ with content $n \times d$.}
\end{split}
\end{equation}

\begin{remark}
\label{rem:waringevaluation}
From eq.~\ref{eq:sumpropertheta} and writing $p$ in its Waring rank decomposition, we immediately get an $O(\WR(p)^n \cdot \poly(n,d,m))$ algorithm to evaluate $f_{\hat T}(p)$.
\end{remark}

\section{Non-commutative algebraic branching programs}
\label{sec:abp}
For an in-depth formal study of ncABPs we now introduce additional notation (cp.~Section~\ref{sec:prelude}).

\begin{definition}\label{def:ncabp}
    Let $V$ be a vector space.
    \begin{itemize}
        \item
            A non-commutative algebraic branching program (ncABP) $A$ is an acyclic directed graph with two distinguished nodes $s$ and $t$ and edges labeled with elements from $V$ and every path from $s$ to $t$ having the same length.
            This makes $A$ layered, with layer $k$ containing all vertices of distance $k$ from $s$.
        \item
            The weight $w(P)$ of a path $P$ with edge labels $\ell_1, \ldots, \ell_d \in V$ is defined as
            $
                w(P) := \ell_1 \otimes \dots \otimes \ell_d\,.
            $
        \item
            The tensor computed at a node $v$ in $A$ is
            $
                \hat{w}(v) = \sum_{s-v\ path\ P}w(P)\,.
            $
            By convention the tensor computed at $s$ is $1$.
        \item
            The tensor computed by $A$ is the tensor computed at $t$.
        \item
            The size of an ncABP is the number of vertices.
        \item
            The width of an ncABP is the largest number of vertices in any layer.
    \end{itemize}
\end{definition}
In particular we will be looking at ncABPs computing symmetric tensors $p$ and the evaluation of highest weight vectors at $p$.
An example is given in Figure~\ref{fig:ncabp}.

Each node in layer $k$ computes a tensor in $V^{\otimes k}$.
We show in Proposition~\ref{prop:minimalsym} that there is always a minimal ncABP where all these computed tensors are also symmetric and whose size is exactly the size of the partial derivative space of $p$.
An example is given in Figure~\ref{fig:ncabp}.

We can now use the ``overlapping structure of the paths through ncABPs'' to our advantage in evaluating HVWs by using dynamic programming.
\begin{theorem}
    \label{thm:evalabp}
    The evaluation $f_T(p)$ of a highest weight vector $f_T \in \Sym^n\Sym^d \IC^m$ given by a Young tableau $T$ with content $(nd) \times 1$ and $r$ rows and a symmetric tensor $p \in \Sym^d\IC^m$ given by an ncAPB of width $w$ can be computed in time $O(w^{n + r}\poly(n,d,m))$.
\end{theorem}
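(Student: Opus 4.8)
The plan is to turn the combinatorial formula \eqref{eq:sumpropertheta} into a dynamic program over the columns of $T$, with the ncABP playing the role of an implicit (possibly exponentially large) decomposition of $p$ into rank-one tensors. An ncABP of width $w$ computing $p$ expresses $p=\sum_P w(P)$, where $P$ ranges over source-sink paths and $w(P)=\ell^P_1\otimes\cdots\otimes\ell^P_d$ is rank one, with $\ell^P_j\in\IC^m$ the label of the edge of $P$ in layer $j$. Since $f_T(p)$ of \eqref{eq:sumpropertheta} does not depend on which rank-one decomposition of $p$ one uses, feeding in this decomposition shows that a proper placement $\vartheta$ is exactly a choice of one source-sink path $P^{(k)}$ for each block $k\in\{1,\dots,n\}$, with $\vartheta(a)=\ell^{P^{(k(a))}}_{j(a)}$, so
\[
f_T(p)=\sum_{P^{(1)},\dots,P^{(n)}}\ \prod_{c=1}^{\lambda_1}\det{}_{\vartheta,c},
\]
the outer sum over $n$-tuples of source-sink paths. (If some column of $\hat T$ repeats a block, or if $r>m$, then $f_T=0$ --- by Lemma~\ref{lem:nodoubleentry}, respectively because $\lambda$ has more than $m$ parts --- and we output $0$; merging parallel edges by adding their labels, we may assume the ncABP is simple.)

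Next, using Lemma~\ref{lem:orderedentries} I would relabel $T$ --- without changing $f_T$ on symmetric tensors --- so that, for every block $k$, the $j$-th occurrence of $k$ (reading columns left to right) sits at position $j$ inside the block. Then encountering in column $c$ a cell holding the $j$-th occurrence of block $k$ means exactly traversing the $j$-th edge of $P^{(k)}$, i.e.\ advancing that path from its layer-$(j-1)$ vertex to a layer-$j$ vertex; in particular the $d$ edges of $P^{(k)}$ are consumed in $d$ distinct columns, and all $nd$ path edges are consumed, each in exactly one column. Consequently $\det{}_{\vartheta,c}$ is a function only of the $\mu_c$ edges consumed while processing column $c$, so the product $\prod_c\det{}_{\vartheta,c}$ factors along the column sweep; this is what will keep the DP state small.

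Now sweep the columns $c=1,\dots,\lambda_1$ from left to right. Call a block \emph{active} at a column boundary if it has already occurred but not yet $d$ times, so that its partial path currently sits at an interior-layer vertex; the other blocks sit at $s$ or at the unique layer-$d$ vertex $t$. Index the DP table by an assignment of a current vertex to each active block, with entry equal to the sum, over all partial path choices consistent with that assignment, of the product of the determinants of the columns processed so far. Initially there is one state (no active blocks) with value $1$; after the last column all blocks sit at $t$, and the single surviving entry is $f_T(p)$. To process column $c$, whose $\mu_c$ cells lie in distinct blocks $k_1,\dots,k_{\mu_c}$: for each of the $\le w$ choices of next-layer vertex of each $k_i$, read off the induced edge labels $L_1,\dots,L_{\mu_c}\in\IC^m$, multiply the incoming entry by the minor $\det(L_1,\dots,L_{\mu_c})$ (the determinant of the top $\mu_c\times\mu_c$ block), accumulate into the updated state, and drop any block that just reached $t$. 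Since there are at most $nd$ columns, at most $n$ active blocks at any boundary (so $\le w^n$ table entries), and $\le w^{\mu_c}\le w^r$ vertex-tuples to branch over per column with $\poly(m)$ arithmetic each, the total running time is $O(w^{n+r}\poly(n,d,m))$.

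I expect the main obstacle to be the bookkeeping of the middle step: making precise, via Lemma~\ref{lem:orderedentries}, the identification of within-block positions with columns so that ``processing one column'' is literally ``advancing one layer along each relevant path'', and verifying that each path edge is consumed in exactly one column, so that the product of minors genuinely factors over the sweep and the state never needs to remember any partially-formed determinant. The remaining ingredients --- Leibniz-expanding the minors if desired, the symmetry of $p$, and the degenerate cases $\mu_c>m$ and repeated blocks --- are routine.
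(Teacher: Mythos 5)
Your proposal is correct and follows essentially the same route as the paper: both interpret the ncABP's source--sink paths as an implicit rank-one decomposition of $p$, reorder each block left to right via Lemma~\ref{lem:orderedentries} (using Lemma~\ref{lem:nodoubleentry} for columns repeating a block), and run a column-sweep dynamic program whose state is the tuple of current vertices of the blocks, giving $w^n$ states and $w^{r}$ transitions per column. The only difference is cosmetic: you accumulate prefix products left to right, while the paper memoizes the suffix sums $\alpha(\vartheta|_{\leq k})$ right to left.
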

\begin{proof}
Let $A$ be an ncABP with source $s$, sink $t$, and width $w$ computing a symmetric tensor $p \in \Sym^d \IC^m$.
W.l.o.g.\ let the numbers $i \cdot d + 1, \ldots, i \cdot d + d$ occur in order left to right in $T$ for any $i \in \{0, \ldots, n-1\}$, see Lemma~\ref{lem:orderedentries}.
Note that left to right is a unique ordering since if one column contains multiple of these numbers we already know $f_T = 0$, see Lemma~\ref{lem:nodoubleentry}.

Combining eq.~\eqref{eq:sumpropertheta} with
$
    p = \hat{w}(t) = \sum_{s-t\ path\ P}w(P)
$ (see Def.~\ref{def:ncabp})
we see that
\begin{equation}\label{eq:sumproperthetapaths}
    f_T(p)
    := \sum_{\text{proper\,}\vartheta} \prod_{c = 1}^{\lambda_1}\det{}_{\vartheta, c} \ \text{ with } \ \det{}_{\vartheta, c} := \det\left(\vartheta(T(1, c)) \dots \vartheta(T(\mu_c, c))\right),
\end{equation}
where here $\vartheta:\{1,\ldots,nd\}\to V$ is called \emph{proper} if there exists $\varphi:\{1,\ldots,d\}\to\{s-t\  \text{path } P\}$ such that $\vartheta(a)=\text{the label of the $j(a)$-th edge of } \varphi(k(a))$ (see the definitions of $j$ and $k$ in Section~\ref{sec:base}).

We now calculate partial evaluations in a column by column fashion from right to left.
In order to do this we define a partial placement $\vartheta|_{\leq k}$ to be the restriction of $\vartheta$ to the boxes in the first $k$ columns of $T$.

We now observe a common factor for a fixed partial placement $\vartheta|_{\leq k}$:
\begin{align*}
    \sum_{\text{proper $\vartheta$ extending $\vartheta|_{\leq k}$}}\prod_{c = 1}^{\lambda_1}\det{}_{\vartheta,c}
    &=
    \left(\prod_{c = 1}^{k}\det{}_{\vartheta|_{\leq k},c}\right) \underbrace{\left(\sum_{\text{proper $\vartheta$ extending $\vartheta|_{\leq k}$}}\prod_{c = k+1}^{\lambda_1}\det{}_{\vartheta,c}\right)}_{=:\alpha(\vartheta|_{\leq k})}\,.
\end{align*}
Each $\vartheta|_{\leq k}$ defines a set of $n$ partial $s-t$ paths (of potentially different lengths, one path for each block), where $\alpha(\vartheta|_{\leq k})$ only depends on the endpoints of these paths.
These paths are connected from $s$ up to these endpoints due to the nature of $T$ being ordered from left to right for each block of $n$ numbers.
This crucial observation allows us to store and reuse these values of $\alpha$ whenever two partial assignments correspond to lists of $n$ paths ending in the same vertices of $A$.

We can now calculate the evaluation as $
    f_T(p) = \alpha(\vartheta|_{\leq 0}) = \alpha(\emptyset).
$

Since the length of each of the paths defined by any $\vartheta|_{\leq k}$ are fixed for fixed $k$, there are at most $w^n$ possible different values for $\alpha$ that need to be computed.
So in total this evaluation algorithm has running time $O(w^{n+r} \poly(n,d,m))$.
The $w^{r}$ term comes from all the possibilities to extend a given $\vartheta|_{\leq k}$ by one column of $T$.
\end{proof}

\begin{remark}
    Note that Theorem~\ref{thm:evalabp} is a generalisation of the dynamic programming used in \cite{DBLP:conf/icalp/DorflerIP19} to evaluate HWVs at the Chow variety $\Ch^d_m$.
    The Chow variety $\Ch^d_m$ consists of products of $d$ linear forms $\ell_1 \odot \dots \odot \ell_d \in \Sym^d \IC^m$.
    Here the minimal ncABP $A$ of $\ell_1 \odot \dots \odot \ell_d$ corresponds to having subsets of $\{1, \ldots, d\}$ as vertices where two vertices $U, V \subseteq \{1, \ldots n\}$ are connected by an edge labeled $\ell_i$ iff $U \setminus V = \{i\}$ and $U \supset V$.
    Then $A$ has size exactly $2^d$ and width $\binom{d}{k}$ on layer $k$ while layer $k$ contains all the sets of size $k$.
\end{remark}

We now give the connection between the width of ncABPs, and the dimension of the partial derivative spaces of the symmetric tensors computed by the ncAPB.
We additionally show that ncABPs can efficiently compute partial derivatives.

First note that the following equivalence between partial derivatives and polynomial contractions is well known for fields of characteristic $0$, see for example \cite[Equation 1.1.2]{iarrobino1999power}.
We reformulate this as an equivalence between partial derivatives and tensor contractions instead. The tensor contraction $\langle \cdot, \cdot \rangle : \bigotimes^r V  \times \bigotimes^s V \to \bigotimes^{s-r} V$ is defined for any $r < s$ on the basis vectors via 
\[
    \langle e_{i_1} \otimes \cdots \otimes  e_{i_{r}}, e_{j_1} \otimes \cdots \otimes e_{j_{s}}\rangle = \begin{cases}
    e_{j_{r+1}} \otimes e_{j_{r+2}} \otimes \cdots \otimes e_{j_s} & \text{if $i_k = j_k$ for all $1 \leq k \leq r$}\\
    0 & \text{otherwise}\\
    \end{cases}
\]
and extended via linear continuation in both parameters.
\begin{lemma}
    \label{lem:derivativecontr}
    Let $\varphi$ be the canonical isomorphism between $\Sym^d \IC^m$ and $\IC[x_1,\ldots,x_m]_d$ defined via $\varphi\left(e_{i_1} \odot \dots \odot e_{i_d}\right) = x_{i_1} \cdot \dots \cdot x_{i_d}$.
    Then the partial derivative $\frac{\partial^k}{\partial\ell_1 \cdots \partial\ell_k}t$ of a symmetric tensor $t \in \Sym^d V$ is given by the tensor contraction $\frac{d!}{(d-k)!}\langle \ell_1 \otimes \dots \otimes \ell_k, t \rangle$.

    Since $t$ is symmetric the partial derivative $\frac{\partial^k}{\partial\ell_1 \cdots \partial\ell_k}t$ is also given by $\frac{d!}{(d-k)!} \langle \ell_1 \odot \dots \odot \ell_k, t \rangle$.
\end{lemma}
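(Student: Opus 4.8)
The plan is to reduce the identity to the case where $t$ is a pure power $v^{\odot d}=v^{\otimes d}$ with $v\in V$, on which both sides become one-line computations, and then invoke linearity. First I would fix conventions: identify $V$ with the degree-one forms via $e_i\leftrightarrow x_i$, so that a vector $\ell=\sum_i\ell_i e_i$ also denotes the directional derivative $\tfrac{\partial}{\partial\ell}:=\sum_i\ell_i\tfrac{\partial}{\partial x_i}$, and write $\langle\ell,v\rangle:=\sum_i\ell_i v_i$ for the scalar pairing. I would then record the two elementary inputs I need. (i) For $v=\sum_i v_i e_i$, multilinearity of $\odot$ gives $v^{\odot d}=\sum_{i_1,\dots,i_d}v_{i_1}\cdots v_{i_d}\,e_{i_1}\odot\cdots\odot e_{i_d}$, so $\varphi(v^{\odot d})=L^d$ with $L:=\sum_i v_i x_i$; moreover $v^{\odot d}=v^{\otimes d}$ since $v^{\otimes d}$ is already symmetric. (ii) The powers $\{v^{\otimes d}:v\in V\}$ span $\Sym^d V$ (classical over $\IC$, e.g.\ via polarization). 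Since both sides of the asserted formula are linear in $t$ --- the contraction by definition, and $t\mapsto\varphi^{-1}\big(\tfrac{\partial^k}{\partial\ell_1\cdots\partial\ell_k}\varphi(t)\big)$ because $\varphi$ and differentiation are linear --- it suffices to verify the formula for $t=v^{\otimes d}$.

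Next I would carry out the two computations on $t=v^{\otimes d}$. On the analytic side, $\tfrac{\partial}{\partial\ell}L^{e}=e\,\langle\ell,v\rangle\,L^{e-1}$, and since directional derivatives commute this iterates to $\tfrac{\partial^k}{\partial\ell_1\cdots\partial\ell_k}L^d=\tfrac{d!}{(d-k)!}\big(\prod_{j=1}^k\langle\ell_j,v\rangle\big)L^{d-k}$. On the tensor side, expanding $v^{\otimes d}=\sum_{j_1,\dots,j_d}v_{j_1}\cdots v_{j_d}\,e_{j_1}\otimes\cdots\otimes e_{j_d}$ and contracting against $\ell_1\otimes\cdots\otimes\ell_k=\sum_{i_1,\dots,i_k}(\ell_1)_{i_1}\cdots(\ell_k)_{i_k}\,e_{i_1}\otimes\cdots\otimes e_{i_k}$, the definition of $\langle\cdot,\cdot\rangle$ collapses the first $k$ slots and yields $\langle\ell_1\otimes\cdots\otimes\ell_k,v^{\otimes d}\rangle=\big(\prod_{j=1}^k\langle\ell_j,v\rangle\big)\,v^{\otimes(d-k)}$. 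Applying $\varphi$ and using (i) gives $\varphi\big(\tfrac{d!}{(d-k)!}\langle\ell_1\otimes\cdots\otimes\ell_k,v^{\otimes d}\rangle\big)=\tfrac{d!}{(d-k)!}\big(\prod_{j=1}^k\langle\ell_j,v\rangle\big)L^{d-k}$, which is exactly the analytic side; this proves the first assertion. For the second assertion I would note that on the same spanning set $\langle\ell_1\odot\cdots\odot\ell_k,v^{\otimes d}\rangle=\tfrac1{k!}\sum_{\sigma\in\aS_k}\langle\ell_{\sigma(1)}\otimes\cdots\otimes\ell_{\sigma(k)},v^{\otimes d}\rangle=\big(\prod_{j=1}^k\langle\ell_j,v\rangle\big)v^{\otimes(d-k)}=\langle\ell_1\otimes\cdots\otimes\ell_k,v^{\otimes d}\rangle$, and then extend by linearity in $t$.

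I do not expect a genuine obstacle here: the content is bookkeeping. The only points needing care are getting the constant $\tfrac{d!}{(d-k)!}$ right --- which is precisely why it is cleanest to work with $v^{\otimes d}$, where the $\tfrac1{d!}$ in the definition of $\odot$ disappears, rather than with a general basis element $e_{i_1}\odot\cdots\odot e_{i_d}$ --- and keeping straight that each $\ell_i$ plays the double role of a vector (in the contraction) and of a differentiation direction (in the derivative). One should also observe that the contraction $\langle\cdot,\cdot\rangle$ as defined requires $k<d$; the boundary case $k=d$ is covered by the same computation once one reads $\bigotimes^0 V=\IC$ and $v^{\otimes 0}=1$.
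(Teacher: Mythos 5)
Your proposal is correct, but it takes a different reduction than the paper. The paper's proof first reduces to $k=1$ (iterating contractions and derivatives) and then, by linearity, to basis data: the derivative $\tfrac{\partial}{\partial e_i}$ of a symmetrized basis tensor $e_{j_1}\odot\dots\odot e_{j_d}$, which is evaluated by writing the corresponding monomial as $x_i^h\cdot q$ and carefully counting which terms of the symmetric tensor survive the contraction (the step producing the factor $h$ versus $d$). You instead keep $k$ general and reduce, again by linearity, to the spanning set of pure powers $t=v^{\otimes d}$, where both sides are closed-form one-liners: the power rule gives $\tfrac{d!}{(d-k)!}\prod_j\langle\ell_j,v\rangle L^{d-k}$ and the rank-one structure gives $\langle\ell_1\otimes\dots\otimes\ell_k,v^{\otimes d}\rangle=\bigl(\prod_j\langle\ell_j,v\rangle\bigr)v^{\otimes(d-k)}$; the symmetrized contraction statement then follows on the same spanning set since the product $\prod_j\langle\ell_j,v\rangle$ is permutation-invariant. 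What your route buys is the avoidance of the monomial bookkeeping and of the $k=1$ iteration, and it also yields the second assertion with no extra work; what it costs is the additional (classical, characteristic-zero) input that the powers $v^{\otimes d}$ span $\Sym^d V$, which the paper's basis-element reduction does not need. Your closing remarks on the constant, on the double role of the $\ell_i$, and on the boundary case $k=d$ (read $\bigotimes^0 V=\IC$) are exactly the right points of care, and I see no gap.
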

\begin{proof}
    It suffices to prove this for the case $k=1$ since repeated tensor contraction is the same as one big tensor contraction and the same holds for partial derivatives.
    Since both tensor contraction and taking derivates are linear operations in both parameters we can restrict ourselves to the derivative $\frac{\partial}{\partial e_i}(e_{j_1} \odot \dots \odot e_{j_d})$ and prove that $\frac{\partial}{\partial e_i}(e_{j_1} \odot \dots \odot e_{j_d}) = d \cdot \langle \ell_1 \otimes \dots \otimes \ell_k, t \rangle$.
    The factor of $\frac{d!}{(d-k)!}$ is then the result of repeatedly taking the derivative.

    In case $e_i$ is not any of $e_{j_1}, \dots, e_{j_d}$ clearly
    \[
        \frac{\partial}{\partial e_i}(e_{j_1} \odot \dots \odot e_{j_d}) = 0 = \frac{d!}{(d-k)!} \langle e_i, e_{j_1} \odot \dots \odot e_{j_d} \rangle
    \]
    so w.l.o.g.\ we can now assume due to symmetry $e_{j_1} = e_i$.

    We can write $\varphi\left(e_i \odot e_{j_2} \odot e_{j_3} \odot \dots \odot e_{j_d}\right) = x_i^h \cdot q$ for some monomial $q \in \IC[x_1, \dots,x_m]$ not containing $x_i$.
    \begin{align*}
        \varphi\left(\frac{\partial}{\partial e_i}\left(e_i \odot e_{j_2} \odot e_{j_3} \odot \dots \odot e_{j_d}\right)\right)
        &= h \cdot x_i^{h-1} \cdot q\\
        &= \varphi\left(h \cdot e_{j_2} \odot e_{j_3} \odot \dots \odot e_{j_d}\right)\\
        &= \varphi\left(\langle e_i, h \cdot e_i \otimes (e_{j_2} \odot e_{j_3} \odot \dots \odot e_{j_d})\rangle\right)\\
        &= \varphi\left(\langle e_i, d \cdot e_i \odot e_{j_2} \odot e_{j_3} \odot \dots \odot e_{j_d}\rangle\right)
    \end{align*}
    The last equality follows from the fact that all terms of the symmetric tensor not containing $e_i$ as the first component of the tensor vanish under the tensor contraction.
\end{proof}

\begin{lemma}
    \label{lem:derivativeabps}
    If $A$ is an ncABP computing a symmetric tensor $p \in \Sym^d V$, then the $k$-th derivatives are linear combinations of the tensors computed at the $(d-k)$-th layer of $A$.
\end{lemma}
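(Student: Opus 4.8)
The plan is to cut every source--sink path of $A$ at layer $d-k$ and then feed the result into Lemma~\ref{lem:derivativecontr}. Fix $k$ with $0\le k\le d$ and let $L_{d-k}$ denote the set of vertices in layer $d-k$. Because $A$ is layered by distance from $s$ and every $s$--$t$ path has length $d$, each such path $P$ passes through exactly one vertex $v\in L_{d-k}$ and decomposes uniquely as an $s$--$v$ path $Q$ (length $d-k$) followed by a $v$--$t$ path $R$ (length $k$), with $w(P)=w(Q)\otimes w(R)$. Grouping the paths by $v$ and using distributivity of $\otimes$ over $+$, I would write
\[
 p \;=\; \sum_{s\text{--}t\text{ path }P} w(P)
 \;=\; \sum_{v\in L_{d-k}} \Big(\sum_{s\text{--}v\text{ path }Q} w(Q)\Big)\otimes\Big(\sum_{v\text{--}t\text{ path }R} w(R)\Big)
 \;=\; \sum_{v\in L_{d-k}} \hat{w}(v)\otimes \overleftarrow{w}(v),
\]
where $\hat{w}(v)\in V^{\otimes(d-k)}$ is exactly the tensor computed at $v$ (Definition~\ref{def:ncabp}) and $\overleftarrow{w}(v):=\sum_{v\text{--}t\text{ path }R} w(R)\in V^{\otimes k}$.

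Next I would apply Lemma~\ref{lem:derivativecontr}, which says that for linear forms $\ell_1,\dots,\ell_k\in V$ we have $\frac{\partial^k}{\partial\ell_1\cdots\partial\ell_k}\,p = \frac{d!}{(d-k)!}\,\langle \ell_1\otimes\cdots\otimes\ell_k,\,p\rangle$. Plugging in the decomposition of $p$ and using the symmetry of $p$ to reorder indices, the contraction of the first $k$ legs of $p$ equals the contraction of the last $k$ legs; then by bilinearity of $\langle\cdot,\cdot\rangle$ and its definition on basis tensors, contracting $\ell_1\otimes\cdots\otimes\ell_k$ against the last $k$ legs of $\hat{w}(v)\otimes\overleftarrow{w}(v)$ yields the scalar $\langle \ell_1\otimes\cdots\otimes\ell_k,\,\overleftarrow{w}(v)\rangle$ times $\hat{w}(v)$. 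Hence
\[
 \frac{\partial^k}{\partial\ell_1\cdots\partial\ell_k}\,p
 \;=\; \frac{d!}{(d-k)!}\sum_{v\in L_{d-k}} \langle \ell_1\otimes\cdots\otimes\ell_k,\,\overleftarrow{w}(v)\rangle\;\hat{w}(v),
\]
which exhibits the $k$-th derivative as a linear combination of the tensors $\hat{w}(v)$ computed at layer $d-k$ of $A$, as claimed.

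The one delicate point — the expected main obstacle — is the ``reorder indices'' step: one must verify that contracting $\ell_1\otimes\cdots\otimes\ell_k$ against the \emph{last} $k$ tensor legs of $p=\sum_{v}\hat{w}(v)\otimes\overleftarrow{w}(v)$ gives the same tensor in $V^{\otimes(d-k)}$ as contracting it against the first $k$ legs. This is immediate from the defining symmetry $p_{i_1,\dots,i_d}=p_{i_{d-k+1},\dots,i_d,i_1,\dots,i_{d-k}}$ of $p$, spelled out in coordinates. The boundary cases are trivial: for $k=0$ the ``derivative'' is $p=\hat{w}(t)$ and layer $d$ equals $\{t\}$; for $k=d$ it is a scalar, a multiple of $\hat{w}(s)=1$, and layer $0$ equals $\{s\}$. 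Note that this argument needs neither Proposition~\ref{prop:minimalsym} nor any symmetry of the intermediate tensors $\hat{w}(v)$; only the symmetry of the computed tensor $p$ is used.
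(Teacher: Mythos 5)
Your proof is correct and follows essentially the same route as the paper: the paper's (much terser) argument likewise views the derivative, via Lemma~\ref{lem:derivativecontr}, as a contraction that turns the last $k$ edges of every $s$--$t$ path into constants, leaving a linear combination of the tensors $\hat{w}(v)$ at layer $d-k$. The only difference is that you spell out the path decomposition at layer $d-k$ and the first-legs-versus-last-legs symmetry point, which the paper relegates to a footnote.
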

\begin{proof}
    As proven in Lemma~\ref{lem:derivativecontr} the derivatives are just tensor contractions. A tensor contraction on an ncABP replaces the last $k$ edges on each $s$-$t$ path by constants\footnote{due to the symmetry of $p$ we could even choose any $k$ layers and all outgoing edges out of these chosen layers would be replaced by constants for the derivative.}, thus directly proving the claim.
\end{proof}

We will now characterize the minimal size of ncABPs via the dimension of the partial derivative spaces.
For this we denote by $\partial^{=k}(t)$ the partial derivative space of $k$-th order for $t \in \Sym^d V$:
\[
    \partial^{=k}(t) := \{ \langle q, t \rangle \mid q \in \Sym^k V\}
\]
Analogously we define
\[
    \partial^{\leq k}(t) := \spn \bigcup_{i = 0}^k \partial^{=i}(t)\,.
\]
Note that the usage of tensor contractions instead of derivatives is just for simplicity.

For a list $q \in \{1,\ldots,m\}^k$ let $e_q := e_{q_1} \otimes \cdots \otimes e_{q_k}$.
For a tensor $p \in \otimes^d \IC^m$ we define the $m^k \times m^{d-k}$ matrix $M_k(p)$ whose rows are indexed by elements $q \in \{1,\ldots,m\}^k$ and whose columns are indexed by elements in $q' \in \{1,\ldots,m\}^{d-k}$ via
\begin{equation}\label{eq:Mk}
M_k(p)[q, q'] := \text{ the coefficient of $e_q \otimes e_{q'}$ in $p$}.
\end{equation}

\begin{proposition}
    \label{prop:minimalsym}
    If $A$ is an ncABP computing a symmetric tensor $p \in \Sym^d V$, then there is an ncABP $B$ with the following properties:
    \begin{enumerate}
        \item $B$ also computes $p$.
        \item Each layer of $B$ has at most as many vertices as the same layer in $A$.
        \item Each node of $B$ computes a symmetric tensor.
        \item The $k$-th layer of $B$ has precisely $\dim\partial^{=k}(p)$ many vertices which is the optimal width.
    \end{enumerate}
\end{proposition}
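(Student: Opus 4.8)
The plan is to prove Proposition~\ref{prop:minimalsym} as a symmetric refinement of Nisan's characterisation \cite{nisan1991lower} of ncABP width by the ranks of the matrices $M_k(p)$; throughout assume $p \neq 0$, the case $p = 0$ being trivial. The first step is to identify these ranks with derivative dimensions, namely to show that for symmetric $p$ the row space of $M_k(p)$ equals $\partial^{=k}(p)$. Indeed, the row of $M_k(p)$ indexed by $q \in \{1,\ldots,m\}^k$, read as an element of $\otimes^{d-k}\IC^m$, is exactly the contraction $\langle e_{q_1} \otimes \cdots \otimes e_{q_k}, p\rangle$; by symmetry of $p$ this is invariant under permuting $q$, hence equals $\langle e_{q_1} \odot \cdots \odot e_{q_k}, p\rangle$, and spanning over all $q$ gives $\{\langle u, p\rangle : u \in \Sym^k\IC^m\} = \partial^{=k}(p)$ since the $e_{q_1}\odot\cdots\odot e_{q_k}$ span $\Sym^k\IC^m$. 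Therefore $\rank M_k(p) = \dim\partial^{=k}(p)$, and since $p$ is symmetric $M_{d-k}(p) = M_k(p)^{\mathsf{T}}$, whence $\dim\partial^{=d-k}(p) = \dim\partial^{=k}(p)$.

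Next I would record the matching lower bound. For any ncABP $C$ computing $p$, splitting each $s$--$t$ path at its unique vertex in the $k$-th layer gives $p = \sum_{v} \hat w(v) \otimes \check w(v)$, the sum ranging over the $k$-th layer, where $\check w(v)$ is the sum of the weights of the $v$--$t$ paths; hence $M_k(p)$ is a sum of as many rank-one matrices as there are vertices in the $k$-th layer of $C$. So every ncABP computing $p$ has at least $\rank M_k(p) = \dim\partial^{=k}(p)$ vertices in its $k$-th layer. This yields the optimality asserted in property~4 and, once $B$ is built with $\dim\partial^{=k}(p)$ vertices per layer, also property~2.

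To build $B$, I would take the vertices of its $k$-th layer to be the elements $b^{(k)}_1, \ldots, b^{(k)}_{r_k}$ of a fixed basis of $\partial^{=d-k}(p) \subseteq \Sym^k\IC^m$, where $r_k = \dim\partial^{=d-k}(p) = \dim\partial^{=k}(p)$; layer $0$ is then a single source with computed tensor $1 \in \partial^{=d}(p) = \IC$, and layer $d$ a single sink with computed tensor $p \in \partial^{=0}(p) = \spn\{p\}$. For the edges I use the tensor form of Euler's identity: every $f \in \Sym^{k+1}\IC^m$ satisfies $f = \sum_{i=1}^m \langle e_i, f\rangle \otimes e_i$ (check this on rank-one symmetric tensors and extend linearly). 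Applying it to $f = b^{(k+1)}_j$: since $b^{(k+1)}_j$ is itself a contraction $\langle q, p\rangle$, the tensor $\langle e_i, b^{(k+1)}_j\rangle$ is a further contraction of $p$ and therefore --- contractions compose and $p$ is symmetric --- lies in $\partial^{=d-k}(p) = \spn\{b^{(k)}_1, \ldots, b^{(k)}_{r_k}\}$. Writing $\langle e_i, b^{(k+1)}_j\rangle = \sum_{\ell} c^{(k)}_{ij\ell}\, b^{(k)}_{\ell}$ and labelling the edge from $b^{(k)}_{\ell}$ to $b^{(k+1)}_j$ by the vector $\sum_i c^{(k)}_{ij\ell}\, e_i \in \IC^m$ then yields $b^{(k+1)}_j = \sum_{\ell} b^{(k)}_{\ell} \otimes (\text{label of } b^{(k)}_{\ell} \to b^{(k+1)}_j)$.

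A straightforward induction on $k$, with base case that the source computes $1$, then shows that the tensor computed at $b^{(k)}_j$ is $b^{(k)}_j$ itself for every $k$ and $j$: each node computes a symmetric tensor (property~3), the sink computes $p$ (property~1), each layer $k$ has $r_k = \dim\partial^{=k}(p)$ vertices, which is optimal by the second paragraph (property~4) and at most the number of vertices in the $k$-th layer of $A$, again by the second paragraph (property~2). I expect the main obstacle to be the edge construction in the third paragraph --- verifying that the labels always exist, i.e.\ the inclusion $\partial^{=d-k-1}(p) \subseteq \partial^{=d-k}(p) \otimes \IC^m$ --- which combines the compositional behaviour of tensor contractions against the symmetric tensor $p$ with the Euler-type reconstruction identity for symmetric tensors; everything else is bookkeeping once the rank identity of the first paragraph is in hand.
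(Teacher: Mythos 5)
Your proof is correct, and it rests on the same Nisan-style identity $\rank M_k(p)=\dim\partial^{=k}(p)$ as the paper's argument, but the construction is organized genuinely differently. The paper transforms the given ABP $A$: it factors $M_k=L_kR_k$ through the tensors computed at layer $k$, prunes linearly dependent vertices until each layer has exactly $\rank M_k(p)$ vertices (which makes property~2 automatic, since $B$ arises from $A$ by deletions), and then invokes Lemma~\ref{lem:derivativeabps} to replace the surviving layer-$k$ vertices by ones computing a symmetric basis of the derivative space, leaving the rewiring of the edges implicit. You instead build $B$ from scratch: the layer-$k$ vertices are basis elements of $\partial^{=d-k}(p)\subseteq\Sym^k\IC^m$, the edge labels come from the explicit reconstruction identity $f=\sum_i\langle e_i,f\rangle\otimes e_i$ for symmetric $f$ together with the closure of the derivative spaces of $p$ under further contraction, and property~2 follows not from pruning but from your separately proved width lower bound (the rank-one splitting of paths at layer $k$, which is the paper's $L_kR_k$ factorization in disguise). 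Your route buys a fully explicit edge construction and a proof independent of $A$ except for the lower bound; the paper's route buys brevity, since deletion from $A$ gives property~2 for free and no reconstruction identity has to be spelled out. Both arguments correctly use that $\dim\partial^{=d-k}(p)=\dim\partial^{=k}(p)$ for symmetric $p$, and your handling of the degenerate boundary layers (basis $\{1\}$ at the source, $\{p\}$ at the sink, $p\neq 0$) is consistent with the ABP conventions of Definition~\ref{def:ncabp}.
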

\begin{proof}
    We mainly follow Nisan \cite{nisan1991lower} with this contruction who constructed minimal ncABPs and extend this to also compute symmetric tensors at each node and establishing the connection to the dimensions of the partial derivative spaces.
    For an example of a minimal ncAPB with symmetric tensors computed at each node can be seen in Figure~\ref{fig:ncabp}.

    Let $v_1, \ldots, v_t$ be the vertices in a fixed layer $k$. Let $M_k[q, q'] := M_k(p)[q,q']$ from eq.~\eqref{eq:Mk}.
    Note that the row of $M_k$ corresponding to $q$ is given precisely by the tensor contraction $\langle q, p \rangle$ and it is thus by Lemma~\ref{lem:derivativecontr} a partial derivative of $k$-th order.
    Therefore $\rank M_k = \dim \partial^{=k}(p)$.

    Now we can construct two matrices $L_k$ and $R_k$.
    Here $L_k[q, i]$ for indices $q \in \{e_1, \dots, e_{\dim V}\}^{\otimes k}$ is defined as the coefficient of $q$ in $\hat{w}(v_i)$ and $R_k[i, q']$ for indices $q' \in \{e_1, \dots, e_{\dim V}\}^{\otimes (d-k)}$ is defined as the coefficient of $q'$ in the tensor computed by the restricted ncABP with source $v_i$.
    It is easy to verify $M_k = L_k R_k$.

    Hence if $t > \rank L_k$ there must be some vertices $v_i$ computing a linear combination of the other vertices in the same layer, thus all outgoing edges of $v_i$ can be replaced by precisely this linear combination, allowing us to remove $v_i$.
    In this way we can remove some $v_i$ as long as $t > \rank R_k$.

    After this process finishes we have $t = \rank L_k = \rank R_k = \rank M_k = \dim \partial^{=k}(p)$ proving the claims on the width of the layers.

    Since by Lemma~\ref{lem:derivativeabps} all the $(d-k)$-th partial derivatives are linear combinations of restrictions of the ncABP to the first $k$ levels we can now replace all vertices on the $k$-th level by $t$ vertices computing a symmetric tensor basis of the $k$-th partial derivatives thus proving the remaining claim.
\end{proof}

From this characterization of ncABP size as the rank of the partial derivative matrices we can also see that ncABP size is preserved under approximation.
This was remarked by Michael Forbes \cite{forbesWACT16}, but we give a proof for the sake of completeness.
\begin{corollary}
    \label{cor:abpclosure}
    Let $p \in \Sym^d V$ and $(A_i)_{i \in \IN}$ be ncABPs s.t.\ $A_i$ computes $p_i \in \otimes^d V$ and has size $s_i \leq s$ and width $w_i \leq w$ with $\lim_{i \to \infty}p_i = p$.
    Then there is an ncABP $A$ computing $p$ with size at most $s$ and width at most $w$.
\end{corollary}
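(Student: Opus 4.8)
The plan is to read ncABP layer sizes off the ranks of the contraction matrices $M_k$ of eq.~\eqref{eq:Mk}, exactly as in Proposition~\ref{prop:minimalsym}, and then to exploit the fact that matrix rank — and hence any finite maximum or finite sum of such ranks — is lower semicontinuous in the Euclidean topology.

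First I would isolate the ``easy half'' of Proposition~\ref{prop:minimalsym} that holds for \emph{any} ncABP: if an ncABP computes a tensor $q\in\otimes^d V$ and has $t_k$ vertices in layer $k$, then the factorization $M_k(q)=L_kR_k$ exhibited in the proof of Proposition~\ref{prop:minimalsym} (with $L_k$ having one column per layer-$k$ vertex) forces $\rank M_k(q)\le t_k$. Applied to each $A_i$ — which, note, we do \emph{not} assume to compute a symmetric tensor — this gives $\rank M_k(p_i)\le w_i\le w$ for every $k$ and $\sum_{k=0}^{d}\rank M_k(p_i)\le s_i\le s$.

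Next I would pass to the limit $i\to\infty$. Since $p_i\to p$ coefficientwise, the entries of $M_k(p_i)$ converge to those of $M_k(p)$ for every $k$. Because $\{A:\rank A\le r\}$ is the common zero set of the $(r{+}1)\times(r{+}1)$ minors, it is closed, so $A\mapsto\rank A$ is lower semicontinuous; hence so are $q\mapsto\max_k\rank M_k(q)$ and $q\mapsto\sum_{k=0}^{d}\rank M_k(q)$, being a finite maximum and a finite sum of lower semicontinuous functions. Their sublevel sets are therefore closed, and since every $p_i$ lies in $\{\,q:\max_k\rank M_k(q)\le w\,\}\cap\{\,q:\sum_k\rank M_k(q)\le s\,\}$ and $p_i\to p$, the same containment holds for $p$, i.e. $\max_k\rank M_k(p)\le w$ and $\sum_{k=0}^{d}\rank M_k(p)\le s$.

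Finally, $p$ is symmetric, so I would invoke Proposition~\ref{prop:minimalsym} itself — starting from any ncABP computing $p$, e.g. the obvious one obtained by writing $p$ as a sum of elementary tensors — to obtain an ncABP $A$ computing $p$ whose layer $k$ has exactly $\dim\partial^{=k}(p)=\rank M_k(p)$ vertices. Then the width of $A$ equals $\max_k\rank M_k(p)\le w$ and the size of $A$ equals $\sum_{k=0}^{d}\rank M_k(p)\le s$, which is the claim. I do not expect a genuine obstacle here; the only point requiring a little care is that the size bound is a sum over layers, so one must argue lower semicontinuity of $\sum_k\rank M_k(\cdot)$ rather than treating layers one at a time (the width bound, being a maximum, already follows from layerwise lower semicontinuity alone), and one must use the symmetry of $p$ only at the last step, since the approximating ncABPs need not compute symmetric tensors. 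Everything else is a direct application of Proposition~\ref{prop:minimalsym}.
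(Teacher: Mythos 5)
Your proposal is correct and follows essentially the same route as the paper: bound $\rank M_k(p_i)$ by the layer widths of $A_i$ via the $L_kR_k$ factorization, transfer these bounds to $p$ by continuity of the minors (the paper phrases your ``sum of ranks'' step as the rank of the direct sum $\oplus_{j} M_{j,p}$, which is the same thing), and then invoke Proposition~\ref{prop:minimalsym} on the symmetric limit $p$ to realize width $\max_k\rank M_k(p)\le w$ and size $\sum_k\rank M_k(p)\le s$. Your explicit remark that the approximating $A_i$ need not compute symmetric tensors is a point the paper leaves implicit, but it does not change the argument.
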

\begin{proof}
    Let the matrices $M_{k,p_i} := M_k(p_i)$ from eq.~\eqref{eq:Mk}.
    We have
    \[
        M_{k,p} = \lim_{i \to \inf} M_{k, p_i}\,.
    \]
    Since each $A_i$ has width at most $w$, we know that $\rank M_{k,p_i} \leq w_i \leq w$.
    This is characterized by all determinants of $(w+1) \times (w+1)$ minors of $M_{k,p_i}$ vanishing.
    So by continuity of the determinant also all $(w+1) \times (w+1)$ minors of $M_{k,p}$ vanish and thus $\dim \partial^{=k}(p) =\rank M_{k,p} \leq w$ and there is an ncABP $A$ with width at most $w$ by Proposition~\ref{prop:minimalsym}.

    This constructed $A$ directly has size at most $s$. For this we note that the partial derivatives of different orders are linearly independent, so $\dim \partial^{\leq d}(p) = \sum_{j=0}^d \dim \partial^{=j}(p) = s$.
    This is the same as looking at the rank of the direct sum $\oplus_{j=0}^d M_{j,p}$, so the bound on the size of $A$ follows from the same continuity argument.
\end{proof}

From this we can conclude an order of inclusion on the sets of symmetric tensors of small Waring rank, small border Waring rank and small non-commutative ncABP size.

\begin{corollary}
    Let $k \in \IN$ and
    \begin{align*}
        W_{k,d} &:= \{p \in \Sym^d V \mid \WR(p) \leq k\}\,,\\
        \overline{W_{k,d}} &:= \{p \in \Sym^d V \mid \WRbar(p) \leq k\}\,,\\
        B_{k,d} &:= \{p \in \Sym^d V \mid \ncw(p) \leq k\}\,.\\
        \overline{B_{k,d}} &:= \{p \in \Sym^d V \mid \underline{\ncw}(p) \leq k\}\,.\\
    \end{align*}
    Then
    \[
        W_{k, d} \subseteq \overline{W_{k, d}} \subseteq B_{k,d} = \overline{B_{k,d}}
    \]
and there exist $k,d$ for which the inclusions are strict.
\end{corollary}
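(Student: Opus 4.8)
The plan is to read the chain $W_{k,d}\subseteq\overline{W_{k,d}}\subseteq B_{k,d}=\overline{B_{k,d}}$ directly off results already established, and then to certify strictness by one explicit polynomial for the first inclusion and by a dimension count (or, for the sharpest small example, Alexander--Hirschowitz) for the second.

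For the inclusions I would argue as follows. $W_{k,d}\subseteq\overline{W_{k,d}}$ holds because $\WRbar(p)\le\WR(p)$ for every $p$. The inclusion $\overline{W_{k,d}}\subseteq B_{k,d}$ is exactly Theorem~\ref{thm:oursaxenaduality} in its sharpened form $\ncw(p)\le\WRbar(p)$: if $\WRbar(p)\le k$ then $\ncw(p)\le k$, i.e.\ $p\in B_{k,d}$. Finally $B_{k,d}=\overline{B_{k,d}}$: the inclusion $B_{k,d}\subseteq\overline{B_{k,d}}$ is trivial since $\underline{\ncw}(p)\le\ncw(p)$, and for the reverse, if $p\in\Sym^d V$ is a limit of tensors $p_i\in\otimes^d V$ of ncABP width $\le k$, then Corollary~\ref{cor:abpclosure} produces an ncABP of width $\le k$ computing $p$, so $\ncw(p)\le k$ and $p\in B_{k,d}$. (Equivalently, one may invoke Nisan's identity $\ncw=\underline{\ncw}$.)

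For strictness of $W_{k,d}\subsetneq\overline{W_{k,d}}$ I would take the classical example $p=x_1^{d-1}x_2\in\Sym^d\IC^2$ with $d\ge 3$ and $k=2$. Here $\WR(p)=d>2$, while $\tfrac{1}{d\varepsilon}\big((x_1+\varepsilon x_2)^d-x_1^d\big)\to p$ as $\varepsilon\to 0$ shows $\WRbar(p)\le 2$; hence $p\in\overline{W_{2,d}}\setminus W_{2,d}$.

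For strictness of $\overline{W_{k,d}}\subsetneq B_{k,d}$ I would use a dimension count at $d=3$, $k=3$, with $V=\IC^3$. For any $p\in\Sym^3\IC^3$ the spaces $\partial^{=0}(p)$ and $\partial^{=3}(p)$ are at most one-dimensional, $\partial^{=1}(p)$ is the image of a linear map out of $V$ and $\partial^{=2}(p)\subseteq V$; hence $\ncw(p)\le 3$ for every ternary cubic, so $B_{3,3}=\Sym^3\IC^3$, which has dimension $\binom{5}{3}=10$. On the other hand $\overline{W_{3,3}}$ is the affine cone over the secant variety $\sigma_3(v_3(\mathbb{P}^2))$, whose dimension is at most $3\cdot 3=9<10$ by the elementary bound $\dim\sigma_r(X)\le r(\dim X+1)-1$; thus $\overline{W_{3,3}}$ is a proper closed subset, and any $p$ outside it has $\ncw(p)\le 3$ but $\WRbar(p)\ge 4$, witnessing $\overline{W_{3,3}}\subsetneq B_{3,3}$. (The same $p$ also works for larger $\dim V$, since border Waring rank is unchanged when the ambient space is enlarged past the essential variables; alternatively Alexander--Hirschowitz gives directly that a generic ternary cubic has $\WRbar=4$ while $\ncw=3$.) The only step that is not routine bookkeeping is this second strictness assertion, and the obstacle there is quantitative: one must know that a secant variety of a Veronese falls strictly below the ambient symmetric power in dimension, which is what the secant-dimension bound (or Alexander--Hirschowitz) supplies.
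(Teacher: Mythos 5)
Your proof is correct, and for the inclusions it matches the paper in substance: the paper also gets $B_{k,d}=\overline{B_{k,d}}$ from Corollary~\ref{cor:abpclosure} and uses the same $x_1^{d-1}x_2$ example (with the same $\varepsilon$-limit and the Waring rank $d$ lower bound) for $W_{k,d}\subsetneq\overline{W_{k,d}}$; routing $\overline{W_{k,d}}\subseteq B_{k,d}$ through Theorem~\ref{thm:oursaxenaduality} rather than through the paper's direct embedding of a Waring decomposition into an ncABP plus closedness of $B_{k,d}$ is an immaterial difference. Where you genuinely diverge is the second strictness. The paper exhibits an explicit witness, the degree-$3$ trace polynomial $x_{1,1}^3+3x_{1,1}x_{1,2}x_{2,1}+3x_{1,2}x_{2,2}x_{2,1}+x_{2,2}^3$ in four variables, computes $\ncw=4$, and certifies $\WRbar\geq 5$ via a Young flattening evaluated in \texttt{Macaulay2}. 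You instead take a generic ternary cubic: $\ncw(p)\le 3$ for \emph{every} $p\in\Sym^3\IC^3$ because, by Proposition~\ref{prop:minimalsym}, the optimal width is $\max_k\dim\partial^{=k}(p)$ and each of these partial derivative spaces has dimension at most $3$, while the cone over $\sigma_3(v_3(\mathbb{P}^2))$ has dimension at most $9<10=\dim\Sym^3\IC^3$, so a generic ternary cubic has $\WRbar\ge 4$. This is a valid and more self-contained argument: it needs no flattening technique or computer-algebra verification, only the secant-dimension bound (Alexander--Hirschowitz is indeed not required, since you only need an upper bound on $\dim\sigma_3$), and your remark that border Waring rank is insensitive to enlarging the ambient space correctly extends the example to any $\dim V\ge 3$. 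What it gives up is explicitness: the paper's witness is a concrete polynomial with an exactly determined $\ncw$, whereas yours is a generic point of a complement of a proper closed subvariety. Both establish the claimed strictness.
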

\begin{proof}
    The inclusion $W_{k, d} \subseteq \overline{W_{k, d}}$ is trivial and $B_{k,d} = \overline{B_{k,d}}$ is proven in Corollary~\ref{cor:abpclosure}.
    To show $W_{k, d} \subsetneq \overline{W_{k, d}}$ is strict we refer to \cite{carlini2012solution} showing that $x^{d-1}y$ has Waring rank $d$ while it is known\footnote{Technically we need here that our base field is algebraically closed in order for this to be a border Waring rank decomposition, but $\IC$ satisfies this.} that $x^{d-1}y = \lim_{\varepsilon \to 0}\frac{1}{\varepsilon d}((x+\varepsilon y)^{d} - x^{d})$ and thus $x^{d-1}y$ has border Waring rank at most $2$.
    For the inclusion $W_{k, d} \subseteq B_{k,d}$ we can embed the $k$ summands $\ell_i^d$ of the Waring rank decomposition as disjoint $s-t$ paths in an ncABP of width $k$ and depth $d$.
    Here every edge on the path corresponding to $\ell_i^d$ has the label $\ell_i$.
    Since $B_{k,d}$ is closed this immediately proves $\overline{W_{k, d}} \subseteq B_{k,d}$.
    An example for $W_{k, d} \neq B_{k,d}$ is given by the $2 \times 2$ matrix multiplication polynomial $p=x_{1,1}^3 + 3 x_{1,1}x_{1,2}x_{2,1} + 3 x_{1,2}x_{2,2}x_{2,1} + x_{2,2}^3$ that is studied in \cite{CHILO:18}: We have $\ncw(p)=4$, but $\WRbar(p)\geq 5$, which can be seen using \emph{Young flattenings}. This representation theoretic technique is explained for example in \cite{Far:16}. The \texttt{Macaulay2} code

\smallskip

\scalebox{0.8}{
\begin{minipage}{13cm}%
\texttt{%
loadPackage "PieriMaps"\\
MX = pieri ($\{$4,3,2,2$\}$ , $\{$1,2,4$\}$ , QQ [x11,x12,x21,x22])\\
p = x11*x11*x11 + 3*x11*x12*x21 + 3*x12*x22*x21 + x22*x22*x22\\
rank(diff(p,MX))/rank(diff(x11\^{}3,MX))
}
\end{minipage}
}

\smallskip

\noindent outputs 5, which is the lower bound on the border Waring rank.
\end{proof}

Note that the following is still unknown:
\begin{question}
    Is there a polynomial $q$, such that $B_{k,d} \subseteq W_{q(k),d}$ or $B_{k,d} \subseteq \overline{W_{q(k),d}}$?
\end{question}

\section{Treewidth of Young tableaux}
\label{sec:treewidth}
Let $S$ be an arbitrary Young tableau containing the numbers $\{1, \ldots, n\}$.
We can associate with $S$ the undirected graph $G_S = (V_S, E_S)$ where $V_S = \{1, \ldots, n\}$ and $\{i, j\} \in E_S$ iff $i$ and $j$ are contained in some common column in $S$, see Figure~\ref{fig:treewidthevalexample}(a) and (b).

We are now going to study how we can use the graph parameter treewidth of $G_S$ to speed up the evaluation of highest weight vectors.
Treewidth has been intensely studied by Robertson and Seymour and has been applied numerous times to construct faster graph algorithms for cases where the treewidth is bounded by a function $o(n)$, most notably some algorithms for $\NP$-hard problems restricted to planar graphs, for example $3$-coloring.
See \cite{cygan2015parameterized} for an introduction to treewidth algorithms.

\begin{definition}\label{def:treedecomp}
    A \emph{tree decomposition} of a graph $G = (V, E)$ is a tree $\CT$ with vertices $X_1, X_2, \ldots, X_t$ called bags where $X_i \subseteq V$ and the following properties hold:
    \begin{itemize}
        \item $\cup_{i=1}^t X_i = V$
        \item For every edge $\{u, v\} \in E$ there is some bag $X_i$, s.t.\ $\{u, v\} \subseteq X_i$.
        \item For every vertex $v \in V$ the bags containing $v$ form a subtree of $\CT$.
    \end{itemize}

    The \emph{width} of a tree decomposition is the size of the largest bag minus one.
    The \emph{treewidth} of $G$ is then the smallest possible width of a tree decomposition for $G$.
\end{definition}

Often solving problems on graphs of bounded treewidth is easier then the general problem and indeed this is also the case for evaluating the highest weight vector corresponding to a graph if the graph $G_{\hat{T}}$ has bounded or low treewidth.

\begin{theorem}
    \label{thm:evaltreewidth}
    The evaluation $f_{\hat T}(p)$ for a highest weight vector $f_{\hat T} \in \Sym^n\Sym^d \IC^m$ given by a Young tableau $\hat{T}$ with content $n \times d$ and  a symmetric tensor $p \in \Sym^d \IC^m$ given by an ncABP $A$ of width $w$ can be computed in time $w^{\omega(\tau+1)}\poly(n,d,m,|\CT|)$ if a tree decomposition $\CT$ of $G_{\hat{T}}$ of width $\tau$ and size $|\CT|$ is given and given that we can multiply two matrices of size $\leq k \times k$ in time $O(k^\omega)$.
\end{theorem}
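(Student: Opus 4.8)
The plan is to turn the column-by-column dynamic program of Theorem~\ref{thm:evalabp} into a dynamic program over a tree decomposition $\CT$ of $G_{\hat T}$, replacing the linear sweep over columns by a bottom-up sweep over bags. First I would preprocess $A$ via Proposition~\ref{prop:minimalsym} so that $A$ has width exactly $\dim\partial^{=k}(p)\le w$ on layer $k$ and every node computes a symmetric tensor; this is what makes the "state" of a block depend only on the endpoint of its partial path and not on which path was chosen. Recall from the proof of Theorem~\ref{thm:evalabp} that a partial placement assigns to each of the $n$ blocks a partial $s$–$t$ path in $A$, and the relevant data is only the tuple of endpoints (one vertex of $A$ per block). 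The key structural observation is: a column $c$ of $\hat T$ only couples the blocks whose labels appear in column $c$, i.e.\ it couples exactly the vertices forming a clique in $G_{\hat T}$; hence by the definition of a tree decomposition each column is "owned" by some bag $X_i$ that contains all blocks occurring in it.

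Concretely I would root $\CT$, and after the standard step of converting it to a nice tree decomposition, associate to every bag $X_i$ (with $|X_i|\le\tau+1$ blocks) a table indexed by assignments of an $A$-vertex to each block in $X_i$ — a table of size at most $w^{\tau+1}$, since block $j$ contributes a factor of at most $w$ (the width bound applies layerwise, and the layer is determined by how many columns of block $j$ lie below the current bag, so the bound still holds entry-by-entry). Each such entry stores the sum, over all consistent placements of the blocks in the subtree rooted at $X_i$, of the product of the columnwise determinants $\det_{\vartheta,c}$ for all columns $c$ owned by bags in that subtree. At a leaf bag the table is initialized directly. At an introduce/forget node one marginalizes or extends a single block, which costs $w^{\tau+1}\cdot\poly$. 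At a join node one must combine two child tables agreeing on the shared bag; written as a matrix product this is exactly a multiplication of matrices of dimension $\le w^{\tau+1}$, costing $w^{\omega(\tau+1)}\poly$. When a column $c$ of $\hat T$ is owned by $X_i$, I multiply each entry of the $X_i$-table by the corresponding determinant $\det_{\vartheta,c}$, which is an $\le d\times d$ determinant over linear forms obtained by reading off the labels on the partial paths — $\poly(n,d,m)$ per entry. Finally, evaluating the table at the root (all blocks now carry full-length $s$–$t$ paths, so every endpoint equals $t$) and reading the single entry yields $f_{\hat T}(p)$; correctness follows from eq.~\eqref{eq:sumproperthetapaths} by exactly the same "common factor" factorization used in Theorem~\ref{thm:evalabp}, now distributed along the tree rather than along a line.

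The running time is then $|\CT|$ many node-updates, each costing at most $w^{\omega(\tau+1)}\poly(n,d,m)$ (the join nodes being the dominant term, since $\omega\ge2$ makes this absorb the introduce/forget and column-multiplication costs), giving the claimed $w^{\omega(\tau+1)}\poly(n,d,m,|\CT|)$. The main obstacle I expect is purely bookkeeping rather than conceptual: one must be careful that the "endpoint of a partial path for block $j$" is well-defined and that its layer in $A$ is a deterministic function of the tableau and the bag (so that the width bound $w$ really applies per block per entry), and one must verify that the left-to-right ordering within each block — which in Theorem~\ref{thm:evalabp} guaranteed the partial paths are contiguous prefixes from $s$ — still guarantees contiguity when columns are processed in the order dictated by the tree rather than left to right; this is fine because, as in Lemma~\ref{lem:orderedentries}, we may permute within blocks freely after restricting to $\Sym^d\IC^m$, and we may choose per block an ordering compatible with the chosen root, but it needs to be stated carefully. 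A secondary technical point is the standard reduction to a nice tree decomposition and the assignment of each column of $\hat T$ (equivalently, each clique/edge-set of $G_{\hat T}$) to a unique bag containing it, which is routine given Definition~\ref{def:treedecomp}.
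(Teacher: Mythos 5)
Your overall plan coincides with the paper's: attach each column of $\hat T$ to a bag containing its entries, root the decomposition, run a bottom-up dynamic program whose tables have a dimension of size at most $w^{\tau+1}$, and make the binary/join combination a fast matrix multiplication, which is where the exponent $\omega(\tau+1)$ comes from. However, there is a genuine gap in your DP state. You index a bag's table by a \emph{single} ABP vertex per block of the bag, an entry storing the sum over all placements of the columns in the subtree --- implicitly assuming that inside the subtree the partial path of each block is a prefix of an $s$--$t$ path, so one frontier vertex suffices. With any fixed intra-block ordering (which is all Lemma~\ref{lem:orderedentries} buys you), the columns of a block lying in the subtree rooted at a node form a contiguous but in general \emph{middle} segment of that block's path: at a join node $v$ with a block $i \in X_v$ whose columns occur below both children, the two children's column sets are disjoint and cannot both be prefixes of one linear order on block $i$'s entries, so your suggestion to ``choose per block an ordering compatible with the chosen root'' cannot restore prefix-contiguity. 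Consequently a vector-shaped table cannot be combined correctly at join nodes: the right child's sums were computed as if block $i$'s segment started at the source, whereas in the combined object it must start where the left child's segment ended, and in a layered ncABP these are different quantities not recoverable from one another. Your own remedy, ``written as a matrix product this is exactly a multiplication of matrices of dimension $\le w^{\tau+1}$'', is dimensionally inconsistent with a table indexed by single assignments; the vector formulation would only give an entrywise product, and it computes the wrong value.

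The fix is exactly what the paper does: index the table at $v$ by \emph{pairs} of assignments, an entry vertex and an exit vertex for each block of $X_v$ (the quantity $D[v,\Phi_\text{start},\Phi_\text{end}]$ of \eqref{def:D}), so a block's contribution inside the subtree is a sum over walks between prescribed endpoints; the join recursion \eqref{eq:claimdistributive} is then literally a product of two $w^{|X_v|}\times w^{|X_v|}$ matrices, which is what legitimizes the $w^{\omega(\tau+1)}$ cost. A second point you assume without argument is why blocks outside the bag need not be indexed at all: this uses the connectivity property of the tree decomposition --- the columns of a block $i \notin X_v$ lie either entirely inside or entirely outside the subtree of $v$, forcing its entry/exit vertices to be the unique source/sink or to coincide, which is the content of \eqref{eq:speedup}. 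With these two repairs your argument becomes essentially the paper's proof; the preprocessing via Proposition~\ref{prop:minimalsym} is unnecessary, since the DP only ever needs sums over walks between prescribed vertices of an arbitrary layered ncABP.
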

\begin{proof}
Let $A$ be a ncABP with source $v_\text{source}$ and sink $v_\text{sink}$. The label on the edge from $v$ to $w$ shall be called $A_{(v,w)}$.

A tableau $\hat T$ with its corresponding graph $G_{\hat T}$ is depicted in Figure~\ref{fig:treewidthevalexample}(a) and (b).
It is well known that every clique of a graph is fully contained in some bag of its tree decomposition.
Every column $c$ of $\hat{T}$ corresponds to a clique in $G_{\hat{T}}$, so there is some bag $X_i$ of $\CT$ which contains all the vertices corresponding to the entries of $c$.
We modify $\CT$ by adding a new vertex which is only adjacent to $X_i$.
This vertex is from now on associated with the column $c$ and contains all the entries contained in $c$ as its bag.
An example is given in Figure~\ref{fig:treewidthevalexample}(c).
Without loss of generality we can assume that if a vertex has only one child, then the vertex has the same bag as the child.
From now on we only need the structure of the subtree $\CT'$ of $\CT$ whose leaves are the vertices associated with columns and every vertex removed that is not on the path between two of these vertices.
We interpret $\CT'$ as an ordered binary tree rooted at an arbitrary internal\footnote{i.e.\ a non-leaf} vertex $r$.
In case any vertex $v$ of $\CT'$ has more than two children, we replace $v$ by a binary tree, where each added vertex has the same bag as $v$, see Figure~\ref{fig:treewidthevalexample}(e).
Since $\CT'$ is also a tree decomposition, for every vertex $v$ with two children $v_\text{left}$ and $v_\text{right}$ we have
\begin{equation}\label{eq:treedecompinclusion}
X_{v_\text{left}} \cap X_{v_\text{right}}\subseteq X_v.
\end{equation}

We start with a few observations.
We sort the leaves of $\CT'$ according to when they are visited by depth-first search that always takes the left child first. In this way every leaf $\CT'$ gets assigned an index from 1 to $\la_1$, which we call the \emph{traversal index} of the leaf.
Let $\nu_i$ denote the length of the column of $\la$ with traversal index $i$.
For any internal vertex $v$ of $\CT'$ let $\textsf{leftmost}(v)$ denote the traversal index of the leftmost leaf of the subtree rooted at $v$.
Analogously, let $\textsf{rightmost}(v)$ denote the traversal index of the rightmost leaf of the subtree rooted at $v$. For a leaf $v$ we define $\textsf{leftmost}(v)=\textsf{rightmost}(v)$ to be the traversal index of $v$.
For any internal vertex $v$ of $\CT'$ with two children $v_\text{left}, v_\text{right}$ by definition we have
\begin{equation}\label{eq:leftmostv}
\textsf{leftmost}(v)=\textsf{leftmost}(v_\text{left}) \text{ and } \textsf{rightmost}(v)=\textsf{rightmost}(v_\text{right})
\end{equation}
and
\begin{equation}\label{eq:rightmostleft}
\textsf{rightmost}(v_\text{left})=\textsf{leftmost}(v_\text{right})-1.
\end{equation}
We define $\text{leaves}(v)$ to be the set of leaves in $\CT'$ with traversal index at least $\textsf{leftmost}(v)$ and at most $\textsf{rightmost}(v)$.

For $1 \leq i \leq n$, $0 \leq t \leq \la_1$, define $\kappa_t(i)$ to be the number of times the number $i$ appears in columns with traversal index at most $t$. Figure~\ref{fig:treewidthevalexample}(f) shows diamond separators that mark the values for $t$ so that $\kappa_t(i)$ is the number of times the number $i$ appears in columns left of the diamond $t$. For an internal vertex $v$ with children $v_\text{left}$ and $v_\text{right}$ we define $\textsf{mid}(v) := \textsf{rightmost}(v_\text{left})$. Pictorially, this is the number of the diamond separator between the left and right subtree of $v$. If $v$ only has one child $w$, then $\textsf{mid}(v) = \textsf{mid}(w)$.
Let the ncABP $A$ have layers $L_0, \ldots, L_d$, $|L_0|=|L_d|=1$. We assume that in $A$ all edges between any layers $L_i$ and $L_{i+1}$ exist, hence we allow edges that are labelled with 0.
For $0 \leq t \leq \la_1$ define
\[
\CF_t := L_{\kappa_{t}(1)} \times \cdots \times L_{\kappa_{t}(n)}
\]
Let $t_\text{start} \leq t_\text{end}$ and let $\Phi_\text{start} \in \CF_{t_\text{start}}$ and $\Phi_\text{end} \in \CF_{t_\text{end}}$.
A \emph{$\Phi_\text{start}$-$\Phi_\text{end}$-multiwalk} is defined as a finite sequence
\[
\CW := (\Phi_{t_\text{start}}, \Phi_{t_\text{start}+1}, \Phi_{t_\text{start}+2}, \ldots, \Phi_{t_\text{end}})
\]
such that each $\Phi_{t} \in \CF_t$, and for all $t,i$ with $\kappa_t(i)=\kappa_{t+1}(i)$ we have $\Phi_{t}(i)=\Phi_{t+1}(i)$.
To explain this notion more pictorially, we define
a \emph{lazy walk} in a digraph to be a walk that as a step can remain at its vertex instead of advancing over an edge. If a digraph does not have any loops, then for every finite lazy walk there is a corresponding walk on the digraph that is obtained if we add all loops: Remaining at a vertex has the same effect as taking the loop. This is also true in the reverse direction.
The $i$-th walk of a $\Phi_\text{start}$-$\Phi_\text{end}$-multiwalk
$(\Phi_{t_\text{start}}, \Phi_{t_\text{start}+1}, \ldots, \Phi_{t_\text{end}})$
is defined as the sequence
$(\Phi_{t_\text{start}}(i), \Phi_{t_\text{start}+1}(i), \ldots, \Phi_{t_\text{end}}(i))$,
which is a lazy walk in $A$ from $\Phi_{t_\text{start}}(i)$ to $\Phi_{t_\text{end}}(i)$.

We now define the determinant $\det(\CW)$.
Note that $\Phi_{t}$ and $\Phi_{t+1}$ differ in exactly $\nu_{t+1}$ positions.
We define $\det{}_{\CW,t+1}$ as the determinant of the $\nu_{t+1}\times \nu_{t+1}$-matrix obtained from taking the top $\mu_{t+1}$ of the edge labels in $A$ that connect $\Phi_{t}$ with $\Phi_{t+1}$.
We define $\det(\CW)$ as
\[
\det(\CW) := \prod_{c=t_\text{start}+1}^{t_\text{end}}\det{}_{\CW,c}
\]
For $\Phi_\text{start} \in \CF_{\textsf{leftmost}(v)-1}$ and $\Phi_\text{end} \in \CF_{\textsf{rightmost}(v)}$ we define
\begin{equation}\label{def:D}
D[v,\Phi_\text{start},\Phi_\text{end}] :=
\sum_{\Phi_\text{start}\text{-}\Phi_\text{end}\text{-multiwalk } \CW} \det(\CW) \end{equation}

Note that
\begin{equation}\label{def:DII}
D[v,\Phi_\text{start},\Phi_\text{end}] =
\sum_{\Phi_\text{start}\text{-}\Phi_\text{end}\text{-multiwalk } \CW} \prod_{c=\textsf{leftmost}(v)}^{\textsf{rightmost}(v)} \det{}_{\CW,c}.
\end{equation}

We claim that
\begin{equation}\label{eq:fTDst}
f_{\hat T}(p) = D[r, \vv{v_\textup{source}}, \vv{v_{\textup{sink}}}],
\end{equation}
where $\vv{v_{\textup{source}}} = (v_{\textup{source}},v_{\textup{source}},\ldots,v_{\textup{source}})$ and $\vv{v_{\textup{sink}}} = (v_{\textup{sink}},v_{\textup{sink}},\ldots,v_{\textup{sink}})$. To see this,
we observe that left left-to-right ordering of the leaves of $\CT'$ defines an ordering on the columns of tableaux of shape $\la$. We call this ordering the leaf-ordering.
Let $T$ be the following tableau of shape $\la$ and content $(nd) \times 1$ that is a preimage of $\hat T$ under the $\hat.$-operation (see Section~\ref{sec:base}):
We greedily go through the columns of $\hat T$ from left to right in the leaf-order and replace each entry $i$ by the smallest still unused number from $\{(i-1) \cdot d + 1, (i-1) \cdot d + 2, \ldots, i \cdot d\}$, see Figure~\ref{fig:treewidthevalexample}(d) for an example.
Then $f_{\hat{T}}(p) = f_{T}(p)$ is given as
\begin{align*}
    f_{T}(p) \stackrel{\eqref{eq:sumpropertheta}}{=} \sum_{\text{proper\ }\vartheta} \prod_{c = 1}^{\lambda_1}\det{}_{\vartheta, c} \stackrel{(\ast)}{=} D[r, \vv{v_{\textup{source}}}, \vv{v_{\textup{sink}}}]\,.
\end{align*}
$(\ast)$ can be seen from the fact that there is a natural 1:1 correspondence between proper $\vartheta$ and $\vv{v_{\textup{source}}}$-$\vv{v_{\textup{sink}}}$-multiwalks $\CW$: A tensor assigned to the $i$-th block of $T$
is given by an $v_{\textup{source}}$-$v_{\textup{sink}}$-path in $A$, which uniquely specifies the $i$-th path of the multiwalk $\CW$.
If $\vartheta$ is mapped to $\CW$ under this bijection, then $\det(\CW) = \prod_{c = 1}^{\lambda_1}\det{}_{\vartheta, c}$.
This proves \eqref{eq:fTDst}.

We now explain how to compute $D[r, \vv{v_{\textup{source}}}, \vv{v_{\textup{sink}}}]$ recursively over the tree structure of~$\CT'$. We separate the explanation into several claims.
First, we claim that for every internal vertex $v \in \CT'$ with only one child $v'$ we have
\begin{align}\label{eq:claimonlychild}
    D[v, \Phi_\text{start}, \Phi_\text{end}] = D[v', \Phi_\text{start}, \Phi_\text{end}].
\end{align}
The right-hand side is well-defined, because $\textsf{leftmost}(v)=\textsf{leftmost}(v')$ and $\textsf{rightmost}(v)=\textsf{rightmost}(v')$. The equality follows directly from the definition: \eqref{def:D}.
Next, we claim that for every leaf vertex $v \in \CT'$ with corresponding column $c$ we have
\begin{align}\label{eq:claimleaf}
    D[v, \Phi_\text{start}, \Phi_\text{end}] = \det(A_{(\Phi_\text{start}(c_1), \Phi_\text{end}(c_1))}, A_{(\Phi_\text{start}(c_2), \Phi_\text{end}(c_2))}, \cdots, A_{(\Phi_\text{start}(c_{|c|}), \Phi_\text{end}(c_{|c|}))}).
\end{align}
This follows from the fact that in this case there is exactly one $\Phi_\text{start}$-$ \Phi_\text{end}$-multiwalk $\CW$, and 
$\textsf{leftmost}(v)=\textsf{rightmost}(v)$ is the traversal index of $v$.
The crucial claim is the following.
For any inner vertex $v$ of $\CT'$ with two children $v_\text{left}, v_\text{right}$ we claim
\begin{align}\label{eq:claimdistributive}
    D[v, \Phi_\text{start}, \Phi_\text{end}] = \sum_{\Phi_\text{mid} \in \CF_{\textsf{mid}(v)}} D[v_\text{left},\Phi_\text{start}, \Phi_\text{mid}] \cdot D[v_\text{right},\Phi_\text{mid}, \Phi_\text{end}].
\end{align}
Before proving this, first note that $D[v_\text{left},\Phi_\text{start}, \Phi_\text{mid}]$ on the right-hand side is well-defined, as can be seen by combining \eqref{eq:leftmostv} and \eqref{eq:rightmostleft} with the definition of $\textsf{mid}(v)$.
Analogously, $D[v_\text{right},\Phi_\text{mid}, \Phi_\text{end}]$ is well-defined.

The key tool in the proof of \eqref{eq:claimdistributive} is the bijection
\begin{equation}\label{eq:pathbijection}
\{\Phi_\text{start}\text{-}\Phi_\text{end}\text{-multiwalk}\} \simeq \bigcup_{\Phi_\text{mid} \in \CF_{\textsf{mid}(v)}} \left(\{\Phi_\text{start}\text{-}\Phi_\text{mid}\text{-multiwalk}\} \times \{\Phi_\text{mid}\text{-}\Phi_\text{end}\text{-multiwalk}\}\right)
\end{equation}
given by splitting the multiwalk into two multiwalks, where the inverse map is given by contatenating two multiwalks. The union on the right-hand side is a disjoint union.
\eqref{eq:claimdistributive} is now proved by a direct calculation as follows.
\begin{eqnarray*}
&&\sum_{\Phi_\text{mid} \in \CF_{\textsf{mid}(v)}} D[v_\text{left},\Phi_\text{start}, \Phi_\text{mid}] \cdot D[v_\text{right},\Phi_\text{mid}, \Phi_\text{end}]
\\
&\stackrel{\eqref{def:DII}}{=}&
\sum_{\Phi_\text{mid} \in \CF_{\textsf{mid}(v)}}
\left(\sum_{\Phi_\text{start}\text{-}\Phi_\text{mid}\text{-multiwalk } \CW_\text{left}} \ 
\prod_{c_\text{left}=\textsf{leftmost}(v_\text{left})}^{\textsf{rightmost}(v_\text{left})}\det{}_{\CW,c_\text{left}}
\right)\\
&&\hspace{2cm}
\cdot\left(\sum_{\Phi_\text{mid}\text{-}\Phi_\text{end}\text{-multiwalk } \CW_\text{right}} \ 
\prod_{c_\text{right}=\textsf{leftmost}(v_\text{right})}^{\textsf{rightmost}(v_\text{right})}\det{}_{\CW,c_\text{right}}\right)
\\
&\stackrel{\eqref{eq:pathbijection}}{=}&
\sum_{\Phi_\text{start}\text{-}\Phi_\text{end}\text{-multiwalk } \CW} \left(\prod_{c_\text{left}=\textsf{leftmost}(v_\text{left})}^{\textsf{rightmost}(v_\text{left})}\det{}_{\CW,c_\text{left}}\right)
\left(\prod_{c_\text{right}=\textsf{leftmost}(v_\text{right})}^{\textsf{rightmost}(v_\text{right})}\det{}_{\CW,c_\text{right}}\right)
\\
&\stackrel{\eqref{eq:rightmostleft}}{=}&
\sum_{\Phi_\text{start}\text{-}\Phi_\text{end}\text{-multiwalk } \CW} \left(\prod_{c=\textsf{leftmost}(v_\text{left})}^{\textsf{rightmost}(v_\text{right})}\det{}_{\CW,c}\right)
\stackrel{\eqref{eq:leftmostv},\eqref{def:DII}}{=} D[v,\Phi_\text{start},\Phi_\text{end}].
\end{eqnarray*}
This proves \eqref{eq:claimdistributive}.

Equations \eqref{eq:claimonlychild}, \eqref{eq:claimleaf}, and \eqref{eq:claimdistributive} give us a procedure to compute $D[r, \vv{s}, \vv{t}]$ by induction over the structure of the tree $\CT'$.
But we can improve the running time significantly as follows.

We arbitrarily order the vertices within each layer such that every vertex $v \in L_j$ has an index $\iota(v) \in \{1,\ldots,|L_j|\}$. Of course the vertex $v$ with $\iota(v)=1$ plays no special role, but $v$ is useful to find a normal form for paths from $L_j$ to $L_j$ of length 0: They go from $v$ to $v$.

For a set of $X \subseteq \IN$
define the subset $\CF_{t}^{X}\subseteq \CF_{t}$ as
\[
\CF_{t}^{X} := \{\Phi \in \CF_{t} \mid \text{ for all } i \notin X: \ \Phi(i)=v \text{ with } \iota(v)=1\}
\]
For every $\Phi \in \CF_{t}$ define $\Phi^X \in \CF_{t}^X$ via
\begin{equation}\label{eq:defsupX}
\Phi^X(i) := \begin{cases}
\Phi(i) & \text{ if } i \in X \\
v \text{ with } \iota(v)=1 & \text{ otherwise}.
\end{cases}
\end{equation}
Clearly $\vv{v_{\textup{source}}}=\vv{v_{\textup{source}}}^{X_r}$ and $\vv{v_{\textup{sink}}}=\vv{v_{\textup{sink}}}^{X_r}$, hence $D[r, \vv{v_{\textup{source}}}, \vv{v_{\textup{sink}}}]=D[r, \vv{v_{\textup{source}}}^{X_r}, \vv{v_{\textup{sink}}}^{X_r}]$.
We claim that
\begin{equation}\label{eq:speedup}
D[v,\Phi_\text{start}, \Phi_\text{end}]=
\begin{cases}
0 & \text{if there exists $i \notin X_v$ with $\iota(\Phi_\text{start}(i))\neq\iota(\Phi_\text{end}(i))$} \\
D[v,\Phi_\text{start}^{X_v}, \Phi_\text{end}^{X_v}]
& \text{otherwise}.
\end{cases}
\end{equation}
To see this, first assume that there is $i \notin X_v$ with $\iota(\Phi_\text{start}(i))\neq\iota(\Phi_\text{end}(i))$.
Since $i \notin X_v$, we either have \emph{all} instances of $i$ in $\text{leaves}(v_\text{left})$ or \emph{all} instances of $i$ in $\text{leaves}(v_\text{right})$ or \emph{no} instance of $i$ in $\text{leaves}(v)$. The first two cases are impossible, because in those cases we have $\iota(\Phi_\text{start}(i))=1=\iota(\Phi_\text{end}(i))$, because the first and last layer only have one vertex each. In the third case, there is no $\Phi_\text{start}$-$\Phi_\text{end}$-multiwalk, because the $i$-th path in the multiwalk would have to start at a vertex and end at a different vertex without using any edge. This proves the first case of \eqref{eq:speedup}.
Now, assume that for all $i \notin X_v$ we have $\iota(\Phi_\text{start}(i))=\iota(\Phi_\text{end}(i))$.
We have a distinction into the same three cases as above.
If $i \notin X_v$ has all instances of $i$ in $\text{leaves}(v_\text{left})$ or in $\text{leaves}(v_\text{right})$, then
$\iota(\Phi_\text{start}(i))=1=\iota(\Phi_\text{end}(i))$, which implies
$\Phi_\text{start}^{X_v}(i)=\Phi_\text{start}(i)$ and $\Phi_\text{end}^{X_v}(i)=\Phi_\text{end}(i)$.
If $i \notin X_v$ has no instance of $i$ in $\text{leaves}(v)$, then the value of $\iota(\Phi_\text{start}(i))$ and $\iota(\Phi_\text{end}(i))$ does not affect $D[v,\Phi_\text{start},\Phi_\text{end}]$, as long as $\iota(\Phi_\text{start}(i))=\iota(\Phi_\text{end}(i))$, because the $i$-th path in any $\Phi_\text{start}$-$\Phi_\text{end}$-multiwalk is unique and of length 0.
This proves \eqref{eq:speedup}.

We use the short notation $\iota(\Phi_\text{start}|_{\overline{X_v}})\neq\iota(\Phi_\text{end}|_{\overline{X_v}})$ for the first condition in \eqref{eq:speedup}: The $\iota$-values of the vectors $\Phi_\text{start}$ and $\Phi_\text{end}$ are different when restricted to the complement of $X_v$.

To compute $D[v, \Phi_\text{start}^{X_v}, \Phi_\text{end}^{X_v}]$ recursively we observe the following.
\begin{eqnarray}\label{eq:claimdistributiveII}
D[v, \Phi_\text{start}^{X_v}, \Phi_\text{end}^{X_v}] &=& 
\sum_{\Phi_\text{mid} \in \CF_{\textsf{mid}(v)}} D[v_\text{left},\Phi_\text{start}^{X_v}, \Phi_\text{mid}] \cdot D[v_\text{right},\Phi_\text{mid}, \Phi_\text{end}^{X_{v}}]\\\nonumber
&\stackrel{\eqref{eq:speedup}}{=}& 
\sum_{\Phi_\text{mid}} D[v_\text{left},\Phi_\text{start}^{X_{v}}, \Phi_\text{mid}] \cdot D[v_\text{right},\Phi_\text{mid}, \Phi_\text{end}^{X_{v}}],
\end{eqnarray}
where the second sum is over all those $\Phi_\text{mid} \in \CF_{\textsf{mid}(v)}$ that satisfy both
\begin{itemize}
\item[$\bullet$] $\Phi_\text{start}^{X_v}(i) = \Phi_\text{mid}(i)$ for all $i \notin X_{v_\text{left}}$ and
\item[$\bullet$] $\Phi_\text{end}^{X_v}(i) = \Phi_\text{mid}(i)$ for all $i \notin X_{v_\text{right}}$.
\end{itemize}
It follows from \eqref{eq:defsupX} that all these summation indices $\Phi_\text{mid}$ satisfy $\iota(\Phi_\text{mid}(i))=1$ for all $i \in (\overline{X_v} \cap \overline{X_{v_\text{left}}})\cup(\overline{X_v} \cap \overline{X_{v_\text{right}}})$, where the bar denotes the set complement.
But $(\overline{X_v} \cap \overline{X_{v_\text{left}}})\cup(\overline{X_v} \cap \overline{X_{v_\text{right}}}) = \overline{X_v \cup (X_{v_\text{left}} \cap X_{v_\text{right}})}$, which equals $\overline{X_v}$ by \eqref{eq:treedecompinclusion}. This implies that $\Phi_\text{mid} \in \CF_{\textsf{mid}(v)}^{X_v}$.
Therefore we can rewrite \eqref{eq:claimdistributiveII} as
\begin{eqnarray*}
D[v, \Phi_\text{start}^{X_v}, \Phi_\text{end}^{X_v}] &=& 
\sum_{\Phi_\text{mid} \in \CF^{X_v}_{\textsf{mid}(v)}} D[v_\text{left},\Phi_\text{start}^{X_v}, \Phi_\text{mid}] \cdot D[v_\text{right},\Phi_\text{mid}, \Phi_\text{end}^{X_{v}}]
\end{eqnarray*}
\begin{eqnarray*}
\stackrel{\eqref{eq:speedup}}{=}\sum_{\Phi_\text{mid} \in \CF^{X_v}_{\textsf{mid}(v)}}
\left(
0 \ \text{ if } \ \iota(\Phi_\text{start}^{X_v}|_{\overline{X_{v_\text{left}}}})\neq\iota(\Phi_\text{mid}|_{\overline{X_{v_\text{left}}}}),
\atop
D[v_\text{left},(\Phi_\text{start}^{X_v})^{X_{v_\text{left}}}, \Phi_\text{mid}^{X_{v_\text{left}}}] \text{ otherwise}
\right)
\cdot
\left(
0 \ \text{ if } \ \iota(\Phi_\text{mid}|_{\overline{X_{v_\text{right}}}})\neq\iota(\Phi_\text{end}^{X_v}|_{\overline{X_{v_\text{right}}}}),
\atop
D[v_\text{right},\Phi_\text{mid}^{X_{v_\text{right}}}, (\Phi_\text{end}^{X_v})^{X_{v_\text{right}}}]
\right)
\end{eqnarray*}

Using this equality we can compute the $|\CF^{X_v}|\times|\CF^{X_v}|$ matrix $D[v,\CF^{X_v},\CF^{X_v}]$ as the product of two matrices of dimensions $|\CF^{X_v}|\times|\CF^{X_v}|$ whose entries can be computed recursively: They are either 0 or an entry in $D[w,\CF^{X_w},\CF^{X_w}]$, where $w$ is a child of $v$. The entries in $D[w,\CF^{X_w},\CF^{X_w}]$ are not computed individually, but recursively as a product of matrices.
For vertices that have only one child we use the assumption that they have the same bag, so that we can apply \eqref{eq:claimonlychild}. leaves are treated via \eqref{eq:claimleaf}.

If we can multiply two matrices of size $\leq k \times k$ in time $O(k^\omega)$, then the total running time to compute $D[r,\CF^{X_r},\CF^{X_r}]$ is $w^{\omega(\tau+1)}\poly(n,d,m, |\CT|)$. Note that $D[r,\CF^{X_r},\CF^{X_r}]$ is a $1 \times 1$ matrix whose entry is the desired $D[r,\vv{v_{\textup{source}}}, \vv{v_{\textup{sink}}}]$.
\end{proof}

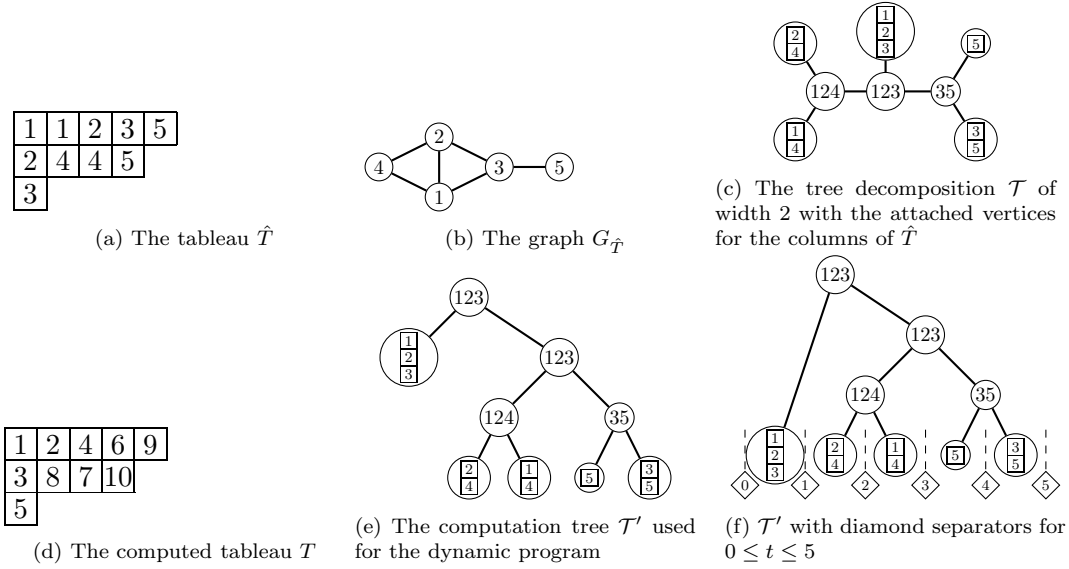
\begin{figure}[tpb]
    \centering
    \begin{subfigure}[b]{0.3\textwidth}
        \young(11235,2445,3)
        \caption{The tableau $\hat{T}$}
    \end{subfigure}
    \begin{subfigure}[b]{0.3\textwidth}
        \begin{tikzpicture}[scale=0.8,
            vertex/.style={circle, draw, scale=0.7, inner sep=1pt, minimum size=15pt}]
            \node [vertex] at (1, 1.5) (4) {$4$};
            \node [vertex] at (2, 1) (1) {$1$};
            \node [vertex] at (2, 2) (2) {$2$};
            \node [vertex] at (3, 1.5) (3) {$3$};
            \node [vertex] at (4, 1.5) (5) {$5$};

            \foreach \from/\to in {1/2, 1/3, 2/3, 2/4, 1/4, 3/5}{%
                \draw[thick] (\from) -- (\to);
            }
        \end{tikzpicture}
        \caption{The graph $G_{\hat{T}}$}
    \end{subfigure}
    \begin{subfigure}[b]{0.3\textwidth}
        \centering
        \begin{tikzpicture}[scale=0.8,
            vertex/.style={circle, draw, scale=0.7, inner sep=1pt, minimum size=15pt}]
            \node [vertex] at (1, 1) (124) {$124$};
            \node [vertex] at (2, 1) (123) {$123$};
            \node [vertex] at (3, 1) (35) {$35$};

            \node [vertex] at (2, 2) (c123) {\scriptsize $\young(1,2,3)$};
            \node [vertex] at (0.5, 0.2) (c14) {\scriptsize $\young(1,4)$};
            \node [vertex] at (0.5, 1.8) (c24) {\scriptsize $\young(2,4)$};
            \node [vertex] at (3.5, 0.2) (c35) {\scriptsize $\young(3,5)$};
            \node [vertex] at (3.5, 1.8) (c5) {\scriptsize $\young(5)$};

            \foreach \from/\to in {124/123, 123/35, 124/c14, 124/c24, 123/c123, 35/c35, 35/c5}{%
                \draw[thick] (\from) -- (\to);
            }
        \end{tikzpicture}
        \caption{The tree decomposition $\CT$ of width $2$ with the attached vertices for the columns of $\hat{T}$}
    \end{subfigure}
    
    \begin{subfigure}[b]{0.3\textwidth}
        \def\zehn{10}
        \young(12469,387\zehn,5)
        \caption{The computed tableau $T$}
    \end{subfigure}
    \begin{subfigure}[b]{0.3\textwidth}
        \centering
        \begin{tikzpicture}[scale=0.8,
            vertex/.style={circle, draw, scale=0.7, inner sep=1pt, minimum size=15pt}]
            \node [vertex] at (2, 3) (1) {$123$};
            \node [vertex] at (3.5, 2) (2) {$123$};
            \node [vertex] at (2.5, 1) (3) {$124$};
            \node [vertex] at (4.5, 1) (4) {$35$};

            \node [vertex] at (1, 2) (c123) {\scriptsize $\young(1,2,3)$};
            \node [vertex] at (3, 0) (c14) {\scriptsize $\young(1,4)$};
            \node [vertex] at (2, 0) (c24) {\scriptsize $\young(2,4)$};
            \node [vertex] at (5, 0) (c35) {\scriptsize $\young(3,5)$};
            \node [vertex] at (4, 0) (c5) {\scriptsize $\young(5)$};

            \foreach \from/\to in {1/c123, 1/2, 2/3, 2/4, 3/c14, 3/c24, 4/c35, 4/c5}{%
                \draw[thick] (\from) -- (\to);
            }
        \end{tikzpicture}
        \caption{The computation tree $\CT'$ used for the dynamic program}
    \end{subfigure}\quad
    \begin{subfigure}[b]{0.3\textwidth}
        \centering
        \begin{tikzpicture}[scale=0.8,
            vertex/.style={circle, draw, scale=0.7, inner sep=1pt, minimum size=15pt},separator/.style={diamond, draw, fill=white, scale=0.7, inner sep=1pt, minimum size=15pt}]
            \node [vertex] at (2, 3) (1) {$123$};
            \node [vertex] at (3.5, 2) (2) {$123$};
            \node [vertex] at (2.5, 1) (3) {$124$};
            \node [vertex] at (4.5, 1) (4) {$35$};

            \node [vertex] at (1, 0) (c123) {\scriptsize $\young(1,2,3)$};
            \node [vertex] at (3, 0) (c14) {\scriptsize $\young(1,4)$};
            \node [vertex] at (2, 0) (c24) {\scriptsize $\young(2,4)$};
            \node [vertex] at (5, 0) (c35) {\scriptsize $\young(3,5)$};
            \node [vertex] at (4, 0) (c5) {\scriptsize $\young(5)$};

            \draw [dashed] (0.5, -0.5) -- (0.5, 0.55);
            \draw [dashed] (1.5, -0.5) -- (1.5, 0.55);
            \draw [dashed] (2.5, -0.5) -- (2.5, 0.55);
            \draw [dashed] (3.5, -0.5) -- (3.5, 0.55);
            \draw [dashed] (4.5, -0.5) -- (4.5, 0.55);
            \draw [dashed] (5.5, -0.5) -- (5.5, 0.55);
            \node [separator] at (0.5, -0.5) {\scriptsize $0$};
            \node [separator] at (1.5, -0.5) {\scriptsize $1$};
            \node [separator] at (2.5, -0.5) {\scriptsize $2$};
            \node [separator] at (3.5, -0.5) {\scriptsize $3$};
            \node [separator] at (4.5, -0.5) {\scriptsize $4$};
            \node [separator] at (5.5, -0.5) {\scriptsize $5$};
            
            \foreach \from/\to in {1/c123, 1/2, 2/3, 2/4, 3/c14, 3/c24, 4/c35, 4/c5}{%
                \draw[thick] (\from) -- (\to);
            }
        \end{tikzpicture}
        \caption{$\CT'$ with diamond separators for $0 \leq t \leq 5$}
    \end{subfigure}

    \caption{An example execution of the preparation of the algorithm of Theorem~\ref{thm:evaltreewidth} for the Young tableau~$\hat{T}$.}
    \label{fig:treewidthevalexample}
\end{figure}

\begin{remark}
    Even though only the size of the largest bag of the tree decomposition influences the asymptotic running time, it is advisable for an actual implementation of this algorithm to minimize the size of the individual bags.
    This can be achieved by removing each number from any bag which is not on a direct path between columns that contain that particular number or even splitting bags in some cases.
\end{remark}

This dependency on the treewidth instead of $n$ is significant, since for example the graphs of semistandard Young tableaux with only two rows are planar and thus have a treewidth of $O(\sqrt{n})$.
Additionally this dependency is tight: we can construct semistandard Young tableaux with two rows and rectangular content which induce multigraph versions of the $n \times n$ grid-graphs and thus have treewidth $\Omega(\sqrt{n})$.
We prove both these observations in Proposition~\ref{prop:semistdtreewidth}.

Let $p$ be given as a Waring rank decomposition of rank $r$.
From this we can easily construct an ncABP of width $w=r$, in the same way we did to prove Theorem~\ref{thm:oursaxenaduality}.
Therefore the evaluation algorithm in Theorem~\ref{thm:evaltreewidth} now takes time $O(w^{\omega(\tau+1)}) \cdot \poly(n,d,m) = O(r^{\omega(\tau+1)}) \cdot \poly(n,d,m)$.
Comparing this to the naive algorithm in Remark~\ref{rem:waringevaluation}, we get a faster evaluation in the case $\tau \in o(n)$, which for example is achieved for all semistandard tableaux with two rows which we will now prove.

As a first step, we will prove that in this case the corresponding graphs are always planar.
\begin{proposition}
    \label{prop:2rowplanar}
    Let $S$ be a semistandard Young tableau with two rows.
    Then $G_S$ is planar.
\end{proposition}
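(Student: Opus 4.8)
The plan is to exploit the \emph{staircase} structure forced by semistandardness. Write the two rows of $S$ as $u_1\le u_2\le\cdots$ and $v_1\le v_2\le\cdots$; a column with two entries, say the $c$-th, contributes the edge $\{u_c,v_c\}$ of $G_S$, and we have $u_c<v_c$ while both sequences $(u_c)_c$ and $(v_c)_c$ are non-decreasing. Vertices occurring only in the first row are isolated in $G_S$ and may be deleted, so I assume every vertex occurs in some two-element column. I would first record three elementary consequences: (i) for a fixed value $z$ the columns having $z$ in the first row form a consecutive block $A(z)$, and likewise the columns with $z$ in the second row form a consecutive block $B(z)$ (equal entries in a non-decreasing row are consecutive); (ii) as $z$ ranges over the first-row (resp.\ second-row) values, these blocks partition the set of two-element columns into consecutive pieces arranged in increasing order of $z$; and (iii) if $z$ occurs in both rows then $B(z)$ lies entirely to the left of $A(z)$ --- otherwise a column at or past the start of $A(z)$ would have first entry $\ge z$ by (ii) and first entry $<z$ by virtue of having $z$ as its second entry, contradicting strict column increase.

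Next I would build an auxiliary bipartite graph $\mathcal B$ by \emph{splitting} each doubly occurring value $z$ into an $a$-copy and a $b$-copy: the vertices of $\mathcal B$ are the $a$-blocks and the $b$-blocks, and an $a$-block is joined to a $b$-block iff they share a column. Since every two-element column lies in exactly one $a$-block and one $b$-block, $\mathcal B$ has an obvious planar drawing in a horizontal strip: place the $a$-blocks, in their left-to-right order, as points above a horizontal line $L$, the $b$-blocks below $L$, and draw each edge near a column witnessing it. Edges out of distinct $a$-blocks occupy disjoint portions of $L$, and edges out of a single $a$-block fan out to disjoint portions of $L$ as well, so no two edges cross; moreover every $a$-vertex is accessible "from straight above" and every $b$-vertex "from straight below" in this drawing (nothing is drawn above the $a$-points or below the $b$-points). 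The graph $G_S$ is obtained from $\mathcal B$ by identifying, for every doubly occurring $z$, its $a$-copy with its $b$-copy; by (iii) these two copies are non-adjacent in $\mathcal B$, so no loops are created, and one checks that the identifications reproduce exactly the edge set of $G_S$.

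To finish, I would carry out the identifications inside the drawing. For each doubly occurring $z$ route a new arc $\mu_z$ that leaves the $a$-copy of $z$ upward, runs far to the left in the unbounded face, comes back in from below, and reaches the $b$-copy of $z$; contracting all the $\mu_z$ realizes the identifications, and the result is planar provided the $\mu_z$ are pairwise disjoint. Each $\mu_z$ runs entirely through the unbounded face and hence crosses no edge of $\mathcal B$, so the only point to verify is that the $\mu_z$ form a non-crossing family of chords of the outer face. Reading the outer boundary cyclically lists the $a$-blocks in increasing order and then, after the right side, the $b$-blocks in decreasing order; with $s(z)$ and $t(z)$ denoting the rank of $z$ among the first-row values and among the second-row values, $\mu_z$ is the chord from the $a$-block of rank $s(z)$ to the $b$-block of rank $t(z)$. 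Since $z\mapsto s(z)$ and $z\mapsto t(z)$ are both strictly increasing, for $z<z'$ we get $s(z)<s(z')$ and $t(z)<t(z')$; on the cyclic order above the four endpoints then appear as $[a,s(z)],[a,s(z')],[b,t(z')],[b,t(z)]$, so the chords $\{[a,s(z)],[b,t(z)]\}$ and $\{[a,s(z')],[b,t(z')]\}$ are nested, not interleaved. Hence all the $\mu_z$ can be drawn disjointly (concretely, give $\mu_z$ a vertical lane at horizontal coordinate $-\varepsilon_z$ with $\varepsilon_z$ increasing in $z$, at height increasing in $z$), producing a planar embedding of $G_S$.

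I expect the only real difficulty to be expository rather than mathematical: making the "route around the outer face and contract disjoint arcs" step fully rigorous --- i.e.\ pinning down accessibility of the split vertices from the unbounded face and the nested, non-crossing choice of lanes --- whereas the combinatorial facts (i)--(iii) that drive everything are immediate from semistandardness and two rows. (As a side remark, the same argument shows $\mathcal B$ itself is a forest, which is why this amount of work suffices.)
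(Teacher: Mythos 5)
Your proposal is correct and follows essentially the same route as the paper's proof: you split each vertex into a top-row copy and a bottom-row copy, use semistandardness (both rows non-decreasing, columns strictly increasing) to obtain a non-crossing two-layer drawing of the resulting bipartite graph, and then re-identify the two copies of each vertex while preserving planarity. The only difference is how that final identification is realized --- the paper rotates the two vertex sets by $180$ degrees and superposes them so that the edges become spirals, whereas you route pairwise non-crossing (nested) arcs through the outer face and contract them --- which is a variation in execution (arguably more explicit) rather than a genuinely different argument.
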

\begin{proof}
    Let $S$ contain the numbers $\{1, \ldots, n\}$.
    We first start by constructing a different graph $G'_S = (L_S \dot{\cup} R_S, E'_S)$ which is a bipartite graph consisting of two copies of vertices $L_S = \{1_L, \ldots, n_L\}$, $R_S = \{1_R, \ldots, n_R\}$.
    Now $\{i_L, j_R\} \in E'_S$ iff {\scriptsize $\young(i,j)$} is a column in $S$.
    Here the vertical order in $S$ matters, so due to $S$ being semistandard we know $i < j$.
    Ordering the vertices, s.t.\ the vertices on the left and those on the right are each ordered in ascending order, we will now prove that $G'_S$ is outerplanar and can be drawn with straight lines.
    So let $\{i,j\}, \{k,l\} \in E'_S$ be two different edges where the column {\scriptsize $\young(i,j)$} appears to the left of the column {\scriptsize $\young(k,l)$} in $T$.
    Due to $T$ being semistandard this implies $i \leq k$ and $j \leq l$, which means those two edges do not cross.
    Since the edges were arbitrary no two edges intersect and $G'_S$ is outerplanar.

    Because both sets of vertices are ordered in ascending order we can now continuously rotate both vertex sets by 180 degrees and move them on top of each other, in this way unifying both copies of each vertex while still keeping the graph planar (the edges are not straight lines anymore, but they have the shape of a spiral).
    This resulting graph is precisely $G_S$, thus proving the claim.
\end{proof}

Now we can commence to prove the upper bound on the treewidth of Young tableaux with two rows.
Additionally we prove that this bound is tight.

\begin{proposition}
    \label{prop:semistdtreewidth}
    \begin{enumerate}
        \item
            Let $S_n$ be a semistandard Young tableau with two rows containing the numbers $\{1, \ldots, n\}$.
            Then $G_{S_n}$ has treewidth at most $O(\sqrt{n})$.
        \item
            Additionally there is a family $(S'_n)$ of semistandard Young tableaux with two rows containing the numbers $\{1, \ldots, n\}$ exactly $4$ times each and $G_{S'_n}$ having treewidth $\Omega(\sqrt{n})$.
    \end{enumerate}
\end{proposition}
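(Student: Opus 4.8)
\emph{Part~(1)} is immediate from Proposition~\ref{prop:2rowplanar}. The graph $G_{S_n}$ has vertex set $\{1,\dots,n\}$, so it has $n$ vertices, and by Proposition~\ref{prop:2rowplanar} it is planar. It is a classical consequence of the Lipton--Tarjan planar separator theorem that every planar graph on $n$ vertices has treewidth $O(\sqrt n)$: recursively splitting the graph along balanced separators of size $O(\sqrt n)$ yields a tree decomposition of width $O(\sqrt n)$. Hence $G_{S_n}$ has treewidth $O(\sqrt n)$, and there is nothing more to do for part~(1).

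For \emph{Part~(2)} the plan is to construct, for $n=k^2$ with $k$ even, a two-row semistandard tableau $S'_n$ of content $n\times 4$ whose associated simple graph $G_{S'_n}$ is the $k\times k$ grid graph $H$; since $H$ has treewidth $\Theta(k)=\Theta(\sqrt n)$, this gives the desired lower bound (letting the family range only over these $n$ is harmless). The structural fact to exploit is which simple graphs occur as $G_S$ for a two-row semistandard $S$: in a two-box column the top entry is strictly smaller than the bottom one, the top entries of the two-box columns form a weakly increasing sequence from left to right and so do the bottom entries, and one-box columns contribute no edges; conversely, any multiset of pairs $a_i<b_i$ whose first coordinates are weakly increasing and whose second coordinates are weakly increasing is realized by such a tableau. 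So it suffices to (i) list the edges of $H$ as such a monotone sequence of pairs and (ii) repair the content. For (i) I would use a \emph{diagonal labelling}: label the cell $(r,c)$ of the grid first by its antidiagonal index $r+c$ and, within an antidiagonal, by the row index $r$, so that the cells of antidiagonal $s$ receive a contiguous block of labels, all smaller than the labels of antidiagonal $s+1$. Every grid edge joins two consecutive antidiagonals, so enumerating the edges block by block (from lower antidiagonals to higher) makes both the small endpoints and the large endpoints increase across blocks; inside the block between antidiagonals $s$ and $s+1$ the edges form a non-crossing ``zig-zag'' bipartite graph (the cell with row index $r$ on antidiagonal $s$ is joined only to the cells with row indices $r$ and $r+1$ on antidiagonal $s+1$), and ordering these edges by small endpoint and then by large endpoint keeps both sequences weakly increasing; a short case check confirms this. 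Thus $H$ is realized as $G_S$ for some two-row semistandard $S$.

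For (ii): in $H$ the interior cells have degree $4$, the non-corner boundary cells degree $3$, and the four corners degree $2$, and in the tableau above each number $v$ appears exactly $\deg_H(v)$ times, so appearances must be added. The only admissible device is to \emph{double} two-box columns — a repeated column leaves the simple graph $G_{S'_n}$ unchanged and keeps the tableau semistandard — since one-box columns are forced to carry the largest entries and hence cannot pad small-labelled vertices. Doubling a multiset $H'$ of columns raises the count of $v$ by $\deg_{H'}(v)$, so I need a sub-multigraph $H'\subseteq H$ with $\deg_{H'}(v)=4-\deg_H(v)$ for all $v$; since interior cells must get degree $0$, $H'$ must live on the boundary cycle $C_{4(k-1)}$, and taking on each of the four boundary segments between consecutive corners the set of alternate edges that begins and ends with the two corner-incident edges gives every non-corner boundary cell degree $1$ and every corner degree $2$ — and this parity works out precisely because $k$ is even. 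Inserting each doubled column next to its twin in the enumeration from (i) gives a two-row semistandard $S'_n$ of content $n\times 4$ with $G_{S'_n}=H$, so $\textup{tw}(G_{S'_n})=\Theta(\sqrt n)$. I expect this content-repair step to be the only genuinely delicate point: balancing the appearance counts of \emph{all} numbers simultaneously, under semistandardness and the monotonicity constraint, is exactly what forces the even grid side and the alternating boundary pattern, while everything else is bookkeeping with weakly increasing sequences.
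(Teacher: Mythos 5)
Your proposal is correct and follows essentially the same route as the paper: part (1) via planarity of $G_{S_n}$ plus the standard $O(\sqrt n)$ treewidth bound for planar graphs (the paper cites the planar excluded grid theorem, you cite Lipton--Tarjan separators, an immaterial difference), and part (2) via exactly the paper's construction of an even-sided grid made $4$-regular by doubling alternating boundary edges at the corners and realized as $G_S$ through the antidiagonal (layer-by-layer, non-crossing) enumeration of edges. The only cosmetic deviation is that you restrict to $n=k^2$ with $k$ even, whereas the paper covers all $n$ by appending four single-box columns with a new largest entry per missing number, which does not change the treewidth.
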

\begin{proof}
    Let $S_n$ be a semistandard Young tableau with $2$ rows containing the numbers $\{1, \ldots, n\}$.
    Then $G_{S_n}$ is a planar graph with $n$ vertices by Proposition~\ref{prop:2rowplanar}.
    The fact that planar graphs on $n$ vertices have treewidth bounded by $O(\sqrt{n})$ follows directly from the famous planar excluded grid theorem \cite{DBLP:journals/jct/RobertsonST94}.

    W.l.o.g.\ we can restrict $n$ to be of the form $(2k)^2$ with $k \in \IN \setminus \{0\}$, since we can always extend the tableau without increasing the treewidth by appending four columns containing only a single cell with the number $i+1$ to the end of $S'_i$ to get $S'_{i+1}$.
    This change corresponds to adding a new isolated vertex to $G_{S'_i}$.
    We repeat this until $N=(2k)^2$, which scales $n$ up by at most a factor of 8.

    Every layered multigraph $G = (V, E)$ with the following properties is the graph $G_S$ corresponding to some semistandard tableaux $S$ where each number $i$ appears exactly as often as the degree of $i$ in $G$:
    \begin{enumerate}
        \item $V = \{1, \ldots, n\}$
        \item Edges in $G$ only go from one layer to the next.
        \item Edges between any two layers can be drawn with straight lines without crossing when the vertices in each layer are placed in ascending order.
        \item All vertices in any layer $j$ are labeled smaller than those in layer $j+1$ and each form a consecutive sequence of integers. \label{property:ascendinglayers}
    \end{enumerate}
    Some examples are provided in Figure~\ref{fig:regularmultigrid}.
    This can be shown constructively and separately for every pair of layers $j$ and $j+1$.
    Since the edges between two layers are not crossing, there is a unique ordering on the set of edges from left to right.
    Adding columns corresponding to the edges in exactly this order to $S$ forms exactly the wanted semistandard tableaux:
    For $\{u, v\} \in E$ we add the column {\scriptsize $\young(u,v)$} to $S$.
    Thus the entries corresponding to layer $j$ are only in the first row while those corresponding to layer $j+1$ only appear in the second row.
    Because of property~(\ref{property:ascendinglayers}) the columns of edges from layer $j$ to layer $j+1$ can directly be concatenated to the columns of edges from layer $j+1$ to layer $j+2$ without violating the property of being semistandard.
    Clearly $S$ contains each number $i$ exactly once for each incident edge of $i$ in $G$.

    We now take the $2k \times 2k$ grid $\boxplus_{2k} = (V_{2k}, E_{2k})$ where
    \begin{align*}
        V_{2k} &= \{(x,y) \mid x, y \in \{1, \ldots, 2k\}\}\\
        E_{2k} &= \{\{(x_1,y_1), (x_2,y_2)\} \mid |x_1 - x_2| + |y_1 - y_2| = 1\}
    \end{align*}
    This graph is known to have treewidth exactly $2k$ \cite{cygan2015parameterized}.
    We now create a multigraph by doubling all the edges $\{(1, 2i-1), (1, 2i)\}, \{(2k, 2i-1), (2k, 2i)\}, \{(2i-1, 1), (2i, 1)\}$ and $\{(2i-1, 2k), (2i, 2k)\}$ for every $i \in \{1, \ldots, k\}$ which results in each vertex having degree exactly $4$ while not changing the treewidth.
    To now apply the previous observations we now treat each diagonal $\{(x,y) \mid x+y = j+1\}$ as layer $j$ and label them by increasing $x$, thus proving the claim of the lower bound.
    The resulting graphs are also visualized in Figure~\ref{fig:regularmultigrid}.
\end{proof}
\begin{figure}[tpb]
    \centering
    \begin{tikzpicture}[scale=0.9]
        \node [circle, draw] at (1, -1) (1) {$\phantom{0}1$};

        \node [circle, draw] at (1, -2) (2) {$\phantom{0}2$};
        \node [circle, draw] at (2, -1) (3) {$\phantom{0}3$};

        \node [circle, draw] at (2, -2) (4) {$\phantom{0}4$};

        \foreach \from/\to in {1/2, 1/3, 2/4, 3/4}{%
            \draw[thick, double, double distance=2pt] (\from) -- (\to);
        }
    \end{tikzpicture}
    \hskip 20pt
    \begin{tikzpicture}[scale=0.9]
        \node [circle, draw] at (1, -1) (1) {$\phantom{0}1$};

        \node [circle, draw] at (1, -2) (2) {$\phantom{0}2$};
        \node [circle, draw] at (2, -1) (3) {$\phantom{0}3$};

        \node [circle, draw] at (1, -3) (4) {$\phantom{0}4$};
        \node [circle, draw] at (2, -2) (5) {$\phantom{0}5$};
        \node [circle, draw] at (3, -1) (6) {$\phantom{0}6$};

        \node [circle, draw] at (1, -4) (7) {$\phantom{0}7$};
        \node [circle, draw] at (2, -3) (8) {$\phantom{0}8$};
        \node [circle, draw] at (3, -2) (9) {$\phantom{0}9$};
        \node [circle, draw] at (4, -1) (10) {$10$};

        \node [circle, draw] at (2, -4) (11) {$11$};
        \node [circle, draw] at (3, -3) (12) {$12$};
        \node [circle, draw] at (4, -2) (13) {$13$};

        \node [circle, draw] at (3, -4) (14) {$14$};
        \node [circle, draw] at (4, -3) (15) {$15$};

        \node [circle, draw] at (4, -4) (16) {$16$};

        \foreach \from/\to in {2/4, 2/5, 3/5, 3/6, 4/8, 5/8, 5/9, 6/9, 8/11, 8/12, 9/12, 9/13, 11/14, 12/14, 12/15, 13/15}{%
            \draw[thick] (\from) -- (\to);
        }

        \foreach \from/\to in {1/2, 1/3, 4/7, 6/10, 7/11, 10/13, 14/16, 15/16}{%
            \draw[thick, double, double distance=2pt] (\from) -- (\to);
        }
    \end{tikzpicture}
    \hskip 20pt
    \begin{tikzpicture}[scale=0.9]

        \node [circle, draw] at (1, -1) (1) {$\phantom{0}1$};

        \node [circle, draw] at (1, -2) (2) {$\phantom{0}2$};
        \node [circle, draw] at (2, -1) (3) {$\phantom{0}3$};

        \node [circle, draw] at (1, -3) (4) {$\phantom{0}4$};
        \node [circle, draw] at (2, -2) (5) {$\phantom{0}5$};
        \node [circle, draw] at (3, -1) (6) {$\phantom{0}6$};

        \node [circle, draw] at (1, -4) (7) {$\phantom{0}7$};
        \node [circle, draw] at (2, -3) (8) {$\phantom{0}8$};
        \node [circle, draw] at (3, -2) (9) {$\phantom{0}9$};
        \node [circle, draw] at (4, -1) (10) {$10$};

        \node [circle, draw] at (1, -5) (11) {$11$};
        \node [circle, draw] at (2, -4) (12) {$12$};
        \node [circle, draw] at (3, -3) (13) {$13$};
        \node [circle, draw] at (4, -2) (14) {$14$};
        \node [circle, draw] at (5, -1) (15) {$15$};

        \node [circle, draw] at (1, -6) (16) {$16$};
        \node [circle, draw] at (2, -5) (17) {$17$};
        \node [circle, draw] at (3, -4) (18) {$18$};
        \node [circle, draw] at (4, -3) (19) {$19$};
        \node [circle, draw] at (5, -2) (20) {$20$};
        \node [circle, draw] at (6, -1) (21) {$21$};

        \node [circle, draw] at (2, -6) (22) {$22$};
        \node [circle, draw] at (3, -5) (23) {$23$};
        \node [circle, draw] at (4, -4) (24) {$24$};
        \node [circle, draw] at (5, -3) (25) {$25$};
        \node [circle, draw] at (6, -2) (26) {$26$};

        \node [circle, draw] at (3, -6) (27) {$27$};
        \node [circle, draw] at (4, -5) (28) {$28$};
        \node [circle, draw] at (5, -4) (29) {$29$};
        \node [circle, draw] at (6, -3) (30) {$30$};

        \node [circle, draw] at (4, -6) (31) {$31$};
        \node [circle, draw] at (5, -5) (32) {$32$};
        \node [circle, draw] at (6, -4) (33) {$33$};

        \node [circle, draw] at (5, -6) (34) {$34$};
        \node [circle, draw] at (6, -5) (35) {$35$};

        \node [circle, draw] at (6, -6) (36) {$36$};

        \foreach \from/\to in {%
            2/4, 2/5, 3/5, 3/6,
            4/8, 5/8, 5/9, 6/9, 7/11, 7/12, 8/12, 8/13, 9/13, 9/14, 10/14, 10/15,
            11/17, 12/17, 12/18, 13/18, 13/19, 14/19, 14/20, 15/20,
            17/22, 17/23, 18/23, 18/24, 19/24, 19/25, 20/25, 20/26,
            22/27, 23/27, 23/28, 24/28, 24/29, 25/29, 25/30, 26/30,
            28/31, 28/32, 29/32, 29/33,
            31/34, 32/34, 32/35, 33/35}{%
            \draw[thick] (\from) -- (\to);
        }

        \foreach \from/\to in {1/2, 1/3, 4/7, 6/10, 11/16, 15/21, 16/22, 21/26, 27/31, 30/33, 34/36, 35/36}{%
            \draw[thick, double, double distance=2pt] (\from) -- (\to);
        }
    \end{tikzpicture}
    \caption{The grid graphs $\boxplus_{2}, \boxplus_{4}$ and $\boxplus_{6}$ after doubling the correct edges around the border and relabeling the vertices. The layers in $\boxplus_{2}$ are $\{1\}$, $\{2,3\}$, $\{4\}$.
    The layers in $\boxplus_{4}$ are $\{1\}$, $\{2,3\}$, $\{4,5,6\}$, $\{7,8,9,10\}$, $\{11,12,13\}$, $\{14,15\}$, $\{16\}$.
    The layers in $\boxplus_{6}$ are $\{1\}$, $\{2,3\}$, $\{4,5,6\}$, $\{7,8,9,10\}$, $\{11,12,13,14,15\}$, $\{16,17,18,19,20,21\}$, $\{22,23,24,25,26\}$, $\{27,28,29,30\}$, $\{31,32,33\}$, $\{34,35\}$, $\{36\}$.
    }
    \label{fig:regularmultigrid}
\end{figure}
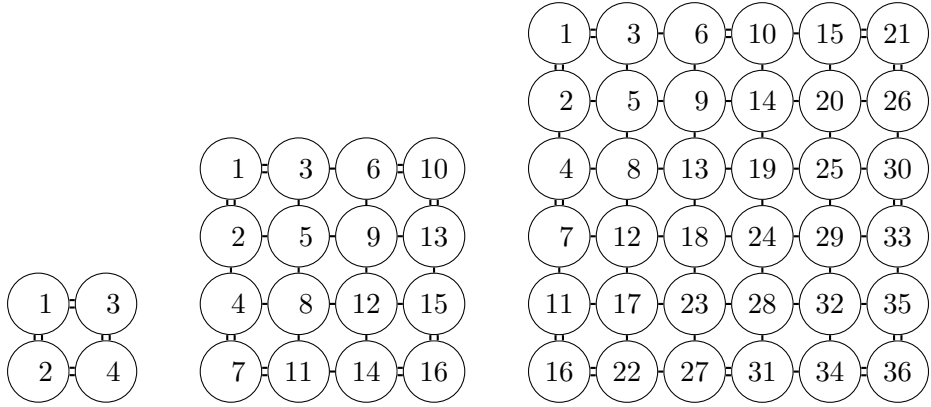

\begin{question}
    It is open whether the bound of $O(\sqrt{n})$ on the treewidth can be extended to any other constant number of rows, but starting at $3$ rows $G_S$ becomes non-planar\footnote{For example, $G_S$ is the complete graph on 5 vertices for $S={\tiny\Yvcentermath1\young(11133,22244,34555)}$.}, so another approach to solving this problem would be needed.
    Additionally, if the number of rows is arbitrary $G_S$ can contain an arbitrarily big clique, so it can have arbitrarily high treewidth.
    For example for any $S$ with a first column with $n$ distinct entries the graph $G_S$ contains a clique on $n$ vertices and thus has treewidth at least $n-1$.
\end{question}

\section{Hardness of evaluation}
\label{sec:hardness}

We will show that deciding whether a highest weight vector $f_{\hat{T}}$ of $\Sym^n\Sym^d \IC^m$ vanishes at a point of Waring rank $k$ for suiting parameters $n, d, m, k$ is \NP-hard.
In particular we prove the $\NP$-hardness of evaluating highest weight vectors given by Young tableaux with two rows in Theorem~\ref{thm:evaluationhardness}.

We can prove a similar -- slightly weaker -- result in Theorem~\ref{thm:semistdhardness}, when the tableau $\hat{T}$ is restricted to be semistandard.
In this case we have to increase the number of rows, the inner degree of the symmetric tensors and the Waring rank of the points of evaluation.
Furthermore we don't prove hardness for all constant $d$ in this case, but only for those divisible by $16$.
This still rules out polynomial evaluation algorithms which allow $d$ to be part of the input under $\PP \neq \NP$.
These reductions also yield more explicit lower bounds under the exponential time hypothesis (ETH) in Theorems~\ref{thm:evaluationhardness} and~\ref{thm:semistdhardness}.
As a reminder, the exponential time hypothesis states, that $\SAT$ can not be solved in time $2^{o(n)}$.
Finally we show in Theorem~\ref{thm:evaluationsharpphardness} that if we want to calculate the exact value of the evaluation we can even prove $\#\PP$-hardness for evaluating highest weight vectors given as Young tableaux.

Most of these reductions start with the same base that deciding whether a graph admits a proper $3$-coloring a graph is $\NP$-hard even when restricted to planar graphs of maximum degree $4$.
This was originally proven by Garey, Johnson and Stockmeyer \cite{DBLP:journals/tcs/GareyJS76} and a modified version can be found in Lemmas~\ref{lem:3colhardness} and~\ref{lem:3colgrid}.

\begin{theorem}
    \label{thm:evaluationhardness}
    Deciding whether a highest weight vector $f_{\hat{T}}$ of $\Sym^n\Sym^d \IC^m$ given as a Young tableau $\hat{T}$ evaluates to zero at a point $p \in \Sym^d \IC^m$ of Waring rank $3$ is $\NP$-hard for constant $d \geq 8, m \geq 2$.

    Assuming $\ETH$ no $2^{o(n)}$ algorithm for this evaluation can exist.
\end{theorem}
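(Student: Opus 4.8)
The plan is to reduce from \textsf{$3$-Colorability} of graphs of maximum degree $4$, which is $\NP$-hard and, under $\ETH$, cannot be decided in time $2^{o(N)}$ on $N$-vertex instances (Lemma~\ref{lem:3colhardness}). Note that planarity is \emph{not} needed here, since a two-row Young tableau that is allowed to be non-semistandard encodes an arbitrary graph simply by listing its edges as length-$2$ columns. Fix any constants $d\ge 8$ and $m\ge 2$, and let $G=(V,E)$ with $V=\{1,\dots,n\}$ and $\deg_G(i)\le 4$ for all $i$ be given. I would build the two-row tableau $\hat T_G$ of content $n\times d$ as follows: for every edge $\{i,j\}\in E$ with $i<j$ insert \emph{two} copies of the length-$2$ column with $i$ in the top box and $j$ in the bottom box; then for every vertex $i$ append $d-2\deg_G(i)$ length-$1$ columns whose sole entry is $i$; finally put all length-$2$ columns to the left of all length-$1$ columns. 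Since $d-2\deg_G(i)\ge d-8\ge 0$, this is well defined, every $i$ occurs exactly $d$ times, and the shape $\lambda=(nd-2|E|,\,2|E|)$ is a partition because $4|E|=2\sum_i\deg_G(i)\le 8n\le nd$; so $f_{\hat T_G}$ is a degree-$n$ highest weight vector of $\Sym^n\Sym^d\IC^m$ (cf.\ \eqref{eq:span}). As the point of evaluation I would take $\ell_k:=x+ky$ for $k=1,2,3$ and
\[
 p\;:=\;\ell_1^{\otimes d}+\ell_2^{\otimes d}+\ell_3^{\otimes d}\in\Sym^d\IC^2\subseteq\Sym^d\IC^m,
\]
which has Waring rank exactly $3$ (trivially $\le 3$; and $\ge 3$ because $\ell_1,\ell_2,\ell_3$ are pairwise non-proportional, so no nonzero form of degree $\le 2$ is apolar to $p$ and hence $\WRbar(p)=3\le\WR(p)$).

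The heart of the proof is the equivalence $f_{\hat T_G}(p)\ne 0\iff G$ is $3$-colorable. To establish it I would expand $f_{\hat T_G}(p)$ via \eqref{eq:sumpropertheta}, evaluated with the Waring decomposition $p=\sum_{k=1}^3\ell_k^{\otimes d}$ and any preimage $T$ of $\hat T_G$ of content $(nd)\times 1$ (the choice is irrelevant by Lemma~\ref{lem:orderedentries}). Proper placements $\vartheta$ are in bijection with colorings $\varphi\colon\{1,\dots,n\}\to\{1,2,3\}$; a length-$1$ column on a vertex $i$ contributes the $1\times 1$ determinant $(\ell_{\varphi(i)})_1=1$, and the two copies of the length-$2$ column on an edge $\{i,j\}$ contribute $\det(\ell_{\varphi(i)},\ell_{\varphi(j)})^{2}=(\varphi(i)-\varphi(j))^{2}$. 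Hence
\[
 f_{\hat T_G}(p)\;=\;\sum_{\varphi\colon\{1,\dots,n\}\to\{1,2,3\}}\;\prod_{\{i,j\}\in E}\bigl(\varphi(i)-\varphi(j)\bigr)^{2}.
\]
Every summand is a nonnegative real number, and the summand for $\varphi$ is strictly positive exactly when $\varphi(i)\ne\varphi(j)$ for every edge, i.e.\ exactly when $\varphi$ is a proper $3$-coloring of $G$; so the sum is positive if $G$ is $3$-colorable and is $0$ otherwise. This is the step I expect to be the main obstacle in any naive reduction, and it is precisely why each edge is duplicated: with a single copy the factors $\det(\ell_{\varphi(i)},\ell_{\varphi(j)})$ carry $\varphi$-dependent signs and could cancel across colorings, whereas squaring turns $f_{\hat T_G}(p)$ into a sum of nonnegative terms, so a single surviving term already certifies colorability. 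Duplication is also what pins down the parameters: $d\ge 2\max_i\deg_G(i)=8$ is needed to pad every vertex up to content $n\times d$, and $m\ge 2$ because the columns of $\hat T_G$ have length $\le 2$.

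Finally, the map $G\mapsto(\hat T_G,p)$ is computable in $\poly(n)$ time (here $d,m$ are constants and $|\hat T_G|=O(n)$) and it sends $3$-colorable graphs to non-vanishing instances and non-$3$-colorable graphs to vanishing instances, which yields the claimed $\NP$-hardness for every fixed $d\ge 8$ and $m\ge 2$. Since the produced highest weight vector has degree $n=|V(G)|$, an evaluation algorithm running in time $2^{o(n)}$ would decide $3$-colorability of $N$-vertex maximum-degree-$4$ graphs in time $2^{o(N)}\poly(N)$, contradicting $\ETH$ by Lemma~\ref{lem:3colhardness}. Everything apart from the no-cancellation identity is routine bookkeeping about Young-diagram shapes and about the evaluation formula \eqref{eq:sumpropertheta}.
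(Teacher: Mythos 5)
Your proposal is correct and follows essentially the same route as the paper's proof: a reduction from $3$-coloring of maximum-degree-$4$ graphs in which each edge becomes a duplicated two-box column (so the columnwise determinants appear squared and no cancellation occurs), vertices are padded with single-box columns using $d\ge 8$, and the evaluation point is a sum of three $d$-th powers of pairwise non-proportional linear forms with first coordinate $1$. The only differences are cosmetic (your choice of $\ell_k=x+ky$ versus the paper's $(1,0),(1,1),(1,2)$, plus your explicit checks of the shape and of the exact Waring rank), and the ETH conclusion is drawn identically.
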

\begin{proof}
    We use the $\NP$-hardness of $3$-coloring graphs of maximum degree at most $4$, see \cite{DBLP:journals/tcs/GareyJS76} or Lemma~\ref{lem:3colhardness}.

    Let $G = (V, E)$ be a graph of maximum degree at most $4$.
    Assume w.l.o.g.\ that $V = \{1, \ldots, n\}$.
    We now construct a Young tableau $T$ with content $n \times d$ as follows:
    For every edge $\{u,v\} \in E$ we add two columns of the form {\scriptsize $\young(u,v)$} to $\hat{T}$.
    Now for every vertex $v \in V$ add $d-2\cdot\deg(v)$ single-box columns {$\young(v)$} to $\hat{T}$.
    It is easy to see that $\hat{T}$ has content $n \times d$ and is not necessarily semistandard.

    We now choose to evaluate the highest weight vector $f_{\hat{T}}$ at $p = \ell_1^d + \ell_2^d + \ell_3^d$ with $\ell_1 = (1, 0, 0, \ldots), \ell_2 = (1, 1, 0, \ldots), \ell_3 = (1, 2, 0, \ldots) \in \IC^m$.
    Note that the determinant of any two distinct linear forms of these is a real number, so its square is a positive real number.

    Recall from \eqref{eq:sumpropertheta} that
\begin{align*}
    f_{\hat{T}}(p) &= \sum_{\text{proper\ }\vartheta} \prod_{c = 1}^{\lambda_1}\det{}_{\vartheta,c}.
\end{align*}
    We now show a $1$-to-$1$ correspondence between summands of the evaluation and arbitrary -- not necessarily proper -- $3$-colorings of $G$.
    A summand will be non-zero iff the corresponding $3$-coloring is proper.
    Due to evaluating at $p$ in its Waring decomposition, $\vartheta$ will be proper iff boxes with the same number $j$ get assigned the same $\ell_i$.
    We interpret this as vertex $j$ receiving color $i$.
    Additionally every $3$-coloring of $G$ corresponds to some placement in this way.

    We now take the product of determinants for each column.
    Since each column with two boxes is repeated twice, this product is a product of squares, and hence will always be positive iff none of the determinants is zero. This idea was first used in \cite{bci:10}.
    A determinant is non-zero iff different vectors $\ell_i$ and $\ell_j$ are chosen for both of the boxes, corresponding to coloring both vertices of this column with different colors.
    So a summand will be non-zero iff $\vartheta$ corresponds to a proper $3$-coloring of $G$.

    Note that any algorithm deciding whether $f_{\hat{T}}(p)$ is non-zero in time $2^{o(n)}$ can now be used to decide whether $G$ allows for a proper $3$-coloring in time $\poly(|V|)2^{o(|V|)}$ which is a contradiction unless ETH fails as proven in \cite{DBLP:journals/jcss/ImpagliazzoPZ01}.
\end{proof}

Note that our algorithms for evaluation described in Theorems~\ref{thm:evalabp} and~\ref{thm:evaltreewidth} both achieve a running time of $2^{O(n)}$ for evaluations at points of constant Waring rank with constant $m$ and $d$.
So Theorem~\ref{thm:evaluationhardness} gives a matching lower bound under ETH.

The proof for $\#\PP$-hardness is pretty similar and reduces from counting the number of $3$-colorings of a graph with maximum vertex degree $3$ which is known to be $\#\PP$-complete \cite{DBLP:journals/siamcomp/BubleyDGJ99}.
The main idea is to use a more carefully chosen point of evaluation to ensure that every summand that corresponds to a proper $3$-coloring will be exactly $1$.

\begin{theorem}
    \label{thm:evaluationsharpphardness}
    Evaluating a highest weight vector $f_{\hat{T}}$ of $\Sym^n\Sym^d \IC^m$ given as a Young tableau $\hat{T}$ at a point $p \in \Sym^d \IC^m$ of Waring rank $3$ is $\#\PP$-hard for constant $d \geq 18, m \geq 2$.
\end{theorem}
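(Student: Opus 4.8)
The plan is to mirror the argument of Theorem~\ref{thm:evaluationhardness}, but to replace the reduction from $3$-coloring of graphs of maximum degree $4$ with the parsimonious-style reduction from \emph{counting} proper $3$-colorings of graphs of maximum degree $3$, which is $\#\PP$-complete by \cite{DBLP:journals/siamcomp/BubleyDGJ99}. As before, given $G=(V,E)$ with $V=\{1,\ldots,n\}$ and $\max\deg\le 3$, build a two-row tableau $\hat T$ that has, for each edge $\{u,v\}\in E$, a fixed number $k$ of copies of the column {\scriptsize$\young(u,v)$}, and then pad each vertex $v$ with $d-k\deg(v)$ single-box columns {$\young(v)$} so that $\hat T$ has content $n\times d$. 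Since $\deg(v)\le 3$, choosing $d\ge 18$ and $k=6$ (or any even $k$ with $kd'$ still admitting the padding) guarantees the padding count $d-k\deg(v)\ge 0$; the precise bookkeeping of $k$, $d$, $m$ is the only routine part. Again, a placement $\vartheta$ is proper iff every box labelled $j$ receives the same linear form, which we read as a colour for vertex $j$, and every one of the $3^n$ colourings arises; a column's determinant is nonzero iff its two endpoints get different colours, so the surviving summands are in bijection with the proper $3$-colourings of $G$.

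The new ingredient over Theorem~\ref{thm:evaluationhardness} is that we must control the \emph{value} of each surviving summand, not merely its nonvanishing. The idea is to pick the three linear forms $\ell_1,\ell_2,\ell_3$ and the multiplicity $k$ so that, for \emph{every} proper colouring, $\prod_{c=1}^{\lambda_1}\det_{\vartheta,c}$ equals exactly $1$. Each edge contributes $k$ identical column determinants $\det(\ell_i,\ell_j)$ for the colour pair $\{i,j\}$ of its endpoints, so the edge's total contribution is $\det(\ell_i,\ell_j)^{k}$; the single-box columns contribute determinants of $1\times 1$ matrices, which we arrange to be $1$ by normalising the first coordinate of each $\ell_i$ to $1$. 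Thus the summand for a colouring is $\prod_{\{i,j\}}\det(\ell_i,\ell_j)^{k\cdot e_{ij}}$ where $e_{ij}$ is the number of edges whose endpoints receive the colour pair $\{i,j\}$. To make this identically $1$ I would choose $\ell_1,\ell_2,\ell_3$ so that the three pairwise $2\times 2$ determinants are (up to sign) equal roots of unity whose $k$-th power is $1$: e.g.\ take $\ell_a=(1,\omega^{a})$ with $\omega$ a primitive $6$th root of unity, giving $\det(\ell_a,\ell_b)=\omega^{b}-\omega^{a}$, all of which are sixth roots of unity times a common factor $\sqrt{3}$; absorbing the $\sqrt3$ by a global scaling of the forms and taking $k$ a multiple of $6$ kills all the exponents. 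Since $6\mid d$ is what forces $18\mid$-type divisibility, I would instead just rescale so the determinants are exact $6$th roots of unity and set $k=6$, needing $d\ge 6\cdot 3=18$. Then every proper colouring contributes exactly $1$, so $f_{\hat T}(p)$ equals the number of proper $3$-colourings of $G$, and evaluating $f_{\hat T}$ is $\#\PP$-hard.

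The main obstacle I anticipate is the algebra of making \emph{all three} pairwise determinants simultaneously be genuine $k$-th roots of unity after a single global rescaling of the $\ell_i$: one cannot independently normalise the three determinants, so one must exhibit a concrete triple of linear forms in $\IC^2$ for which the ratios of the three determinants are already roots of unity, and then only one overall scalar is free. I would resolve this by an explicit choice such as the one above, check by hand that the three determinants of $\ell_a=(1,\omega^{a})$, $a\in\{1,2,3\}$, $\omega=e^{\pi i/3}$, are $\omega^2-\omega,\ \omega^3-\omega,\ \omega^3-\omega^2$, verify these are equal in modulus and differ by sixth roots of unity, rescale the whole configuration by the reciprocal of that common modulus, and then note that raising to the $6$th power annihilates the remaining phase. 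A secondary, purely cosmetic, obstacle is that padding columns of different lengths interleave with the edge columns and $\hat T$ need not be semistandard; but this is harmless, exactly as in Theorem~\ref{thm:evaluationhardness}, since eq.~\eqref{eq:sumpropertheta} is defined for all tableaux, and the bijection between summands and colourings does not use semistandardness.
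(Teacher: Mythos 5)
Your overall strategy coincides with the paper's proof: reduce from counting proper $3$-colourings of graphs of maximum degree $3$ (which is $\#\PP$-complete), place $6$ copies of the two-box column for each edge, pad each vertex with $d-6\deg(v)$ single-box columns (so $d\geq 18$ suffices), and choose the three linear forms so that every pairwise $2\times 2$ determinant is a $6$th root of unity while every $1\times 1$ determinant is $1$, whence each proper colouring contributes exactly $1$ and $f_{\hat T}(p)$ equals the number of proper $3$-colourings. The gap is precisely at the one step you flag as the crux: your explicit choice $\ell_a=(1,\omega^a)$, $a\in\{1,2,3\}$, $\omega=e^{i\pi/3}$, fails the check you propose to do by hand. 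The three pairwise determinants are $\omega^2-\omega=-1$, $\omega^3-\omega^2=e^{4i\pi/3}$, and $\omega^3-\omega=-\tfrac32-\tfrac{\sqrt{3}}{2}i$; the first two have modulus $1$ but the third has modulus $\sqrt{3}$, so they are neither ``equal in modulus'' nor ``sixth roots of unity times a common factor $\sqrt{3}$.'' Moreover, no rescaling rescues this configuration: a single global scalar $c$ multiplies all three $2\times2$ determinants by $c^2$, so it cannot bring the modulus-$\sqrt{3}$ determinant to modulus $1$ without destroying the other two, and it also turns the single-box determinants into $c\neq 1$, so the padding columns would contribute colouring-dependent factors and the summands would no longer all equal $1$. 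Rescaling the $\ell_i$ individually by scalars $c_i$ fails for the same reason: the moduli forced by the three determinant conditions give, e.g., $|c_1|=3^{-1/4}\neq 1$, so again the vertex-padding contributions vary with the colouring.

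The repair is a different (and simpler) configuration, namely the one the paper uses: $\ell_1=(1,0,0,\ldots)$, $\ell_2=(1,e^{i\pi/3},0,\ldots)$, $\ell_3=(1,e^{2i\pi/3},0,\ldots)$. The second coordinates $\{0,\omega,\omega^2\}$ form an equilateral triangle of side $1$, so the pairwise determinants are $e^{i\pi/3}$, $e^{2i\pi/3}$ and $-1$, all exact $6$th roots of unity, while every single-box determinant is $1$. With this point your argument goes through verbatim; also note that no divisibility condition on $d$ is needed beyond $d\geq 18$ (your aside about $6\mid d$ is superfluous, since the padding columns contribute $1$ regardless of how many there are), and, as you correctly observe, semistandardness of $\hat T$ is irrelevant here.
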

\begin{proof}
    We reduce from counting the number of $3$-colorings of a graph $G=(V, E)$ where every vertex has degree at most $3$ which is known to be $\#\PP$-complete \cite{DBLP:journals/siamcomp/BubleyDGJ99}.
    We proceed in a similar manner as in the $\NP$-hardness proof in Theorem~\ref{thm:evaluationhardness}.
    We construct $\hat{T}$ by adding the columns {\scriptsize $\young(u,v)$} for $\{u,v\} \in E$ $6$-times each and for every vertex $v \in V$ add $d-6\cdot\deg(v)$ columns {\scriptsize $\young(v)$} to $\hat{T}$.
    This time we evaluate $f_{\hat{T}}$ at $p = \ell_1^d + \ell_2^d + \ell_3^d$ with $\ell_1 = (1, 0, 0, \ldots), \ell_2 = (1, e^{\frac{i\pi}{3}}, 0, \ldots), \ell_3 = (1, e^{\frac{2i\pi}{3}}, 0, \ldots) \in \IC^m$.
    Note that the determinant of any two distinct linear forms of these is a $6$-th root of unity, so its $6$-th power is always exactly $1$.
    If we now analyse the summands of the evaluation again we see that each term contributes exactly $1$ if it corresponds to a proper $3$-coloring and $0$ otherwise.
    Thus the evaluation $f_{\hat{T}}(p)$ counts exactly the number of $3$-colorings of $G$.
\end{proof}

Extending this result to semistandard Young tableaux now proceeds in multiple steps, which we devote the rest of this section towards.

We first extend the $\NP$-hardness of $3$-coloring to a subclass of planar graphs which we call \emph{grid-like layered} graphs.
More specifically we prove $\NP$-hardness for $8$-regular, i.e.\ each vertex has degree exactly $8$, grid-like layered graphs in Lemma~\ref{lem:gridlikehardnessregular}, while we show a lower bound of $2^{o\left(\sqrt{|V|}\right)}$ under ETH using Lemma~\ref{lem:gridlikeethhardness}.
\enlargethispage{2ex}
\begin{definition}
    \label{def:gridlike}
    We call a planar multigraph $G = (V, E)$ \emph{grid-like layered} if there are disjoint layers $L_1, \ldots, L_k \subseteq V$ of vertices and an embedding $e: V \to \IN \times \{1,\ldots,k\}$, s.t.
    \begin{enumerate}
        \item $e$ is injective.
        \item For every $i \in \{1, \ldots, k\}$ we have $e^{-1}(\IN \times \{i\}) = L_i$
        \item Edges between layers only exist between layer $L_i$ and $L_{i+1}$ for all $i \in \{1, \ldots, k-1\}$.
        \item Edges inside layers only exist for vertices $v, u \in L_i$ where $e(v) = e(u) \pm (1,0)$ for some $i \in \{1, \ldots, k\}$.\label{itm:gridlikeeasylayer}
        \item All edges can be drawn as straight lines without crossing when vertices are placed according to $e$ in $\IR^2$ and the graph is treated as being simple.\label{itm:gridlikestraightplanar}
        \item Every vertex has a neighbour in a different layer.\label{itm:gridlikeneighbours}
    \end{enumerate}
\end{definition}
Note that grid-like layered graphs are not necessarily subgraphs of a grid-graph, see Figure~\ref{fig:gridlikeexample} for an example.
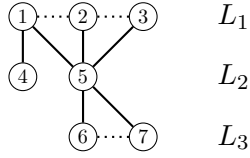
\begin{figure}[tpb]
    \centering
    \begin{tikzpicture}[scale=0.8,
        vertex/.style={circle, draw, scale=0.7, inner sep=1pt, minimum size=15pt}]
        \node [vertex] at (1, -1) (1) {$1$};
        \node [vertex] at (2, -1) (2) {$2$};
        \node [vertex] at (3, -1) (3) {$3$};
        \node [vertex] at (1, -2) (4) {$4$};
        \node [vertex] at (2, -2) (5) {$5$};
        \node [vertex] at (2, -3) (6) {$6$};
        \node [vertex] at (3, -3) (7) {$7$};
        \node [] at (4.5, -1) (L1) {$L_1$};
        \node [] at (4.5, -2) (L2) {$L_2$};
        \node [] at (4.5, -3) (L3) {$L_3$};

        \foreach \from/\to in {1/4, 1/5, 2/5, 3/5, 5/6, 5/7}{%
            \draw[thick] (\from) -- (\to);
        }
        \foreach \from/\to in {1/2, 2/3, 6/7}{%
            \draw[thick, dotted] (\from) -- (\to);
        }
    \end{tikzpicture}
    \caption{Example of a grid-like layered graph with $3$ layers. Edges between layers are drawn as solid lines and edges inside layers as dotted lines.}
    \label{fig:gridlikeexample}
\end{figure}
The crucial property about grid-like layered graphs is, that they can be decomposed into two graphs over the same vertices each corresponding to a semistandard Young tableau with two rows.
This decomposition is essential to encode the $3$-coloring of such graphs into a single combined semistandard Young tableau.
\begin{lemma}
    \label{lem:gridlikedecomp}
    Let $G = (V, E)$ be a grid-like layered graph.
    Then $G = (V, E(G_{\Thor}) \cup E(G_{\Tvert}))$ for two semistandard tableaux $\Thor, \Tvert$ for some relabeling of the vertices $V$.
    Additionally $\Tvert$ contains every number from $1$ to $|V|$ at least once.
\end{lemma}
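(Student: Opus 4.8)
The plan is to split the edge set of $G$ according to the two geometric edge types of a grid-like layered graph and to realise each part as the tableau graph of a two-row semistandard tableau, using one common relabeling of $V$. First I would fix that relabeling: writing $e(v)=(x_v,i_v)$, order the vertices of $V$ by the pair $(i_v,x_v)$ lexicographically and assign the labels $1,2,\dots,|V|$ in this order. By properties (1) and (2) of Definition~\ref{def:gridlike} each layer $L_i$ then occupies a consecutive block of labels, the blocks occur in the order $L_1,\dots,L_k$, and inside each $L_i$ the labels increase with the horizontal coordinate $x$. Let $\Evert$ be the set of edges of $G$ joining vertices of two distinct layers and $\Ehor$ the set of edges joining two vertices of the same layer; every edge is of exactly one of these two kinds, so $E=\Evert\dot{\cup}\Ehor$, and by property (3) every edge in $\Evert$ joins two consecutive layers.

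Next I would treat $\Evert$. The claim is that the multigraph $(V,\Evert)$, together with the layers $L_1,\dots,L_k$ and the labeling above, satisfies the four hypotheses of the forward construction given in the proof of Proposition~\ref{prop:semistdtreewidth}: its vertex set is $\{1,\dots,|V|\}$; edges run only between consecutive layers; with the vertices of each layer drawn in ascending order — which is the order of the coordinate $x$, hence the order in which property (5) places them — the edges of $\Evert$ form a crossing-free straight-line drawing, since erasing the within-layer edges of a crossing-free straight-line drawing leaves it crossing-free; and each layer's labels form a consecutive interval lying entirely below the next layer's labels. That construction then produces a two-row semistandard tableau $\Tvert$ with $G_{\Tvert}=(V,\Evert)$ in which each number $i$ appears exactly once per $\Evert$-edge incident to $i$; by property (6) every vertex has such an edge, so $\Tvert$ contains every number $1,\dots,|V|$ at least once, as required.

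For $\Ehor$ I would argue directly. By property (4) a within-layer edge joins two vertices of some $L_i$ whose $x$-coordinates differ by exactly $1$; these are consecutive in the $x$-order of $L_i$ and hence receive consecutive labels, so $\Ehor\subseteq\bigl\{\{j,j+1\}:1\le j\le |V|-1\bigr\}$. Build $\Thor$ by listing, in order of increasing $j$, one height-two column with $j$ on top and $j+1$ below for each edge $\{j,j+1\}\in\Ehor$ (a doubled edge contributes two equal columns). Every such column is strictly increasing and, after this sorting, both rows are weakly increasing, so $\Thor$ is semistandard, and its tableau graph has edge set exactly $\Ehor$; vertices that are isolated in $\Ehor$ simply do not occur in $\Thor$, which is harmless since the vertex set $V$ is supplied on the right-hand side of the asserted identity and is already covered by $\Tvert$. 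Combining the two parts yields $G=(V,E)=(V,\Evert\cup\Ehor)=(V,E(G_{\Tvert})\cup E(G_{\Thor}))$ under this single relabeling, which is the assertion.

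The only step that needs genuine care is the middle one: verifying that restricting to $\Evert$ preserves every hypothesis of the Proposition~\ref{prop:semistdtreewidth} construction — most notably the non-crossing condition — and that the chosen relabeling makes the abstract layers of that construction coincide with the given $L_1,\dots,L_k$. Everything else is routine bookkeeping about consecutive blocks of labels.
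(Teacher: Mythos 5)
Your proof is correct and takes essentially the same route as the paper: the same layer-by-layer relabeling of $V$, the same split of $E$ into the inter-layer edges $\Evert$ and within-layer edges $\Ehor$, and the same column constructions yielding $\Tvert$ (with property (6) giving that every label occurs) and $\Thor$ (columns $\{j,j+1\}$ sorted by $j$). The only difference is that you delegate the inter-layer part to the layered-multigraph-to-semistandard-tableau construction embedded in the proof of Proposition~\ref{prop:semistdtreewidth}, whereas the paper re-derives that same argument inline, so this is stylistic rather than substantive.
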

\begin{proof}
    Let $e$ be the embedding of $G$.
    We relabel the vertices in increasing order inside each layer according to $e$ and then increasing order from layer $L_i$ to layer $L_{i+1}$ for every $i$, like in Figure~\ref{fig:gridlikeexample}.
    Let $\Evert$ now be the edges between different layers and $\Ehor$ those inside the layers, see Figure~\ref{fig:gridlikeexampledecomp}.
    Clearly $\Ehor \cup \Evert = E$ and every vertex is incident to some edge in $\Evert$ by condition~\ref{def:gridlike}.\ref{itm:gridlikeneighbours}, so if we can construct semistandard tableaux $\Thor, \Tvert$ with $\Ehor = E(G_{\Thor})$ and $\Evert = E(G_{\Tvert})$ we are done.

    We start with $\Tvert$.
    Since the labeling of the vertices is increasing from one layer to the next it suffices to show that we can create $\Tvert$ for a single pair of consecutive layers and afterwards concatenate them.
    Condition~\ref{def:gridlike}.\ref{itm:gridlikestraightplanar} gives a unique order of the edges between these layers from left to right.
    So for the edge $\{u, v\} \in \Evert$ with $u < v$ we add the column {\scriptsize $\young(u,v)$} to $\Tvert$.
    Assume two columns
    {\ytableausetup{boxsize=1.1em}\scriptsize
\ytableaushort{{u},{v}}}
    and
    {\ytableausetup{boxsize=1.1em}\scriptsize
\ytableaushort{{u'},{v'}}}
would violate the semistandard property.
    Then either $u' < u$ in which case the edge $\{u', v'\}$ would start left of $\{u, v\}$ or $v' < v$ in which case the edge $\{u', v'\}$ would end left of $\{u, v\}$, both a contradiction to our unique ordering from left to right.
    So $\Tvert$ is semistandard.

    We continue with $\Thor$.
    Again we only have to consider $\Thor$ for a single layer as we can just concatenate the resulting tableaux afterwards.
    If we direct the edges in $\Ehor$ to only go from the smaller vertex to the larger one we see with condition~\ref{def:gridlike}.\ref{itm:gridlikeeasylayer} that each vertex can only be the first vertex of an edge once, and those edges have the form $\{v, v+1\}$.
    So the only columns in $\Thor$ are of the form
{\ytableausetup{boxsize=1.6em}\scriptsize
\ytableaushort{{v},{v\!\!+\!\!1}}}.
    Those can clearly just be combined in order to make $\Thor$ semistandard.

    Note that since $G$ is a multigraph we add every column to the tableaux $k$ times if the edge appears with multiplicity $k$ in $G$.
\end{proof}
\begin{figure}[tpb]
    \centering
    $\begin{array}[]{cc}
    \begin{tikzpicture}[scale=0.8,
        vertex/.style={circle, draw, scale=0.7, inner sep=1pt, minimum size=15pt}]
        \node [vertex] at (1, -1) (1) {$1$};
        \node [vertex] at (2, -1) (2) {$2$};
        \node [vertex] at (3, -1) (3) {$3$};
        \node [vertex] at (1, -2) (4) {$4$};
        \node [vertex] at (2, -2) (5) {$5$};
        \node [vertex] at (2, -3) (6) {$6$};
        \node [vertex] at (3, -3) (7) {$7$};

        \foreach \from/\to in {1/2, 2/3, 6/7}{%
            \draw[thick] (\from) -- (\to);
        }
    \end{tikzpicture}
    &
    \begin{tikzpicture}[scale=0.8,
        vertex/.style={circle, draw, scale=0.7, inner sep=1pt, minimum size=15pt}]
        \node [vertex] at (1, -1) (1) {$1$};
        \node [vertex] at (2, -1) (2) {$2$};
        \node [vertex] at (3, -1) (3) {$3$};
        \node [vertex] at (1, -2) (4) {$4$};
        \node [vertex] at (2, -2) (5) {$5$};
        \node [vertex] at (2, -3) (6) {$6$};
        \node [vertex] at (3, -3) (7) {$7$};

        \foreach \from/\to in {1/4, 1/5, 2/5, 3/5, 5/6, 5/7}{%
            \draw[thick] (\from) -- (\to);
        }
    \end{tikzpicture}
    \\
    {\def\lr#1{\multicolumn{1}{|@{\hspace{.6ex}}c@{\hspace{.6ex}}|}{\raisebox{-.3ex}{$#1$}}}
        \scriptsize\raisebox{-.6ex}{$\begin{array}[b]{*{3}c}\cline{1-3}
                \lr{1}&\lr{2}&\lr{6}\\\cline{1-3}
                \lr{2}&\lr{3}&\lr{7}\\\cline{1-3}
        \end{array}$}
    }
    &
    {\def\lr#1{\multicolumn{1}{|@{\hspace{.6ex}}c@{\hspace{.6ex}}|}{\raisebox{-.3ex}{$#1$}}}
        \scriptsize\raisebox{-.6ex}{$\begin{array}[b]{*{6}c}\cline{1-6}
                \lr{1}&\lr{1}&\lr{2}&\lr{3}&\lr{5}&\lr{5}\\\cline{1-6}
                \lr{4}&\lr{5}&\lr{5}&\lr{5}&\lr{6}&\lr{7}\\\cline{1-6}
        \end{array}$}
    }
    \\
    \end{array}$
    \caption{Example of $\Thor$ and $\Tvert$ and the two graphs $G_\Thor$ and $G_\Tvert$ according to Lemma~\ref{lem:gridlikedecomp} for the grid-like layered graph given in Figure~\ref{fig:gridlikeexample}.}
    \label{fig:gridlikeexampledecomp}
\end{figure}

We can now give an elegant proof of the $\NP$-hardness of deciding whether a given $8$-regular grid-like layered graph $G = (V, E)$ admits a proper $3$-coloring.
With this elegance comes the caveat, that this proof only yields a lower bound of $2^{o\left(\sqrt[4]{|V|}\right)}$ under ETH, which we improve to $2^{o\left(\sqrt{|V|}\right)}$ with a more technical proof in Lemma~\ref{lem:gridlikeethhardness}.

For this we need the notion of a graph minor model.
We call a collection of subsets of vertices $(V_h)_{h \in V(G)}$ a \emph{graph minor model} of embedding a graph $G$ into a graph $H$ if the $V_h \subseteq V(H)$ induce disjoint non-empty connected subgraphs of $H$ for every $h \in V(G)$ and if for every edge $\{u, v\} \in V(G)$ there is an edge between some vertices of $V_u$ and $V_v$.
See~\cite[Section 6.3]{cygan2015parameterized} for a more detailed introduction to graph minors.

\begin{lemma}
    \label{lem:gridlikehardnessregular}
    Deciding whether a given graph $G = (V, E)$ admits a proper $3$-coloring is $\NP$-hard, even if the graph is restricted to be grid-like layered and $8$-regular.
    
    Unless ETH fails, $3$-coloring doesn't admit an $2^{o\left(\sqrt[4]{|V|}\right)}$ time algorithm for grid-like layered graphs.
\end{lemma}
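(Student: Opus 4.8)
The plan is to reduce from $3$-coloring of planar graphs of maximum degree $4$, which is $\NP$-hard by Garey, Johnson and Stockmeyer~\cite{DBLP:journals/tcs/GareyJS76}. For the ETH part I will use the standard chain of reductions: general $3$-coloring has no $2^{o(N)}$-time algorithm on $N$-vertex graphs unless ETH fails (via the sparsification lemma~\cite{DBLP:journals/jcss/ImpagliazzoPZ01} and the linear-size reduction from $\SAT$); drawing a sparse instance in the plane with $O(N^2)$ crossings and replacing each crossing by the usual constant-size planar crossover gadget, followed by the standard degree-reduction gadget, gives a planar graph of maximum degree $4$ on $\Theta(N^2)$ vertices with equivalent $3$-colorability. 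Hence $3$-coloring of planar maximum-degree-$4$ graphs admits no $2^{o(\sqrt{N})}$-time algorithm under ETH, where $N$ is the number of vertices.

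Starting from such a graph $G$ on $N$ vertices I would fix an orthogonal grid drawing of $G$, equivalently a graph minor model of $G$ in the $\ell\times\ell$ grid $\boxplus_\ell$ with $\ell=O(N)$ in which every branch set is a single grid vertex and every edge $\{u,v\}\in E(G)$ is realized by a grid path whose interior is disjoint from all other edge paths (see \cite[Section~6.3]{cygan2015parameterized} for minor models). After refining the grid by a constant factor so that each edge path has horizontal breathing room, I build a grid-like layered multigraph $H$ by replacing each edge path $u=q_0,q_1,\dots,q_L=v$ by an \emph{equality wire}: consecutive $q_{i-1},q_i$ are joined not by a plain edge but by a copy of $K_4$ minus an edge with $q_{i-1},q_i$ as the two nonadjacent terminals, a gadget that forces $c_{q_{i-1}}=c_{q_i}$ in every proper $3$-coloring, and the wire is terminated by one genuine edge $\{q_{L-1},v\}$, so that the whole wire enforces $c_u=\dots\neq c_v$. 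Each such gadget can be laid out on three consecutive (refined) layers with its two interior vertices horizontally adjacent, so that all edges are either between consecutive layers or horizontal within a layer and, with enough horizontal spacing, the whole picture is drawn with straight non-crossing lines; thus $H$ satisfies all conditions of Definition~\ref{def:gridlike}, and by construction $H$ is $3$-colorable if and only if $G$ is.

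It then remains to make $H$ exactly $8$-regular without changing its $3$-colorability or its grid-like layered structure: first double every edge of $H$ (each vertex gains its own degree, so all degrees become $\le 8$ and all multiplicities are even), then note that the residual total deficiency $\sum_v(8-d_H(v))$ is even and fill it greedily by attaching, in spare refined-grid cells next to deficient vertices, small $3$-colorable padding gadgets of the appropriate degree surplus (for instance fresh partner vertices joined by parallel edges and recursively padded), drawn locally so that planarity and the layering are preserved. Since the orthogonal drawing lives in an $O(N)\times O(N)$ grid, all edge paths and gadgets together use $O(N^2)$ vertices, so $|V(H)|=O(N^2)$; consequently a $2^{o(\sqrt[4]{|V(H)|})}$-time algorithm for $3$-coloring grid-like layered graphs would yield a $2^{o(\sqrt{N})}$-time algorithm for planar maximum-degree-$4$ $3$-coloring, contradicting ETH, and $\NP$-hardness follows because every step of the reduction is polynomial.

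The main obstacle I expect is the gadget bookkeeping in the middle step: simultaneously (i) realizing each edge path by monochromatic-forcing $K_4$-minus-an-edge chains while respecting the rather rigid requirements of Definition~\ref{def:gridlike} (only consecutive-layer and horizontal edges, a straight-line non-crossing drawing, and every vertex having a neighbour in an adjacent layer, which is why the terminating genuine edge and a tiny amount of extra routing are needed), and (ii) achieving \emph{exact} $8$-regularity on top of all of this. These arguments are routine in spirit but require an explicit careful layout, and the constant-factor grid refinement combined with the quadratic size of orthogonal drawings is exactly why this route only yields the weaker $2^{o(\sqrt[4]{|V|})}$ bound, improved later in Lemma~\ref{lem:gridlikeethhardness} by a more hands-on construction.
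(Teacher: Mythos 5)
Your overall strategy is the same as the paper's: reduce from planar $3$-coloring, embed the instance into a grid of $O(N^2)$ vertices, replace each grid edge by diamond-based equality/inequality gadgets so that $3$-colorability is preserved and the result is grid-like layered, then pad multiplicities to reach $8$-regularity; the quadratic blow-up then yields exactly the $2^{o(\sqrt[4]{|V|})}$ ETH bound, as in the paper. (The paper uses a grid minor model with equality gadgets inside branch sets and inequality gadgets between adjacent branch sets, while you contract branch sets to single grid points and use equality wires ending in one genuine inequality edge; this difference is cosmetic.)

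However, your $8$-regularization step has a genuine gap. First, the claim that doubling every edge of $H$ leaves all degrees $\leq 8$ is false: a vertex of $G$ of degree $4$ whose incident wires all attach through diamond terminals already has degree $8$ in $H$ (each terminal of $K_4$ minus an edge has degree $2$), so doubling produces degree $16$; even with a balanced choice of wire orientations you get degree $6$ before and $12$ after doubling. Second, your repair mechanism only adds edges and fresh padding vertices, so it can fix deficits but can never reduce a surplus above $8$; and the ``recursively padded'' fresh vertices are not shown to terminate at exact degree $8$ while respecting Definition~\ref{def:gridlike} (in particular condition~6, that every vertex must have a neighbour in another layer, and the straight-line non-crossing requirement). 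The paper sidesteps all of this by building the multiplicities into the gadgets themselves: the multigraph versions of $H^=_1,H^=_2,H^{\neq}_1,H^{\neq}_2$ (Figure~\ref{fig:eqgadgetmult}) are internally $8$-regular and let each terminal contribute any even degree $2,4,6,8$, so a grid vertex with between $1$ and $4$ attached gadgets can be tuned to exactly $8$; you would need this (or an equivalent device) in place of ``double everything, then pad.'' A smaller layout issue: a diamond whose two terminals lie in the same layer and whose two interior vertices both lie in the adjacent layer forces a straight-line crossing (it contains a $K_{2,2}$ between two layers), so horizontal wire segments cannot use your uniform ``interiors side by side in the middle layer'' layout; they need the asymmetric placement the paper uses for its horizontal gadget variants, with one interior vertex kept in the terminals' layer.
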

\begin{proof}
    For this we reduce from the decision problem whether a planar graph $G$ admits a proper $3$-coloring.

    In order to achieve this we find a graph minor model $(V_h)_{h\in V(G)}$ of embedding $G$ into a grid $\boxplus$ with $O(|V(G)|^2)$ vertices in linear time \cite{tamassia1989planar}.
    Let $G_1$ be the grid $\boxplus$ after removing any vertices and edges which do not correspond to vertices or edges in $G$, i.e.
    \[
        V(G_1) = \bigcup_{h \in V(G)} V_h
    \]
        and
    \[
        E(G_1) = \bigcup_{h \in V(G)} E(\boxplus[V_h]) \cup \bigcup_{uv \in E(G)} E(\boxplus[V_u \cup V_v])
    \]
    where $\boxplus[V]$ denotes the subgraph of $\boxplus$ induced by the vertices $V$.

    We can now transform any $3$-coloring of $G$ into a $3$-coloring of $G_1$ by coloring every vertex in $V_h$ with the same color as $h$ for every $h \in V(G)$.
    The property of a coloring of $G$ being proper now translates to enforcing that for each $h \in V(G)$ all the vertices inside the component $V_h$ are colored with the same color and vertices in neighbouring components $V_u, V_v$ for $\{u,v\} \in E(G)$ are colored with different colors.

    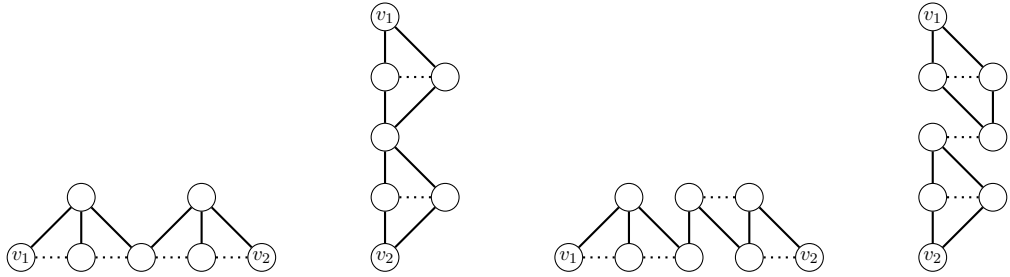
\begin{figure}[tpb]
        \centering
        \begin{tikzpicture}[scale=0.8,
            vertex/.style={circle, draw, scale=0.7, inner sep=1pt, minimum size=15pt}]
            \node [vertex] at (1, 1) (1) {$v_1$};

            \node [vertex] at (2, 1) (2) {};
            \node [vertex] at (2, 2) (3) {};

            \node [vertex] at (3, 1) (4) {};

            \node [vertex] at (4, 1) (5) {};
            \node [vertex] at (4, 2) (6) {};

            \node [vertex] at (5, 1) (7) {$v_2$};

            \foreach \from/\to in {1/3, 2/3, 3/4, 4/6, 5/6, 6/7}{%
                \draw[thick] (\from) -- (\to);
            }
            \foreach \from/\to in {1/2, 2/4, 4/5, 5/7}{%
                \draw[thick, dotted] (\from) -- (\to);
            }
        \end{tikzpicture}
        \hspace{1cm}
        \begin{tikzpicture}[scale=0.8,
            vertex/.style={circle, draw, scale=0.7, inner sep=1pt, minimum size=15pt}]
            \node [vertex] at (1, 1) (1) {$v_2$};

            \node [vertex] at (1, 2) (2) {};
            \node [vertex] at (2, 2) (3) {};

            \node [vertex] at (1, 3) (4) {};

            \node [vertex] at (1, 4) (5) {};
            \node [vertex] at (2, 4) (6) {};

            \node [vertex] at (1, 5) (7) {$v_1$};

            \foreach \from/\to in {1/2, 2/4, 4/5, 5/7, 1/3, 3/4, 4/6, 6/7}{%
                \draw[thick] (\from) -- (\to);
            }
            \foreach \from/\to in {2/3, 5/6}{%
                \draw[thick, dotted] (\from) -- (\to);
            }
        \end{tikzpicture}
        \hspace{1cm}
        \begin{tikzpicture}[scale=0.8,
            vertex/.style={circle, draw, scale=0.7, inner sep=1pt, minimum size=15pt}]
            \node [vertex] at (1, 1) (1) {$v_1$};

            \node [vertex] at (2, 1) (2) {};
            \node [vertex] at (2, 2) (3) {};

            \node [vertex] at (3, 1) (4) {};
            \node [vertex] at (3, 2) (4p) {};

            \node [vertex] at (4, 1) (5) {};
            \node [vertex] at (4, 2) (6) {};

            \node [vertex] at (5, 1) (7) {$v_2$};

            \foreach \from/\to in {1/3, 2/3, 3/4, 4/4p, 4p/5, 5/6, 6/7}{%
                \draw[thick] (\from) -- (\to);
            }
            \foreach \from/\to in {1/2, 2/4, 4p/6, 5/7}{%
                \draw[thick, dotted] (\from) -- (\to);
            }
        \end{tikzpicture}
        \hspace{1cm}
        \begin{tikzpicture}[scale=0.8,
            vertex/.style={circle, draw, scale=0.7, inner sep=1pt, minimum size=15pt}]
            \node [vertex] at (1, 1) (1) {$v_2$};

            \node [vertex] at (1, 2) (2) {};
            \node [vertex] at (2, 2) (3) {};

            \node [vertex] at (1, 3) (4) {};
            \node [vertex] at (2, 3) (4p) {};

            \node [vertex] at (1, 4) (5) {};
            \node [vertex] at (2, 4) (6) {};

            \node [vertex] at (1, 5) (7) {$v_1$};

            \foreach \from/\to in {1/2, 2/4, 4p/6, 5/7, 1/3, 3/4, 4p/5, 6/7}{%
                \draw[thick] (\from) -- (\to);
            }
            \foreach \from/\to in {2/3, 4/4p, 5/6}{%
                \draw[thick, dotted] (\from) -- (\to);
            }
        \end{tikzpicture}

        \caption{The equality and inequality gadgets $H^=_1, H^=_2$ and $H^{\neq}_1, H^{\neq}_2$ used in Lemma~\ref{lem:gridlikehardnessregular}}
        \label{fig:eqneqgadget}
    \end{figure}

    In order to enforce these constraints on $G_1$ we construct a new graph $G_2$ by replacing each edge inside any $V_h$ by the equality gadgets $H^=_1$ or $H^=_2$ and replacing each edge between neighbouring components $V_u, V_v$ by the inequality gadgets $H^{\neq}_1$ or $H^{\neq}_2$.
    These gadgets are shown in Figure~\ref{fig:eqneqgadget}.
    If an edge is horizontal in the canonical embedding of $G_1$ into the plane we choose variant $1$ of the gadgets. If an edge is vertical we choose variant $2$.

    It can be easily checked that the only way to properly $3$-color these gadgets is such that the colors of $v_1$ and $v_2$ are the same for the equality gadgets and different for the inequality gadgets.
    
    Clearly $G$ is now properly $3$-colorable iff $G_2$ is properly $3$-colorable.

    Secondly all those gadgets are designed as grid-like layered graphs.
    It can be easily checked that replacing all edges in a subgraph of a grid yields a grid-like layered graph, so $G_2$ is grid-like layered.

    So the only thing remaining to do is make the graph $8$-regular by adding copies of existing edges to the graph.
    In order to achieve this it is sufficient to show that multigraph versions of $H^=_1, H^=_2, H^{\neq}_1$ and $H^{\neq}_2$ exist which are $8$-regular except for the vertices $v_1$ and $v_2$, which can independently have a degree of $2, 4, 6$ or $8$ each.
    This is sufficient since every vertex of the grid graph has a degree between $1$ and $4$, so it has between $1$ and $4$ of these gadgets attached to it.
    The multigraph variations of the gadgets are shown for $H^=_1$ and $H^{\neq}_1$ in Figure $\ref{fig:eqgadgetmult}$, for the other two gadgets these are constructed similarly.

    \begin{figure}[tpb]
        \centering
        \begin{tikzpicture}[scale=1,
            vertex/.style={circle, draw, scale=0.7, inner sep=1pt, minimum size=15pt},
            nl/.style={scale=0.7}]
            \node [vertex] at (1, 1) (1) {$v_1$};

            \node [vertex] at (2, 1) (3) {};
            \node [vertex] at (2, 2) (2) {};

            \node [vertex] at (3, 1) (4) {};

            \node [vertex] at (4, 1) (6) {};
            \node [vertex] at (4, 2) (5) {};

            \node [vertex] at (5, 1) (7) {$v_2$};

            \draw[thick] (1) --  node[nl, above left] {a} (2);
            \draw[thick, dotted] (1) --  node[nl, below] {a} (3);

            \draw[thick] (2) --  node[nl, right] {6-a} (3);

            \draw[thick] (2) --  node[nl, above right] {2} (4);
            \draw[thick, dotted] (3) --  node[nl, below] {2} (4);

            \draw[thick] (4) --  node[nl, above left] {2} (5);
            \draw[thick, dotted] (4) --  node[nl, below] {2} (6);

            \draw[thick] (5) --  node[nl, right] {6-b} (6);

            \draw[thick] (5) --  node[nl, above right] {b} (7);
            \draw[thick, dotted] (6) --  node[nl, below] {b} (7);
        \end{tikzpicture}
        \hspace{3cm}
        \begin{tikzpicture}[scale=1,
            vertex/.style={circle, draw, scale=0.7, inner sep=1pt, minimum size=15pt},
            nl/.style={scale=0.7}]
            \node [vertex] at (1, 1) (1) {$v_1$};

            \node [vertex] at (2, 1) (3) {};
            \node [vertex] at (2, 2) (2) {};

            \node [vertex] at (3, 1) (4) {};
            \node [vertex] at (3, 2) (4p) {};

            \node [vertex] at (4, 1) (6) {};
            \node [vertex] at (4, 2) (5) {};

            \node [vertex] at (5, 1) (7) {$v_2$};

            \draw[thick] (1) --  node[nl, above left] {a} (2);
            \draw[thick, dotted] (1) --  node[nl, below] {a} (3);

            \draw[thick] (2) --  node[nl, right] {6-a} (3);

            \draw[thick] (2) --  node[nl, above right] {2} (4);
            \draw[thick, dotted] (3) --  node[nl, below] {2} (4);

            \draw[thick] (4) --  node[nl, right] {4} (4p);

            \draw[thick, dotted] (4p) --  node[nl, above] {2} (5);
            \draw[thick] (4p) --  node[nl, above right] {2} (6);

            \draw[thick] (5) --  node[nl, right] {6-b} (6);

            \draw[thick] (5) --  node[nl, above right] {b} (7);
            \draw[thick, dotted] (6) --  node[nl, below] {b} (7);
        \end{tikzpicture}
        \caption{The nearly $8$-regular versions of the gadgets $H^=_1$ and $H^{\neq}_1$ used in Lemma~\ref{lem:gridlikehardnessregular}.
        The edge labels denote the multiplicity of the edges in the multigraph and $2a = \deg(v_1)$ and $2b = \deg(v_2)$.}
        \label{fig:eqgadgetmult}
    \end{figure}
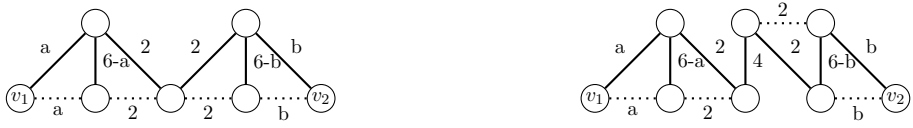

    Note that $\boxplus$ has $O(|V(G)|^2)$ many vertices so we can conclude that $G_2$ also has $O(|V(G)|^2)$ many vertices.
    If we can decide whether the $8$-regular grid-like layered graph $G_2$ allows for a proper $3$-coloring in time $2^{o\left(\sqrt[4]{|V(G_2)|}\right)}$ we can decide via this reduction whether the planar graph $G$ allows for a proper $3$-coloring in time $\poly(|V(G)|) \cdot 2^{o\left(\sqrt{|V(G)|}\right)}$.
    This contradicts that planar $3$-coloring is not solveable in time $2^{o\left(\sqrt{|V(G)|}\right)}$ unless ETH fails which was essentially observed by Cai and Juedes \cite{cai2001subexponential} and is also mentioned in \cite[Theorem 14.9]{cygan2015parameterized}.
\end{proof}

Looking at the reduction from $3$-satisfiability to $3$-coloring more closely we can improve the ETH lower bound of the previous proof.
The fourth root was necessary because first embedding the $\SAT$ formula into a planar graph and then into a grid graph each resulted in quadratic blow-up.
By abusing the structure of the intermediate graphs more closely we reduce the size of the grid graph to be only quadratic in the number of variables of the $\SAT$ formula and thus show a better lower bound in Lemma~\ref{lem:gridlikeethhardness}.

The proof uses similar gadgets to the standard textbook reduction of $\SAT$ to $3$-coloring, which we show again for reference.
\begin{lemma}
    \label{lem:3colhardness}
    Deciding whether a given graph $G = (V, E)$ admits a proper $3$-coloring is $\NP$-hard.
\end{lemma}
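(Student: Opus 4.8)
The plan is to give the standard textbook reduction from $\SAT$ (3-satisfiability), which is $\NP$-hard. Given a 3-CNF formula $\phi$ with variables $y_1,\ldots,y_\ell$ and clauses $C_1,\ldots,C_k$, I would build in polynomial time a graph $G_\phi$ that admits a proper $3$-coloring if and only if $\phi$ is satisfiable.

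First I would fix a \emph{palette triangle}: three mutually adjacent vertices $v_{\mathsf T}, v_{\mathsf F}, v_{\mathsf B}$. In any proper $3$-coloring these necessarily receive the three distinct colors, which we name $\mathsf{true}$, $\mathsf{false}$, and $\mathsf{base}$ accordingly; this breaks the symmetry of the color set. Next, for each variable $y_i$ introduce two \emph{literal vertices} $y_i$ and $\overline{y_i}$ together with the edges $\{y_i,\overline{y_i}\}$, $\{y_i,v_{\mathsf B}\}$, $\{\overline{y_i},v_{\mathsf B}\}$. This forces $\{y_i,\overline{y_i}\}$ to receive the two colors $\{\mathsf{true},\mathsf{false}\}$ in some order, so a proper coloring restricted to the literal vertices is precisely a truth assignment.

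Then, for each clause $C_j = (\ell_{j,1}\vee\ell_{j,2}\vee\ell_{j,3})$ I would attach an \emph{OR-gadget}: a constant-size graph $\mathrm{OR}(a,b;c)$ with input vertices $a,b$, output vertex $c$, and a few auxiliary vertices, with the property that in any proper $3$-coloring in which $a,b,c$ all avoid $\mathsf{base}$, the output $c$ gets color $\mathsf{true}$ if and only if at least one of $a,b$ gets $\mathsf{true}$. Chaining two copies, $\mathrm{OR}(\ell_{j,1},\ell_{j,2};o_j')$ and $\mathrm{OR}(o_j',\ell_{j,3};o_j)$ (keeping $o_j'$ away from $\mathsf{base}$ via an edge to $v_{\mathsf B}$), and adding the edges $\{o_j,v_{\mathsf F}\}$ and $\{o_j,v_{\mathsf B}\}$ forces $o_j$ to color $\mathsf{true}$, which is possible exactly when $C_j$ is satisfied.

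The correctness proof then has the two usual directions: from a satisfying assignment, color each literal vertex by its truth value, color the palette triangle canonically, and extend greedily through each OR-gadget (always possible since some input per clause is $\mathsf{true}$); conversely, from a proper $3$-coloring read off the value of $y_i$ from the color of vertex $y_i$, and the gadget property guarantees every clause is satisfied. Since $G_\phi$ has $O(\ell+k)$ vertices and is constructible in polynomial time, $\NP$-hardness follows. The only step requiring genuine verification is the finite case analysis establishing the OR-gadget property — namely that the proper $3$-colorings of the gadget realize exactly the Boolean OR on the two non-$\mathsf{base}$ colors — which is a routine check on a constant-size graph.
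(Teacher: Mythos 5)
Your proposal is correct and is essentially the paper's own proof: the same reduction from 3SAT with a palette triangle, the identical variable gadget (literal pair joined to each other and to the base/$z$ vertex), and chained constant-size OR-gadgets per clause. The only cosmetic difference is that you force the clause output to the color $\mathsf{true}$ via edges to both $v_{\mathsf F}$ and $v_{\mathsf B}$, whereas the paper's gadget $H_3$ attaches the output only to $\bot$ and argues that all-false literals force a conflict there; both are standard variants of the same textbook construction.
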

\begin{proof}
    We reduce from $3$-satisfiability.
    So let $\phi = C_1 \land \ldots \land C_m$ be a formula in $3$-CNF on $n$ variables $x_1, \ldots, x_n$.
    We construct a graph $G$ as follows. We start with the graph $H_1$ shown in Figure~\ref{fig:3colgadgets} (left) and call the three vertices $\top$, $\bot$, and $z$.
    Then for each $1 \leq i \leq n$ we add a vertex $x_i$ and a vertex $\overline{x_i}$ and add three edges: $\{x_i,\overline{x_i}\}$, $\{x_i,z\}$, $\{\overline{x_i},z\}$. This is depicted in Figure~\ref{fig:3colgadgets} (middle).
    For each $1 \leq j \leq m$ we now add 6 vertices and connect them with the existing vertices as shown in Figure~\ref{fig:3colgadgets} (right): The vertices labeled $l_1$, $l_2$, $l_3$ in the figure stand for the vertices corresponding to the three literals (elements in $\{x_1,\ldots,x_n,\overline{x_1},\ldots,\overline{x_n}\}$) in the clause $C_j$.
    
    We now analyze potential proper 3-colorings of $G$. Our colors will conveniently be called $\top$, $\bot$, and $z$ and we assume from now on w.l.o.g.\ that the three vertices in $H_1$ are colored according to their names. It follows from $H_2$ that in every proper 3-coloring the vertices corresponding to literals are colored with $\top$ or $\bot$, but never with $z$.
    It is easy to see that $H_3$ has no proper 3-coloring if $l_1$, $l_2$, and $l_3$ all are colored with $\bot$.
    Moreover, if at least one of $l_1$, $l_2$, and $l_3$ is colored with $\top$ and the others are colored with $\bot$, then a proper 3-coloring of $H_3$ exists.
    
    Hence from a proper 3-coloring of $G$ we can easily reconstruct a satisfying assignment of $\phi$ and vice versa.
    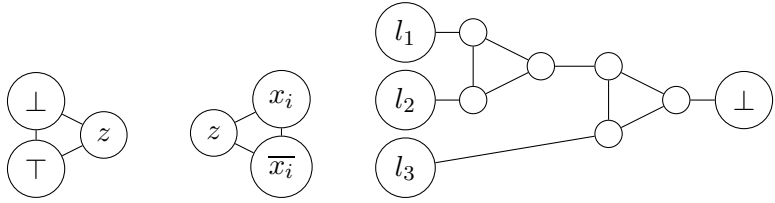
\begin{figure}[tpb]
        \centering
        \begin{tikzpicture}[scale=0.9]
            \node [circle, draw] at (1, -1) (1) {$\bot$};
            \node [circle, draw] at (1, -2) (2) {$\top$};
            \node [circle, draw] at (2, -1.5) (3) {$z$};

            \foreach \from/\to in {1/2, 2/3, 1/3}{%
                \draw (\from) -- (\to);
            }
        \end{tikzpicture}
        \hskip 20pt
        \begin{tikzpicture}[scale=0.9]
            \node [circle, draw] at (1, -1) (1) {$z$};
            \node [circle, draw] at (2, -0.5) (2) {$x_i$};
            \node [circle, draw] at (2, -1.5) (3) {$\overline{x_i}$};

            \foreach \from/\to in {1/2, 2/3, 1/3}{%
                \draw (\from) -- (\to);
            }
        \end{tikzpicture}
        \hskip 20pt
        \begin{tikzpicture}[scale=0.9]
            \node [circle, draw] at (1, -1) (1) {$l_1$};
            \node [circle, draw] at (1, -2) (2) {$l_2$};
            \node [circle, draw] at (1, -3) (3) {$l_3$};
            \node [circle, draw] at (2, -1) (4) {};
            \node [circle, draw] at (2, -2) (5) {};
            \node [circle, draw] at (3, -1.5) (6) {};
            \node [circle, draw] at (4, -1.5) (7) {};
            \node [circle, draw] at (4, -2.5) (8) {};
            \node [circle, draw] at (5, -2) (9) {};
            \node [circle, draw] at (6, -2) (10) {$\bot$};

            \foreach \from/\to in {1/4, 2/5, 3/8, 4/5, 4/6, 5/6, 6/7, 7/8, 7/9, 8/9, 9/10}{%
                \draw (\from) -- (\to);
            }
        \end{tikzpicture}
        \caption{The gadgets $H_1$, $H_2$ and $H_3$ used in the proof of Lemma~\ref{lem:3colhardness}.}
        \label{fig:3colgadgets}
    \end{figure}
\end{proof}

In Lemma~\ref{lem:gridlikehardnessregular} we then proceeded with a planar version of this theorem due to \cite{DBLP:journals/tcs/GareyJS76} and embedded these resulting graphs as minors of a grid.
In essence we used a variant of $3$-coloring where the graph is a subset of a grid graph and every edge can either be an equality or inequality edge, i.e.\ vertices connected by an equality edge have to be colored by the same color and vertices connected by an inequality edge have to be colored with different colors.
We already implicitly showed $\NP$-hardness of this variant which we call \emph{relational $3$-coloring on subgraphs of grids} in the proof of Lemma~\ref{lem:gridlikehardnessregular}.

Note that equality edges are a necessity, since any subgraph of a grid graph is bipartite and thus can be $2$-colored.

\begin{lemma}
    \label{lem:3colgrid}
    Unless ETH fails, relational $3$-coloring on subgraphs $G$ of grids can not be solved in time $2^{o\left(\sqrt{|V(G)|}\right)}$.
\end{lemma}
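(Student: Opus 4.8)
The plan is to reduce $\SAT$ directly to relational $3$-coloring on a subgraph of a grid, wiring the whole instance onto a grid whose side length is linear in the formula size, so that the reduction incurs only a \emph{single} quadratic blow-up rather than the two quadratic blow-ups (``formula to planar graph'' and ``planar graph to grid'') that cost the fourth root in Lemma~\ref{lem:gridlikehardnessregular}.

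First I would apply the sparsification lemma of Impagliazzo, Paturi and Zane \cite{DBLP:journals/jcss/ImpagliazzoPZ01}: assuming $\ETH$, there is no $2^{o(n)}$ algorithm for $\SAT$ instances $\phi$ with $n$ variables $x_1,\dots,x_n$ and $m = O(n)$ clauses $C_1,\dots,C_m$; fix such a $\phi$. Next I would lay out a relational $3$-coloring instance $G$ on a grid, following the textbook gadget reduction recalled in Lemma~\ref{lem:3colhardness} but ``routing'' it onto the grid. Reserve a constant number of consecutive rows for each variable $x_i$ (a horizontal \emph{variable track}) and a constant number of consecutive columns for each clause $C_j$ (a vertical \emph{clause channel}); this uses an $O(n)\times O(m)=O(n)\times O(n)$ grid. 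A variable track is a path of grid vertices joined by equality edges, with a single inequality edge from its left end to a global monochromatic ``ground'' path; this forces every vertex of the track to one of the two non-ground colors, read as the truth value of $x_i$. A clause channel for $C_j$ runs vertically across all $n$ variable tracks: at the three tracks corresponding to the literals of $C_j$ it contains a constant-size \emph{read gadget} copying the track's color into the channel (with one extra inequality edge inserted as a \emph{negation gadget} when the literal is negative), at the remaining $n-3$ tracks it contains a constant-size \emph{crossover gadget} letting the two wires pass without constraining each other, and along its length it carries a partial ``some literal already satisfied'' flag via a chained disjunction gadget, with a terminal gadget that is not $3$-colorable precisely when all three literals of $C_j$ are false. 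All of these are fixed subgraphs of the grid drawable without crossings, in the spirit of the gadgets of Lemma~\ref{lem:gridlikehardnessregular} together with a standard relational crossover; there are $O(nm)=O(n^2)$ gadget occurrences, so $G$ is a subgraph of a grid with $|V(G)| = O(n^2)$.

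For correctness I would argue both directions in the usual way: a satisfying assignment of $\phi$ induces a valid relational $3$-coloring (color each variable track by the color encoding its value, propagate through the crossover and read gadgets, and color each clause channel using the fact that the clause has a true literal), while conversely any valid relational $3$-coloring forces each variable track into a fixed two-color palette and hence defines an assignment; this assignment satisfies every clause because the corresponding terminal clause gadget would otherwise be uncolorable. Finally, since $|V(G)| = O(n^2)$, a $2^{o(\sqrt{|V(G)|})}$ algorithm for relational $3$-coloring on grid subgraphs would decide $\phi$ in time $2^{o(n)}$, contradicting $\ETH$; this last implication is the Cai--Juedes argument \cite{cai2001subexponential}, cf.~\cite[Theorem~14.9]{cygan2015parameterized}.

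The hard part will be the explicit construction and verification of the constant-size relational gadgets that must be genuine subgraphs of the grid: principally the crossover gadget, where a clause channel crosses an unrelated variable track so that neither wire constrains the other, and the read gadget, which copies a track's color into a channel (optionally negated) without disturbing the track --- all using only equality and inequality edges, respecting the planar grid structure, and keeping the ``two-color palette'' anchoring of the variable tracks consistent across the whole layout. Once these gadgets are pinned down, the layout and the size bookkeeping are routine, and the $\ETH$ consequence follows immediately.
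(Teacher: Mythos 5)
Your overall strategy is the same as the paper's: reduce $\SAT$ directly to relational $3$-coloring laid out on a grid of side linear in the formula size, so that only one quadratic blow-up occurs, and then invoke the Impagliazzo--Paturi--Zane lower bound \cite{DBLP:journals/jcss/ImpagliazzoPZ01} to rule out $2^{o(\sqrt{|V(G)|})}$ algorithms. The difference is in the wiring: you propose horizontal variable tracks and vertical clause channels with a crossover at every track/channel intersection, whereas the paper places a color-choosing border $H_1'$, variable gadgets $H_2'$ and clause gadgets $H_3'$ on the grid and then routes the literal wires to the clauses by an iterative sequence of swaps using an explicit crossing gadget $H_4'$ (an embedding of the Garey--Johnson--Stockmeyer crossover \cite{DBLP:journals/tcs/GareyJS76}, Figure~\ref{fig:3colgridcrossing}), stretching gadgets by subdividing equality edges where grid granularities differ. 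Both layouts give an $O(m)\times O(m)$ grid, so the bookkeeping and the ETH conclusion are identical. What your write-up leaves open is exactly the content that constitutes most of the paper's proof, namely the explicit constant-size gadgets; you correctly flag the crossover and read gadgets as the hard part, and such gadgets do exist (the paper exhibits them), so the plan is viable. One concrete detail you commit to would fail as stated: implementing a negative literal by ``one extra inequality edge'' on the channel does not keep the wire inside the true/false palette, since an inequality edge only forbids equality and the negated wire could take the ground color $z$; you must additionally anchor the negated vertex to the $z$-rail (a triangle with $z$, as in the variable gadget $H_2'$), or, as the paper does, avoid in-transit negation altogether by providing both $x_i$ and $\overline{x_i}$ vertices at the variable gadget and routing the appropriate one to the clause.
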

\begin{proof}
    We reduce from $3$-satisfiability.
    So let $\phi = C_1 \land \ldots \land C_m$ be a formula in $3$-CNF on $n$ variables $x_1, \ldots, x_n$.

    We again start with the color choosing gadget $H_1'$ from Figure~\ref{fig:3colgridgadgets} and assume that each vertex of $H_1'$ is colored with its label to simplify the analysis.
    Note that vertices with the same labels will be connected by a path of equality edges, so they have the same color in each proper $3$-coloring.
    $H_1'$ forms a border of width $\leq 2$ around the rest of the graph.

    Connected to the vertices labeled with $z$ are the variable gadgets $H_2'$.
    The vertices with labels $x_i$ and $\overline{x_i}$ corresponding to the literals of $\phi$ appear exactly as often as each of the literals appears in $\phi$.
    Connected to the bottom vertices $\bot$ are the clause gadgets $H_3'$.
    Both of these gadgets can be found in Figure~\ref{fig:3colgridgadgets}.

    The only thing left is connecting the vertices corresponding to literals in the clause gadgets to those in the variable gadgets via an equality edge.
    Unfortunately this would make the graph not be a subgraph of a grid, so we need the crossing gadget $H_4'$ from Figure~\ref{fig:3colgridcrossing} which is an embedding of the crossing gadget used in \cite{DBLP:journals/tcs/GareyJS76}.
    In $H_4'$ the vertices pairs labeled $a, a'$ and $b, b'$ each have the same color in every proper $3$-coloring.
    Additionally there is a proper $3$-coloring for every choice of colors of $a$ and $b$.

    We now need to ``sort'' the vertices corresponding to literals into the order $(l_{1,1}, l_{1,2}, l_{1,3}, l_{2,1}, l_{2,2}, l_{2,3}, \ldots, l_{m,1}, l_{m,2}, l_{m,3})$ where $C_i = l_{i, 1} \lor l_{i, 2} \lor l_{i, 3}$.
    We do this via an iterative procedure.
    We add the crossing gadget $H_4'$ between every two consecutive vertices $l_i, l_j$ which are in the wrong order in each step.
    In case some vertices could be part of multiple swaps choose the pairs in a way that maximizes the number of possible swaps per iteration.
    We connect $l_i$ to the vertex $a$ of $H_4'$ via an equality edge and similarly $l_j$ to $b$.
    The vertices $a'$ and $b'$ now form the next step in the ordering process and have essentially swapped adjacent $l_i$ and $l_j$.
    All the vertices $l_i$ which do not change their position will be extended via paths made out of equality edges to be on the same layer as the outlets of the crossing gadgets.
    After at most $O(m)$ of these steps the vertices are sorted in our desired order and can be directly connected to the corresponding vertices of the clause gadgets.

    We call this resulting graph $G$.
    Note that $H_4'$ enforces a finer subdivision of the grid than $H_1', H_2'$ and $H_3'$, but we can always split an equality edge into two equality edges connected by a vertex or split vertices into two vertices connected by an equality edge to stretch these gadgets, so $G$ is a subgraph of a grid graph.

    It can be easily checked that $H_1', H_2'$ and $H_3'$ behave in exactly the same way as their counterparts in the proof of Lemma~\ref{lem:3colhardness}, so the correctness of this reduction can easily be seen with the same reasoning as there together with the properties of $H_4'$.

    $G$ is a subgraph of an $O(m) \times O(m)$ grid, so $|V(G)| = O(m^2)$.
    If we could decide relational $3$-colorability on subgraphs of grids in size $2^{o\left(\sqrt{|V(G)|}\right)}$ we could thus decide $3$-satisfiability in time $2^{o\left(m\right)}$ which is a contradiction unless ETH fails, see \cite{DBLP:journals/jcss/ImpagliazzoPZ01} for this lower bound for $3$-satisfiability.
\end{proof}

    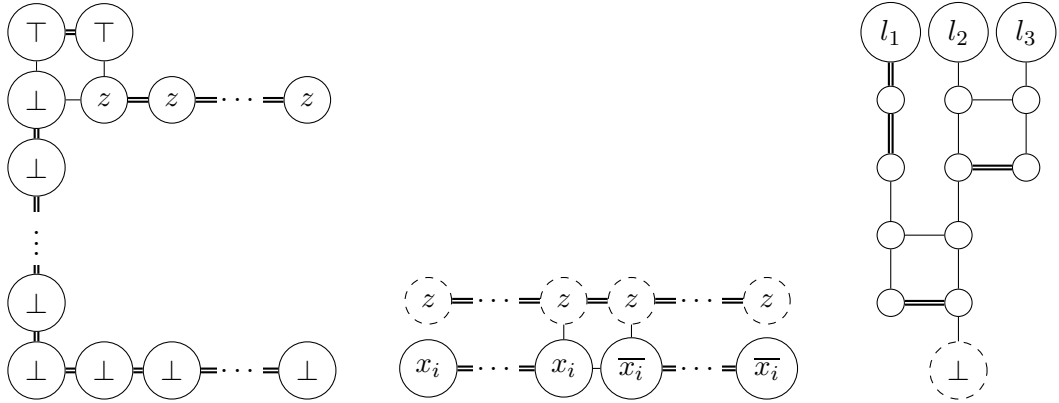
\begin{figure}[tpb]
        \centering
        \begin{tikzpicture}[scale=0.9]
            \node [circle, draw] at (0, 0) (t1) {$\top$};
            \node [circle, draw] at (1, 0) (t2) {$\top$};

            \node [circle, draw] at (1, -1) (z1) {$z$};
            \node [circle, draw] at (2, -1) (z2) {$z$};
            \node [circle, draw] at (4, -1) (z3) {$z$};

            \node [circle, draw] at (0, -1) (b1) {$\bot$};
            \node [circle, draw] at (0, -2) (b2) {$\bot$};
            \node [circle, draw] at (0, -4) (b3) {$\bot$};
            \node [circle, draw] at (0, -5) (b4) {$\bot$};
            \node [circle, draw] at (1, -5) (b5) {$\bot$};
            \node [circle, draw] at (2, -5) (b6) {$\bot$};
            \node [circle, draw] at (4, -5) (b7) {$\bot$};

            \foreach \from/\to in {t1/t2, z1/z2, b1/b2, b3/b4, b4/b5, b5/b6}{%
                \draw[thick, double] (\from) -- (\to);
            }
            \foreach \from/\to in {t1/b1, t2/z1, b1/z1}{%
                \draw (\from) -- (\to);
            }
            \foreach \from/\to in {z2/z3, b6/b7}{%
                \draw[thick, double, postaction={decorate, decoration={markings,
                        mark= at position 0.5
                              with
                              {%
                                \fill[white] (-0.3,-0.3) rectangle (0.3,0.3);
                              }
                        }
                    }] (\from) --node{\ldots} (\to);
            }
            \foreach \from/\to in {b2/b3}{%
                \draw[thick, double, postaction={decorate, decoration={markings,
                        mark= at position 0.5
                              with
                              {%
                                \fill[white] (-0.3,-0.3) rectangle (0.4,0.3);
                              }
                        }
                    }] (\from) --node{\vdots} (\to);
            }
        \end{tikzpicture}
        \hskip 20pt
        \begin{tikzpicture}[scale=0.9]
            \node [circle, draw, dashed] at (1, 0) (z1) {$z$};
            \node [circle, draw, dashed] at (3, 0) (z2) {$z$};
            \node [circle, draw, dashed] at (4, 0) (z3) {$z$};
            \node [circle, draw, dashed] at (6, 0) (z4) {$z$};

            \node [circle, draw] at (1, -1) (x1) {$x_i$};
            \node [circle, draw] at (3, -1) (x2) {$x_i$};
            \node [circle, draw] at (4, -1) (nx1) {$\overline{x_i}$};
            \node [circle, draw] at (6, -1) (nx2) {$\overline{x_i}$};

            \foreach \from/\to in {z2/z3}{%
                \draw[thick, double] (\from) -- (\to);
            }
            \foreach \from/\to in {z2/x2, x2/nx1, z3/nx1}{%
                \draw (\from) -- (\to);
            }
            \foreach \from/\to in {z1/z2, z3/z4, x1/x2, nx1/nx2}{%
                \draw[thick, double, postaction={decorate, decoration={markings,
                        mark= at position 0.5
                              with
                              {%
                                \fill[white] (-0.3,-0.3) rectangle (0.3,0.3);
                              }
                        }
                    }] (\from) --node{\ldots} (\to);
            }
        \end{tikzpicture}
        \hskip 20pt
        \begin{tikzpicture}[scale=0.9]
            \node [circle, draw] at (1, 0) (l1) {$l_1$};
            \node [circle, draw] at (2, 0) (l2) {$l_2$};
            \node [circle, draw] at (3, 0) (l3) {$l_3$};

            \node [circle, draw] at (1, -1) (a1) {};
            \node [circle, draw] at (2, -1) (a2) {};
            \node [circle, draw] at (3, -1) (a3) {};

            \node [circle, draw] at (1, -2) (b1) {};
            \node [circle, draw] at (2, -2) (b2) {};
            \node [circle, draw] at (3, -2) (b3) {};

            \node [circle, draw] at (1, -3) (c1) {};
            \node [circle, draw] at (2, -3) (c2) {};

            \node [circle, draw] at (1, -4) (d1) {};
            \node [circle, draw] at (2, -4) (d2) {};

            \node [circle, draw, dashed] at (2, -5) (bot) {$\bot$};

            \foreach \from/\to in {l1/a1, a1/b1, b2/b3, d1/d2}{%
                \draw[thick, double] (\from) -- (\to);
            }
            \foreach \from/\to in {b1/c1, c1/d1, l2/a2, a2/b2, b2/c2, c2/d2, d2/bot, l3/a3, a3/b3, a2/a3, c1/c2}{%
                \draw (\from) -- (\to);
            }
        \end{tikzpicture}
        \caption{The gadgets $H_1'$, $H_2'$ and $H_3'$ used in the proof of Lemma~\ref{lem:3colgrid}. Double lines denote equality edges and single lines denote inequality edges. Vertices corresponding to other gadgets are visualized with dashed outline to show how to connect the gadgets.}
        \label{fig:3colgridgadgets}
    \end{figure}
    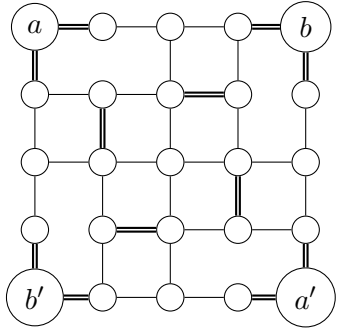
\begin{figure}[tpb]
        \centering
        \begin{tikzpicture}[scale=0.9]
            \node [circle, draw] at (1, -1) (11) {$a$};
            \node [circle, draw] at (2, -1) (21) {};
            \node [circle, draw] at (3, -1) (31) {};
            \node [circle, draw] at (4, -1) (41) {};
            \node [circle, draw] at (5, -1) (51) {$b$};

            \node [circle, draw] at (1, -2) (12) {};
            \node [circle, draw] at (2, -2) (22) {};
            \node [circle, draw] at (3, -2) (32) {};
            \node [circle, draw] at (4, -2) (42) {};
            \node [circle, draw] at (5, -2) (52) {};

            \node [circle, draw] at (1, -3) (13) {};
            \node [circle, draw] at (2, -3) (23) {};
            \node [circle, draw] at (3, -3) (33) {};
            \node [circle, draw] at (4, -3) (43) {};
            \node [circle, draw] at (5, -3) (53) {};

            \node [circle, draw] at (1, -4) (14) {};
            \node [circle, draw] at (2, -4) (24) {};
            \node [circle, draw] at (3, -4) (34) {};
            \node [circle, draw] at (4, -4) (44) {};
            \node [circle, draw] at (5, -4) (54) {};

            \node [circle, draw] at (1, -5) (15) {$b'$};
            \node [circle, draw] at (2, -5) (25) {};
            \node [circle, draw] at (3, -5) (35) {};
            \node [circle, draw] at (4, -5) (45) {};
            \node [circle, draw] at (5, -5) (55) {$a'$};

            \foreach \from/\to in {11/12, 11/21, 22/23, 32/42, 41/51, 51/52, 43/44, 54/55, 55/45, 24/34, 14/15, 15/25}{%
                \draw[thick, double] (\from) -- (\to);
            }
            \foreach \from/\to in {21/31, 31/41, 12/22, 22/32, 32/31, 42/41, 12/13, 13/23, 23/33, 32/33, 33/43, 42/43, 52/53, 43/53, 13/14, 23/24, 33/34, 53/54, 34/44, 44/54, 24/25, 25/35, 35/45, 34/35}{%
                \draw (\from) -- (\to);
            }
        \end{tikzpicture}
        \caption{The gadget $H_4'$ used in the proof of Lemma~\ref{lem:3colgrid}. Double lines denote equality edges and single lines denote inequality edges.}
        \label{fig:3colgridcrossing}
    \end{figure}

\begin{lemma}
    \label{lem:gridlikeethhardness}
    Unless ETH fails, $3$-coloring doesn't admit an $2^{o\left(\sqrt{|V|}\right)}$ time algorithm for $8$-regular grid-like layered graphs $G = (V, E)$.
\end{lemma}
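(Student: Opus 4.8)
The plan is to reuse the gadget machinery developed in the proof of Lemma~\ref{lem:gridlikehardnessregular}, but to feed it the much tighter starting instance produced by Lemma~\ref{lem:3colgrid} instead of a planar graph embedded quadratically into a grid. The quartic loss in Lemma~\ref{lem:gridlikehardnessregular} came \emph{solely} from the minor-embedding of an arbitrary planar graph into a grid of quadratic size; since Lemma~\ref{lem:3colgrid} already hands us a subgraph of a grid of side length $O(m)$ (with edges marked as equality or inequality edges) that is relationally $3$-colorable iff the $3$-CNF formula $\phi$ is satisfiable, we can skip that step altogether.

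First I would take a $3$-CNF formula $\phi$ on $n$ variables and $m$ clauses; by the sparsification lemma \cite{DBLP:journals/jcss/ImpagliazzoPZ01} we may assume $m = O(n)$, so that a $2^{o(m)}$-time algorithm for $3$-SAT contradicts ETH. Running the reduction of Lemma~\ref{lem:3colgrid} on $\phi$ yields a subgraph $G$ of an $O(m)\times O(m)$ grid together with a labelling of every edge as an equality edge or an inequality edge, such that $G$ admits a proper relational $3$-coloring iff $\phi$ is satisfiable; in particular $G$ has maximum degree $4$ and $|V(G)| = O(m^2)$.

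Next I would apply verbatim the edge-replacement from the proof of Lemma~\ref{lem:gridlikehardnessregular}: every equality edge of $G$ is replaced by the gadget $H^=_1$ or $H^=_2$ and every inequality edge by $H^{\neq}_1$ or $H^{\neq}_2$, choosing variant~$1$ for edges that are horizontal in the grid embedding and variant~$2$ for vertical ones; then the nearly-$8$-regular multigraph versions of these gadgets (as in Figure~\ref{fig:eqgadgetmult}) are substituted so that every vertex ends up with degree exactly $8$. Call the resulting graph $G'$. Exactly as argued in Lemma~\ref{lem:gridlikehardnessregular}, replacing the edges of a subgraph of a grid by these gadgets yields a grid-like layered graph, $G'$ is $8$-regular, and $G'$ is (plainly) $3$-colorable iff $G$ is relationally $3$-colorable, hence iff $\phi$ is satisfiable. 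The one new point to verify, and the crux of the argument, is the size bound: each gadget has only $O(1)$ vertices and $G$ has only $O(|V(G)|)$ edges because its degree is bounded, so $|V(G')| = O(|V(G)|) = O(m^2)$ — a linear blow-up, in contrast with the quadratic blow-up incurred in Lemma~\ref{lem:gridlikehardnessregular}.

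Finally I would chain the running times. A hypothetical $2^{o(\sqrt{|V(G')|})}$-time algorithm for $3$-coloring $8$-regular grid-like layered graphs would, composed with the two polynomial-time reductions above, decide satisfiability of $\phi$ in time $\poly(m)\cdot 2^{o(\sqrt{m^2})} = 2^{o(m)}$, contradicting ETH. The main obstacle is not any single computation but the bookkeeping that (a) the degree-padding with the multigraph gadgets keeps the graph grid-like layered and preserves the colorability equivalence, and (b) the total vertex count stays $O(m^2)$; both facts follow from the constant size of each gadget together with the bounded degree of $G$, which is exactly what starting from Lemma~\ref{lem:3colgrid} (rather than from the weaker Lemma~\ref{lem:gridlikehardnessregular}) buys us.
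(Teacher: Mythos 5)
Your proposal is correct and follows essentially the same route as the paper: the paper's proof of Lemma~\ref{lem:gridlikeethhardness} likewise reduces from the relational $3$-coloring instance of Lemma~\ref{lem:3colgrid} (rather than re-embedding a planar graph into a grid), replaces equality and inequality edges by the gadgets of Lemma~\ref{lem:gridlikehardnessregular}, and observes that the resulting $8$-regular grid-like layered graph has only $O(|V(G)|)=O(m^2)$ vertices, so a $2^{o(\sqrt{|V|})}$ algorithm would violate ETH. Your extra unwinding back to $3$-SAT and the explicit bookkeeping of the linear blow-up match the paper's argument; no gaps.
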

\begin{proof}
    We reduce from relational $3$-coloring on subgraphs of grids.
    Let $G = (V, E)$ be a subgraph of a grid.
    We proceed in the same way as Lemma~\ref{lem:gridlikehardnessregular} did except that $\boxplus$ is replaced by $G$, each equality edge is replaced by the corresponding equality gadget and each inequality edge is replaced by the corresponding inequality gadget.
    
    Note that the obtained graph $G_2$ now has $O(|V(G)|)$ many vertices.
    If we can decide whether the $8$-regular grid-like layered graph $G_2$ allows for a proper $3$-coloring in time $2^{o\left(\sqrt{|V(G_2)|}\right)}$ we can decide via this reduction whether $G$ allows for a relational $3$-coloring in time $\poly(|V(G)|) \cdot 2^{o\left(\sqrt{|V(G)|}\right)}$, contradicting Lemma~\ref{lem:3colgrid} unless ETH fails.
\end{proof}

We now have all the necessary intermediate results to prove that even evaluation of highest weight vectors given by semistandard tableaux is $\NP$-hard.
We use the same general idea of coloring the cells of the Young tableau s.t.\ all cells with the same number receive the same color.
Additionally each column of the Young tableau has to be repeated often enough that any summands are guaranteed to be positive iff each column is colorful, i.e.\ does not contain any color multiple times and zero otherwise.

\begin{theorem}
    \label{thm:semistdhardness}
    The evaluation of highest weight vectors $f_{\hat{T}}$ of $\Sym^n\Sym^d \IC^m$ is $\NP$-hard for any constant $d \geq 16$ with $16~|~d$  and $m \geq 5$, when $f_{\hat{T}}$ is given as a semistandard Young tableau $\hat{T}$.
    This even holds if evaluation is restricted to points of Waring rank $5$ and the algorithm only has to decide whether the evaluation is non-zero.

    Additionally this evaluation can not be computed in time $2^{o(\sqrt{n})}$ unless ETH fails.
\end{theorem}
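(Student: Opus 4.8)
The plan is to reduce from deciding proper $3$-colourability of $8$-regular grid-like layered graphs, which is $\NP$-hard by Lemma~\ref{lem:gridlikehardnessregular} and, assuming $\ETH$, admits no $2^{o(\sqrt{|V|})}$ algorithm by Lemma~\ref{lem:gridlikeethhardness}. Given such a graph $G=(V,E)$, the first step is to split it via Lemma~\ref{lem:gridlikedecomp} into two \emph{semistandard} two-row tableaux $\Thor$ and $\Tvert$ with $E=E(G_{\Thor})\cup E(G_{\Tvert})$, where moreover $\Tvert$ uses every label $1,\dots,|V|$ and (by the layer-by-layer labelling in that lemma) the labels are monotone along the layers. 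The goal is then to assemble $\Thor$ and $\Tvert$ into a single semistandard tableau $\hat T$ with at most five rows whose associated graph $G_{\hat T}$ is $G$ together with a constant-size gadget, and then to evaluate $f_{\hat T}$ column by column at a point of Waring rank $5$ exactly as in the proof of Theorem~\ref{thm:evaluationhardness}.

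For the assembly I would, after relabelling the vertices of $\Thor$ into a range disjoint from and above that of $\Tvert$ with one or two ``separator'' labels in between, stack the two tableaux into columns of length at most $5$ (which are strictly increasing by the choice of labels); where the left-to-right orders forced by $\Thor$ and by $\Tvert$ are incompatible, I shorten columns appropriately so the shape stays a partition, which is possible because $\Thor$ has maximum degree at most $2$ and is therefore very sparse. A constant-size auxiliary gadget (a pair of vertices carrying two of the five colours and adjacent to every $V$-vertex) forces the $V$-vertices to use only three colours, so that proper $5$-colourings of $G_{\hat T}$ correspond to proper $3$-colourings of $G$. To make the columnwise determinants detect colourfulness by their \emph{sign}, replace every edge of $G$ by $d/8$ parallel copies; then each edge-column of $\hat T$ occurs $d/8$ times, which is even exactly when $16\mid d$, and each $V$-vertex occurs $8\cdot(d/8)=d$ times, giving content $n\times d$. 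This is precisely where the hypotheses $d\ge 16$, $16\mid d$, five rows, Waring rank $5$ and $m\ge 5$ all enter, and carrying out this merge without introducing spurious colour constraints, while keeping the content exactly $n\times d$, is the main obstacle.

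Now evaluate $f_{\hat T}$ at $p=\sum_{i=1}^{5}\ell_i^{\,d}\in\Sym^d\IC^5\subseteq\Sym^d\IC^m$ with $\ell_i=(1,t_i,t_i^2,t_i^3,t_i^4)$ for distinct real $t_i$, so that $p$ has Waring rank $5$ and, by the Vandermonde determinant, the determinant of the largest top square submatrix formed by any at most five distinct $\ell_i$ is a nonzero real number. Taking $p$ in this Waring decomposition in eq.~\eqref{eq:sumpropertheta}, the proper placements $\vartheta$ are exactly those constant on each block of $d$ cells, hence they correspond to the $5$-colourings $\varphi$ of the cell labels of $\hat T$; for such a $\vartheta$ each $\det{}_{\vartheta,c}$ is a nonzero real if column $c$ is colourful and $0$ otherwise, and since every column occurs an even number of times, $\prod_c\det{}_{\vartheta,c}$ is a product of squares of reals, hence a nonnegative real that is positive precisely when all columns are colourful. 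Therefore $f_{\hat T}(p)\neq 0$ iff some $5$-colouring makes all columns of $\hat T$ colourful iff, by the gadget, $G$ admits a proper $3$-colouring, which gives $\NP$-hardness.

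Finally, the merge increases the number of labels only by a constant factor and parallel-copying adds no labels, so $n=|V(\hat T)|=O(|V(G)|)$; hence a $2^{o(\sqrt n)}$ algorithm for the evaluation would decide $3$-colourability of $8$-regular grid-like layered graphs in time $\poly(|V(G)|)\cdot 2^{o(\sqrt{|V(G)|})}$, contradicting Lemma~\ref{lem:gridlikeethhardness}. As indicated, the bookkeeping of the merge in the second step is the hard part; the sign and degree-counting arguments are otherwise parallel to those of Theorems~\ref{thm:evaluationhardness} and~\ref{thm:evaluationsharpphardness}.
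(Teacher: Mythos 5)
Your high-level skeleton coincides with the paper's: reduce from $3$-colourability of $8$-regular grid-like layered graphs (Lemmas~\ref{lem:gridlikehardnessregular} and~\ref{lem:gridlikeethhardness}), split $G$ via Lemma~\ref{lem:gridlikedecomp} into $\Thor$ and $\Tvert$, assemble a $5$-row semistandard tableau, evaluate at a Waring-rank-$5$ Vandermonde point with every column repeated an even number of times so that each summand is a nonnegative product of squares, and transfer the ETH bound because $n=O(|V|)$. However, the tableau assembly --- which you yourself call ``the main obstacle'' --- is exactly where the proof lives, and the two mechanisms you sketch for it do not work. First, relabelling the vertices of $\Thor$ into a range disjoint from that of $\Tvert$ decouples the two halves of $E$: in eq.~\eqref{eq:sumpropertheta} a linear form is chosen once per tableau label (per block), so nothing ties the colour of a vertex's $\Tvert$-copy to that of its $\Thor$-copy, and each half of $G$ is trivially $2$-colourable (by layer parity for $\Evert$, and within each layer $G_{\Thor}$ is a union of paths). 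Hence the merged tableau would essentially always admit a colourful assignment and the reduction would not detect $3$-colourability of $G$. The paper instead keeps the \emph{same} label for each vertex in both the $\Evert$-encoding part ($T_{3,2}$) and the $\Ehor$-encoding part ($T_5$), and achieves semistandardness by placing these parts in different rows of a carefully shaped tableau, not by relabelling.

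Second, your colour-restriction gadget --- a single pair of auxiliary labels ``adjacent to every $V$-vertex'' --- is incompatible with the rectangular content $n\times d$ for constant $d$: each label occurs exactly $d$ times, so two fixed labels can share a column with only $O(d)$ vertex-entries, while $\Omega(|V|)$ such co-occurrences are required. The paper therefore uses $\Theta(|\Evert|)$ blocker pairs $(\,c_{2i-1},c_{2i}\,)$ sitting in the two rows above the doubled $\Evert$-columns, together with the consistency gadget $T_1$ (the chain of blocks $T_{1,i}$ sharing $c_{8i-1},c_{8i}$ and anchored by the three $a$-rows) forcing \emph{all} these pairs to occupy the same two of the five linear forms; it further exploits that every vertex is incident to an $\Evert$-edge (condition~\ref{itm:gridlikeneighbours} of Definition~\ref{def:gridlike}), so blocking only the $\Evert$-columns already restricts every vertex label to three forms globally, and it uses the padding tableaux $T_2,T_4$ and precise multiplicity counts to restore exact content $n\times d$ and a valid partition shape. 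These are necessary ideas, not bookkeeping, so as it stands your proposal is a plan whose crucial step is missing; the evaluation-side argument (Vandermonde point, even multiplicities, sign control as in Theorem~\ref{thm:evaluationhardness}) and the final ETH transfer are correct and match the paper.
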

\begin{proof}
    \definecolor{symbolicA}{rgb}{0,0,1}
    \definecolor{symbolicB}{rgb}{1,0,0}
    \definecolor{symbolicC}{rgb}{0,0.5,0}
    \definecolor{symbolicD}{rgb}{0.5,0.3,0.2}
    \definecolor{symbolicE}{rgb}{0.5,0,0.5}

    We reduce from checking whether an $8$-regular grid-like layered graph allows for a proper $3$-coloring which was proven in Lemma~\ref{lem:gridlikehardnessregular} to be \NP-hard.

    Let $G = (V, \Ehor \cup \Evert)$ be an $8$-regular grid-like layered graph where $\Ehor$ denotes the edges inside layers and $\Evert$ denotes edges between layers.
    W.l.o.g.\ the vertex set $V$ are the numbers $1, \ldots, |V|$ assigned in a layer by layer and left to right fashion, given by the embedding of $G$.
    To ease the description of the constructed semistandard tableaux $\hat{T}$ we will describe it in $5$ parts $T_1, \ldots, T_5$ over the symbolic entries $\textcolor{symbolicA}{a_i}, \textcolor{symbolicB}{b_i}, \textcolor{symbolicC}{c_i}, \textcolor{symbolicD}{d_i}, \textcolor{symbolicE}{e_i}$.
    For better readability we will colorcode each of the symbolic entries in the constructions of this theorem.

    The point of evaluation is now $p = \sum_{i=1}^5 \ell_i^d$ with $\ell_i = (1, i, i^2, i^3, \ldots, i^m)$
    Then any determinants arising in the evaluation are determinants of Vandermonde matrices and thus are well known to be non-zero.
    
    $p$ is a point of Waring rank $5$, so analogously to Theorem~\ref{thm:evaluationhardness} the summands of the evaluation will consist of assigning one of the $5$ linear forms to each number and will be non-zero iff no column contains the same linear form twice.
    Since all vectors are real, any occuring determinants in the evaluation will also be real and hence every summand will be either $0$ or positive due to every column being repeated an even number of times.

    \begin{figure}[tpb]
        \centering
        \begin{tikzpicture}[scale=0.5]
            \draw (0,0) rectangle (6,5);
            \draw[dashed] (1.8,0) -- (1.8,5);
            \draw[dashed] (4.2,0) -- (4.2,5);
            \node at (0.9,2.5) {$T_{1,1}$};
            \node at (3,2.5) {$\dots$};
            \node at (5.1,2.5) {$T_{1,r}$};

            \draw (6,1) rectangle (8,5);
            \node at (7,3) {$T_2$};

            \draw (8,1) rectangle (14,5);
            \draw[dashed] (8,3) -- (14,3);
            \node at (11,4) {$T_{3,1}$};
            \node at (11,2) {$T_{3,2}$};

            \draw (14,3) rectangle (16,5);
            \node at (15,4) {$T_4$};

            \draw (16,3) rectangle (22,5);
            \node at (19,4) {$T_5$};
        \end{tikzpicture}
        \caption{The general structure of the $5$ row Young tableau $\hat{T}$ constructed in Theorem~\ref{thm:semistdhardness}}
        \label{fig:semistdhardnesstructure}
    \end{figure}
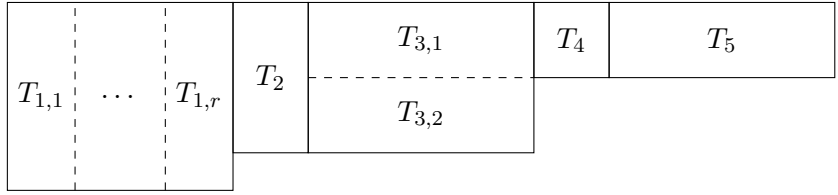

    The general structure of $\hat{T}$ can be seen in Figure~\ref{fig:semistdhardnesstructure} and we will first describe the main idea of each part.
    The parts $T_3$ and $T_5$ both encode the actual $3$-coloring restrictions of the edge sets $\Evert$ and $\Ehor$ respectively, in the entries $\textcolor{symbolicE}{e_i}$ in the same way as in Theorem~\ref{thm:evaluationhardness}.
    To ensure that only $3$ colors can be used for the graph coloring, the entries $\textcolor{symbolicC}{c_i}$ are added into $T_3$ to use up the remaining colors.
    The consistency of the $\textcolor{symbolicC}{c_i}$, i.e.\ that exactly two colors are used by all of the $\textcolor{symbolicC}{c_i}$, is then ensured in $T_1$ with the help of the entries $\textcolor{symbolicA}{a_i}$ and $\textcolor{symbolicB}{b_i}$.
    Everything else, i.e.\ the $\textcolor{symbolicD}{d_i}$ and the tableaux $T_2$ and $T_4$ are only used in order to make $\hat{T}$ semistandard and with rectangular content.

    $\hat{T}$ is then the concatenation $T_1 T_2 \dots T_5$ where we assign increasing numbers starting from $1$ first to all the $\textcolor{symbolicA}{a_i}$, then to all the $\textcolor{symbolicB}{b_i}, \textcolor{symbolicC}{c_i}, \textcolor{symbolicD}{d_i}$ and $\textcolor{symbolicE}{e_i}$ in order, increasing inside each group of symbolic entries with increasing index.
    This ensures that each of the $T_i$ individually, but also the concatenation $\hat{T}$ will be semistandard.
    The latter can be seen by looking at the symbolic entries at the left and right borders of the $T_i$ in the following descriptions.

    We first describe the construction of all the $T_i$:
    $T_1$ is built as a concatenation of smaller tableaux $T_{1,1}, \ldots T_{1, r}$ for $r = \lceil \frac{|\Evert| - 1}{24} \rceil$.
    The construction of $T_{1,1}$ and of $T_{1,i}$ for $1 < i \leq r$ can be seen in Figure~\ref{fig:semitabconstr1}.
    $T_2$ and $T_4$ are always the same and are given in Figure~\ref{fig:semitabconstr2}.
    $T_3$ is given as a left aligned vertical concatenation of $T_{3,1}$ and $T_{3,2}$, where $T_{3, 1}$ consists of the columns
    {\scriptsize \def\lr#1{\multicolumn{1}{|@{\hspace{.6ex}}c@{\hspace{.6ex}}|}{\raisebox{-.3ex}{\textcolor{symbolicC}{$#1$}}}}
\raisebox{-.6ex}{$\begin{array}[b]{*{1}c}\cline{1-1}
        \lr{c_1}\\\cline{1-1}
        \lr{c_2}\\\cline{1-1}
\end{array}$}
},
    {\scriptsize \def\lr#1{\multicolumn{1}{|@{\hspace{.6ex}}c@{\hspace{.6ex}}|}{\raisebox{-.3ex}{\textcolor{symbolicC}{$#1$}}}}
\raisebox{-.6ex}{$\begin{array}[b]{*{1}c}\cline{1-1}
        \lr{c_3}\\\cline{1-1}
        \lr{c_4}\\\cline{1-1}
\end{array}$}
},
$\ldots$,
    {\scriptsize \def\lr#1{\multicolumn{1}{|@{\hspace{.6ex}}c@{\hspace{.6ex}}|}{\raisebox{-.3ex}{\textcolor{symbolicC}{$#1$}}}}
\raisebox{-.6ex}{$\begin{array}[b]{*{1}c}\cline{1-1}
        \lr{c_{8r-3}}\\\cline{1-1}
        \lr{c_{8r-2}}\\\cline{1-1}
\end{array}$}
} each repeated $12$ times and
    {\scriptsize \def\lr#1{\multicolumn{1}{|@{\hspace{.6ex}}c@{\hspace{.6ex}}|}{\raisebox{-.3ex}{\textcolor{symbolicC}{$#1$}}}}
\raisebox{-.6ex}{$\begin{array}[b]{*{1}c}\cline{1-1}
        \lr{c_{8r-1}}\\\cline{1-1}
        \lr{c_{8r}}\\\cline{1-1}
\end{array}$}
} repeated $14$ times.
    $T_{3,2}$ is obtained from $\Tvert$ of Lemma~\ref{lem:gridlikedecomp} by doubling every column.
    Lastly $T_5$ is constructed in the exact same way as $T_{3,2}$, but is obtained from $\Thor$ of Lemma~\ref{lem:gridlikedecomp}.
    
    We first prove that $T$ is a semistandard Young tableau of rectangular content.
    $T_1$ fulfils the following properties which are easy to prove via induction:
    \begin{itemize}
        \item $\textcolor{symbolicA}{a_1}, \ldots, \textcolor{symbolicA}{a_{3r}}$ all appear exactly $16$ times each in $T_1$.
        \item $\textcolor{symbolicB}{b_1}$ and $\textcolor{symbolicB}{b_2}$ appear exactly twice in $T_1$.
        \item $\textcolor{symbolicC}{c_1}, \ldots, \textcolor{symbolicC}{c_{8r-2}}$ all appear exactly $4$ times in $T_1$.
        \item $\textcolor{symbolicC}{c_{8r-1}}$ and $\textcolor{symbolicC}{c_{8r}}$ appear exactly twice in $T_1$.
        \item If we replace the symbolic entries as previously described then $T_1$ is semistandard.
    \end{itemize}
    
    The only important properties of $T_2$ and $T_4$ are that $\textcolor{symbolicB}{b_1}, \textcolor{symbolicB}{b_2}, \textcolor{symbolicD}{d_1}, \textcolor{symbolicD}{d_2}$ all appear exactly $14$ times in $T_2$ and $\textcolor{symbolicD}{d_1}$ and $\textcolor{symbolicD}{d_2}$ each appear twice in $T_4$, while both are clearly semistandard.
    
    The properties of $T_3$ are now:
    \begin{itemize}
        \item $\textcolor{symbolicC}{c_1}, \ldots, \textcolor{symbolicC}{c_{8r-2}}$ all appear $12$ times in $T_3$.
        \item $\textcolor{symbolicC}{c_{8r-1}}$ and $\textcolor{symbolicC}{c_{8r}}$ appear exactly $14$ times in $T_3$.
        \item If we replace the symbolic entries as previously described, $T_3$ is semistandard.
        \item $T_{3,1}$ has at least as many columns as $T_{3,2}$ by our choice of $r = \lceil \frac{|\Evert| - 1}{24} \rceil$.
            $T_{3,1}$ has \[
                (4r-1) \cdot 12 + 14 \geq \left(\frac{|\Evert|-1}{6}-1\right)\cdot 12 + 14 = 2 \cdot |\Evert|
            \]
            columns while $T_{3,2}$ has exactly $2 \cdot |\Evert|$ columns.
    \end{itemize}
    The last property of $T_3$ is important in order for $T_3$ and thus $\hat{T}$ to be a of proper shape for a Young tableau, i.e.\ have non-decreasing row lengths.
    
    Combining all the properties we see that $\hat{T}$ contains every entry exactly $16$ times each and is semistandard after replacing the symbolic entries.
    Additionally each column is repeated an even number of times, so no summands of the evaluation can be negative.
    In case $d > 16$ we repeat every column of $T$ $\frac{d}{16}$ times in order to get the representation of a highest weight vector of $\Sym^n\Sym^d \IC^m$ as a semistandard Young tableau.
    
    Next we look at the effects of the gadgets on the possible non-zero summands of the evaluation.
    
    Any further considerations will now assume w.l.o.g.\ that $\textcolor{symbolicA}{a_1}, \textcolor{symbolicA}{a_2}, \textcolor{symbolicA}{a_3}$ get assigned the first three linear forms of $p$, all other cases are symmetric.
    These three entries all occur together in the very first column of $T_1$, so they have to be pairwise different in order to be part of a non-zero summand.
    $T_{1,1}$ then enforces $\textcolor{symbolicC}{c_1}, \ldots, \textcolor{symbolicC}{c_8}$ to all be assigned the last two linear forms of $p$.
    Since $T_{1, i}$ and $T_{1, i+1}$ share the entries of $\textcolor{symbolicC}{c_{8i-1}}$ and $\textcolor{symbolicC}{c_{8i}}$, inductively all of $\textcolor{symbolicA}{a_i}, \ldots, \textcolor{symbolicA}{a_{3r}}$ will be assigned the first three linear forms of $p$ in some order and all of $\textcolor{symbolicC}{c_1}, \ldots, \textcolor{symbolicC}{c_{8r}}$ will be assigned the last two linear forms of $p$ in some order.

    The last important property is, that $\textcolor{symbolicE}{e_1}, \ldots, \textcolor{symbolicE}{e_{|V|}}$ all appear at least once in $T_3$ since every vertex of a grid-like layered graph is incident to an edge going to another layer.
    This means that all the linear forms being chosen for any $\textcolor{symbolicE}{e_1}, \ldots, \textcolor{symbolicE}{e_{|V|}}$ can only be the first three linear forms of $p$ since the remaining two are already used for the $\textcolor{symbolicC}{c_i}$ of which two appear in every column.

    Now assume $G$ admits a proper $3$-coloring with the colors $1,2,3$.
    We can now construct a placement of the linear forms onto the entries of $\hat{T}$ as follows:
    \begin{itemize}
        \item The entries $\textcolor{symbolicA}{a_{3i+j}}$ get assigned the linear form $\ell_j$ for every $i \in \{0, \ldots, r-1\}$ and $j \in \{1, 2, 3\}$.
        \item The entries $\textcolor{symbolicB}{b_1}$ and $\textcolor{symbolicB}{b_2}$ get assigned the linear forms $\ell_4$ and $\ell_5$ respectively.
        \item The entries $\textcolor{symbolicC}{c_{2i+j}}$ get assigned the linear form $\ell_{3+j}$ for every $i \in \{0, \ldots, 4r-1\}$ and $j \in \{1, 2\}$.
        \item The entries $\textcolor{symbolicD}{d_1}$ and $\textcolor{symbolicD}{d_2}$ get assigned the linear forms $\ell_1$ and $\ell_2$ respectively.
        \item The entries $\textcolor{symbolicE}{e_i}$ get assigned the linear form $\ell_j$ if vertex $i$ was colored with color $j$ in $G$.
    \end{itemize}
    It is now easy to check that in $T_1, T_2$ and $T_4$ no column contains any linear form twice.
    To see that the same holds for $T_3$ and $T_5$ note that the only way any column could contain the same linear form twice would be for two entries $\textcolor{symbolicE}{e_u}$ and $\textcolor{symbolicE}{e_v}$ to appear in the same column and be assigned the same linear form.
    That would mean that $u$ and $v$ got colored the same in $G$, but by our construction there is also an edge $\{u, v\} \in \Evert \cup \Ehor$, a contradition to $G$ being properly $3$-colored.
    Since no column contains a repeated linear form this summand is strictly positive, making the whole evaluation $f_{\hat{T}}(p)$ non-zero.

    Conversely assume that the evaluation of $f_{\hat{T}}(p)$ is non-zero.
    Thus there must be a non-zero summand, placing linear forms on each entry.
    As by the previous discussion there are only $3$ different linear forms being placed on all of the $\textcolor{symbolicE}{e_i}$, directly inducing a $3$-coloring of $G$.
    This $3$-coloring is proper since every column can never contain the same linear form twice and every edge of $G$ is represented by a column.

    \begin{figure}[tpb]
        \centering
        {\scriptsize
            \begin{align*}T_{1,1} &= {%
                    \def\la#1{\multicolumn{1}{|@{\hspace{.6ex}}c@{\hspace{.6ex}}|}{\raisebox{-.3ex}{\textcolor{symbolicA}{$#1$}}}}
                    \def\lb#1{\multicolumn{1}{|@{\hspace{.6ex}}c@{\hspace{.6ex}}|}{\raisebox{-.3ex}{\textcolor{symbolicB}{$#1$}}}}
                    \def\lc#1{\multicolumn{1}{|@{\hspace{.6ex}}c@{\hspace{.6ex}}|}{\raisebox{-.3ex}{\textcolor{symbolicC}{$#1$}}}}
                    \raisebox{-.6ex}{$\begin{array}[b]{*{16}c}\cline{1-16}
                            \la{a_1}&\la{a_1}&\la{a_1}&\la{a_1}&\la{a_1}&\la{a_1}&\la{a_1}&\la{a_1}&\la{a_1}&\la{a_1}&\la{a_1}&\la{a_1}&\la{a_1}&\la{a_1}&\la{a_1}&\la{a_1}\\\cline{1-16}
                            \la{a_2}&\la{a_2}&\la{a_2}&\la{a_2}&\la{a_2}&\la{a_2}&\la{a_2}&\la{a_2}&\la{a_2}&\la{a_2}&\la{a_2}&\la{a_2}&\la{a_2}&\la{a_2}&\la{a_2}&\la{a_2}\\\cline{1-16}
                            \la{a_3}&\la{a_3}&\la{a_3}&\la{a_3}&\la{a_3}&\la{a_3}&\la{a_3}&\la{a_3}&\la{a_3}&\la{a_3}&\la{a_3}&\la{a_3}&\la{a_3}&\la{a_3}&\la{a_3}&\la{a_3}\\\cline{1-16}
                            \lb{b_1}&\lb{b_1}&\lc{c_1}&\lc{c_1}&\lc{c_1}&\lc{c_1}&\lc{c_3}&\lc{c_3}&\lc{c_3}&\lc{c_3}&\lc{c_5}&\lc{c_5}&\lc{c_5}&\lc{c_5}&\lc{c_7}&\lc{c_7}\\\cline{1-16}
                            \lb{b_2}&\lb{b_2}&\lc{c_2}&\lc{c_2}&\lc{c_2}&\lc{c_2}&\lc{c_4}&\lc{c_4}&\lc{c_4}&\lc{c_4}&\lc{c_6}&\lc{c_6}&\lc{c_6}&\lc{c_6}&\lc{c_8}&\lc{c_8}\\\cline{1-16}
                    \end{array}$}
                }\\
                T_{1,i} &=
                {%
                    \def\la#1{\multicolumn{1}{|@{\hspace{.4ex}}c@{\hspace{.4ex}}|}{\raisebox{-.3ex}{\textcolor{symbolicA}{$#1$}}}}
                    \def\lb#1{\multicolumn{1}{|@{\hspace{.4ex}}c@{\hspace{.4ex}}|}{\raisebox{-.3ex}{\textcolor{symbolicB}{$#1$}}}}
                    \def\lc#1{\multicolumn{1}{|@{\hspace{.4ex}}c@{\hspace{.4ex}}|}{\raisebox{-.3ex}{\textcolor{symbolicC}{$#1$}}}}
                    \raisebox{-.6ex}{$\begin{array}[b]{*{16}c}\cline{1-16}
                            \la{a_{3i-2}}&\la{a_{3i-2}}&\la{a_{3i-2}}&\la{a_{3i-2}}&\la{a_{3i-2}}&\la{a_{3i-2}}&\la{a_{3i-2}}&\la{a_{3i-2}}&\la{a_{3i-2}}&\la{a_{3i-2}}&\la{a_{3i-2}}&\la{a_{3i-2}}&\la{a_{3i-2}}&\la{a_{3i-2}}&\la{a_{3i-2}}&\la{a_{3i-2}}\\\cline{1-16}
                            \la{a_{3i-1}}&\la{a_{3i-1}}&\la{a_{3i-1}}&\la{a_{3i-1}}&\la{a_{3i-1}}&\la{a_{3i-1}}&\la{a_{3i-1}}&\la{a_{3i-1}}&\la{a_{3i-1}}&\la{a_{3i-1}}&\la{a_{3i-1}}&\la{a_{3i-1}}&\la{a_{3i-1}}&\la{a_{3i-1}}&\la{a_{3i-1}}&\la{a_{3i-1}}\\\cline{1-16}
                            \la{a_{3i}}&\la{a_{3i}}&\la{a_{3i}}&\la{a_{3i}}&\la{a_{3i}}&\la{a_{3i}}&\la{a_{3i}}&\la{a_{3i}}&\la{a_{3i}}&\la{a_{3i}}&\la{a_{3i}}&\la{a_{3i}}&\la{a_{3i}}&\la{a_{3i}}&\la{a_{3i}}&\la{a_{3i}}\\\cline{1-16}
                            \lc{c_{8i-9}}&\lc{c_{8i-9}}&\lc{c_{8i-7}}&\lc{c_{8i-7}}&\lc{c_{8i-7}}&\lc{c_{8i-7}}&\lc{c_{8i-5}}&\lc{c_{8i-5}}&\lc{c_{8i-5}}&\lc{c_{8i-5}}&\lc{c_{8i-3}}&\lc{c_{8i-3}}&\lc{c_{8i-3}}&\lc{c_{8i-3}}&\lc{c_{8i-1}}&\lc{c_{8i-1}}\\\cline{1-16}
                            \lc{c_{8i-8}}&\lc{c_{8i-8}}&\lc{c_{8i-6}}&\lc{c_{8i-6}}&\lc{c_{8i-6}}&\lc{c_{8i-6}}&\lc{c_{8i-4}}&\lc{c_{8i-4}}&\lc{c_{8i-4}}&\lc{c_{8i-4}}&\lc{c_{8i-2}}&\lc{c_{8i-2}}&\lc{c_{8i-2}}&\lc{c_{8i-2}}&\lc{c_{8i}}&\lc{c_{8i}}\\\cline{1-16}
                    \end{array}$}
                }
        \end{align*}}
        \caption{The Young tableaux $T_{1,1}$ and $T_{1,i}$ from the proof of Theorem~\ref{thm:semistdhardness}}
        \label{fig:semitabconstr1}
    \end{figure}

    \begin{figure}[tpb]
        \centering
        {\small
            \begin{align*}T_2 &=
                {%
                    \def\lb#1{\multicolumn{1}{|@{\hspace{.6ex}}c@{\hspace{.6ex}}|}{\raisebox{-.3ex}{\textcolor{symbolicB}{$#1$}}}}
                    \def\ld#1{\multicolumn{1}{|@{\hspace{.6ex}}c@{\hspace{.6ex}}|}{\raisebox{-.3ex}{\textcolor{symbolicD}{$#1$}}}}
                    \raisebox{-.6ex}{$\begin{array}[b]{*{14}c}\cline{1-14}
                            \lb{b_1}&\lb{b_1}&\lb{b_1}&\lb{b_1}&\lb{b_1}&\lb{b_1}&\lb{b_1}&\lb{b_1}&\lb{b_1}&\lb{b_1}&\lb{b_1}&\lb{b_1}&\lb{b_1}&\lb{b_1}\\\cline{1-14}
                            \lb{b_2}&\lb{b_2}&\lb{b_2}&\lb{b_2}&\lb{b_2}&\lb{b_2}&\lb{b_2}&\lb{b_2}&\lb{b_2}&\lb{b_2}&\lb{b_2}&\lb{b_2}&\lb{b_2}&\lb{b_2}\\\cline{1-14}
                            \ld{d_1}&\ld{d_1}&\ld{d_1}&\ld{d_1}&\ld{d_1}&\ld{d_1}&\ld{d_1}&\ld{d_1}&\ld{d_1}&\ld{d_1}&\ld{d_1}&\ld{d_1}&\ld{d_1}&\ld{d_1}\\\cline{1-14}
                            \ld{d_2}&\ld{d_2}&\ld{d_2}&\ld{d_2}&\ld{d_2}&\ld{d_2}&\ld{d_2}&\ld{d_2}&\ld{d_2}&\ld{d_2}&\ld{d_2}&\ld{d_2}&\ld{d_2}&\ld{d_2}\\\cline{1-14}
                    \end{array}$}
                } & T_4 &=
                {%
                    \def\ld#1{\multicolumn{1}{|@{\hspace{.6ex}}c@{\hspace{.6ex}}|}{\raisebox{-.3ex}{\textcolor{symbolicD}{$#1$}}}}
                    \raisebox{-.6ex}{$\begin{array}[b]{*{2}c}\cline{1-2}
                            \ld{d_1}&\ld{d_1}\\\cline{1-2}
                            \ld{d_2}&\ld{d_2}\\\cline{1-2}
                    \end{array}$}
                }
        \end{align*}}
        \caption{The Young tableaux $T_2$ and $T_4$ from the proof of Theorem~\ref{thm:semistdhardness}}
        \label{fig:semitabconstr2}
    \end{figure}

    To now show that this evaluation is not possible in time $2^{o\left(\sqrt{n}\right)}$ unless ETH fails, notice that if $G$ has $|V|$ vertices, then $\hat{T}$ has $n = O(|V|)$ many different entries.
    So any evaluation in time $2^{o\left(\sqrt{n}\right)}$ would decide whether $G$ admits a proper $3$-coloring in time $2^{o\left(\sqrt{|V|}\right)}$, which is a contradiction to Lemma~\ref{lem:gridlikeethhardness}\footnote{or Lemma~\ref{lem:gridlikehardnessregular} for a weaker lower bound of $2^{o\left(\sqrt[4]{n}\right)}$} unless ETH fails.
\end{proof}

\begin{remark}
    All these hardness results also hold if the highest weight vectors are given as a Young tableau $T$ with content $(nd) \times 1$ opposed to $\hat{T}$ with content $n \times d$ by replacing the entries containing $1$ in $\hat{T}$ by $1, \ldots, d$ and $2$ by $d+1, \ldots, 2d$ and so on in a left-to-right, top-to-bottom fashion.
    This corresponds to undoing the projection of $\otimes^n\Sym^d V$ onto $\Sym^n\Sym^d V$.
    In the cases when $\hat{T}$ is semistandard $T$ is standard.
\end{remark}

\bibliography{lit}
\end{document}